\providecommand{\tabularnewline}{\\}
\numberwithin{equation}{section}
\numberwithin{figure}{section}
\theoremstyle{plain}
\newtheorem{thm}{\protect\theoremname}
  \theoremstyle{definition}
  \newtheorem{defn}[thm]{\protect\definitionname}
  \theoremstyle{plain}
  \newtheorem{lem}[thm]{\protect\lemmaname}
  \theoremstyle{remark}
  \newtheorem{rem}[thm]{\protect\remarkname}
  \theoremstyle{plain}
  \newtheorem{cor}[thm]{\protect\corollaryname}
\DeclareMathOperator*{\argmaxTex}{arg\,max}
\DeclareMathOperator*{\argminTex}{arg\,min}
\DeclareMathOperator*{\signTex}{sign}
\DeclareMathOperator*{\rankTex}{rank}
\DeclareMathOperator*{\diagTex}{diag}
\DeclareMathOperator*{\imTex}{im}
\renewcommand{\varepsilon}{\epsilon}
  \providecommand{\corollaryname}{Corollary}
  \providecommand{\definitionname}{Definition}
  \providecommand{\lemmaname}{Lemma}
  \providecommand{\remarkname}{Remark}
\providecommand{\theoremname}{Theorem}
\begin{document}

\global\long\def\R{\mathbb{R}}
 \global\long\def\Rn{\mathbb{R}^{n}}
 \global\long\def\Rm{\mathbb{R}^{m}}
 \global\long\def\Rmn{\mathbb{R}^{m \times n}}
 \global\long\def\Rnm{\mathbb{R}^{n \times m}}
 \global\long\def\Rmm{\mathbb{R}^{m \times m}}
 \global\long\def\Rnn{\mathbb{R}^{n \times n}}
 \global\long\def\Z{\mathbb{Z}}
 \global\long\def\rPos{\R_{> 0}}
 \global\long\def\dom{\mathrm{dom}}
\global\long\def\dInterior{\Omega^{0}}
\global\long\def\dFull{\{\dInterior\times\rPos^{m}\}}
\global\long\def\dWeight{\rPos^{m}}


\global\long\def\ellOne{\ell_{1}}
 \global\long\def\ellTwo{\ell_{2}}
 \global\long\def\ellInf{\ell_{\infty}}
 \global\long\def\ellP{\ell_{p}}

\global\long\def\otilde{\tilde{O}}

\global\long\def\argmax{\argmaxTex}

\global\long\def\argmin{\argminTex}

\global\long\def\sign{\signTex}

\global\long\def\rank{\rankTex}

\global\long\def\diag{\diagTex}

\global\long\def\im{\imTex}

\global\long\def\enspace{\quad}

\global\long\def\boldVar#1{\mathbf{#1}}

\global\long\def\mvar#1{\boldVar{#1}}

\global\long\def\vvar#1{\vec{#1}}



\global\long\def\defeq{\stackrel{\mathrm{{\scriptscriptstyle def}}}{=}}

\global\long\def\diag{\mathrm{{diag}}}

\global\long\def\mDiag{\mvar{diag}}
 \global\long\def\ceil#1{\left\lceil #1 \right\rceil }

\global\long\def\E{\mathbb{E}}
 \global\long\def\abs#1{\left|#1\right|}

\global\long\def\gradient{\nabla}
\global\long\def\grad{\gradient}
 \global\long\def\hess{\gradient^{2}}
 \global\long\def\jacobian{\mvar J}
 \global\long\def\gradIvec#1{\vvar{f_{#1}}}
 \global\long\def\gradIval#1{f_{#1}}


\global\long\def\onesVec{\vec{1}}
 \global\long\def\indicVec#1{\onesVec_{#1}}
\global\long\def\indic{1}


\global\long\def\specGeq{\succeq}
 \global\long\def\specLeq{\preceq}
 \global\long\def\specGt{\succ}
 \global\long\def\specLt{\prec}

\global\long\def\va{\vvar a}
 \global\long\def\vb{\vvar b}
 \global\long\def\vc{\vvar c}
 \global\long\def\vd{\vvar d}
 \global\long\def\ve{\vvar e}
 \global\long\def\vf{\vvar f}
 \global\long\def\vg{\vvar g}
 \global\long\def\vh{\vvar h}
 \global\long\def\vl{\vvar l}
 \global\long\def\vm{\vvar m}
 \global\long\def\vn{\vvar n}
 \global\long\def\vo{\vvar o}
 \global\long\def\vp{\vvar p}
 \global\long\def\vs{\vvar s}
 \global\long\def\vu{\vvar u}
 \global\long\def\vv{\vvar v}
\global\long\def\vw{\vvar w}
 \global\long\def\vx{\vvar x}
 \global\long\def\vy{\vvar y}
 \global\long\def\vz{\vvar z}
 \global\long\def\vxi{\vvar{\xi}}
 \global\long\def\valpha{\vvar{\alpha}}
 \global\long\def\veta{\vvar{\eta}}
 \global\long\def\vphi{\vvar{\phi}}
\global\long\def\vpsi{\vvar{\psi}}
 \global\long\def\vsigma{\vvar{\sigma}}
 \global\long\def\vgamma{\vvar{\gamma}}
 \global\long\def\vphi{\vvar{\phi}}
\global\long\def\vdelta{\vvar{\delta}}
\global\long\def\vDelta{\vvar{\Delta}}
\global\long\def\vzero{\vvar 0}
 \global\long\def\vones{\vvar 1}

\global\long\def\ma{\mvar A}
 \global\long\def\mb{\mvar B}
 \global\long\def\mc{\mvar C}
 \global\long\def\md{\mvar D}
\global\long\def\mE{\mvar E}
 \global\long\def\mf{\mvar F}
 \global\long\def\mg{\mvar G}
 \global\long\def\mh{\mvar H}
 \global\long\def\mj{\mvar J}
 \global\long\def\mk{\mvar K}
 \global\long\def\mm{\mvar M}
 \global\long\def\mn{\mvar N}
 \global\long\def\mq{\mvar Q}
 \global\long\def\mr{\mvar R}
 \global\long\def\ms{\mvar S}
 \global\long\def\mt{\mvar T}
 \global\long\def\mU{\mvar U}
 \global\long\def\mv{\mvar V}
 \global\long\def\mx{\mvar X}
 \global\long\def\my{\mvar Y}
 \global\long\def\mz{\mvar Z}
 \global\long\def\mSigma{\mvar{\Sigma}}
 \global\long\def\mLambda{\mvar{\Lambda}}
\global\long\def\mPhi{\mvar{\Phi}}
\global\long\def\mpi{\mvar{\Pi}}
 \global\long\def\mZero{\mvar 0}
 \global\long\def\iMatrix{\mvar I}
\global\long\def\mDelta{\mvar{\Delta}}

\global\long\def\oracle{\mathcal{O}}

\global\long\def\runtime{\mathcal{T}}

\global\long\def\dWeights{\rPos^{m}}


\global\long\def\shurProd{\circ}
 \global\long\def\shurSquared#1{{#1}^{(2)}}

\global\long\def\weight{w}
 \global\long\def\vWeight{\vvar{\weight}}
 \global\long\def\mWeight{\mvar W}

\global\long\def\mProj{\mvar P}

\global\long\def\vLever{\vsigma}
 \global\long\def\mLever{\mSigma}
 \global\long\def\mLapProj{\mvar{\Lambda}}

\global\long\def\penalizedObjective{f_{t}}
 \global\long\def\penalizedObjectiveWeight{\hat{f}}

\global\long\def\fvWeight{\vg}
 \global\long\def\fmWeight{\mg}

\global\long\def\gradW{\grad_{\vWeight}}
 \global\long\def\gradX{\grad_{\vx}}
 \global\long\def\hessWW{\hess_{\vWeight\vWeight}}
 \global\long\def\hessWX{\hess_{\vWeight\vx}}
 \global\long\def\hessXW{\hess_{\vx\vWeight}}
 \global\long\def\hessXX{\hess_{\vx\vx}}

\global\long\def\vNewtonStep{\vh}


\global\long\def\norm#1{\big\|#1\big\|}
 \global\long\def\normFull#1{\left\Vert #1\right\Vert }
 \global\long\def\normA#1{\norm{#1}_{\ma}}
 \global\long\def\normFullInf#1{\normFull{#1}_{\infty}}
 \global\long\def\normInf#1{\norm{#1}_{\infty}}
 \global\long\def\normOne#1{\norm{#1}_{1}}
 \global\long\def\normTwo#1{\norm{#1}_{2}}
 \global\long\def\normLeverage#1{\norm{#1}_{\mSigma}}
 \global\long\def\normWeight#1{\norm{#1}_{\fmWeight}}

\global\long\def\cWeightSize{c_{1}}
 \global\long\def\cWeightStab{c_{\gamma}}
 \global\long\def\cWeightCons{c_{\delta}}

\global\long\def\TODO#1{{\color{red}TODO:\text{#1}}}

\global\long\def\mixedNorm#1#2{\normFull{#1}_{#2+\infty}}
\global\long\def\CNorm{C_{\mathrm{norm}}}
\global\long\def\Pxw{\mProj_{\vx,\vWeight}}
\global\long\def\vq{\vec{q}}
\global\long\def\cnorm{C_{\text{norm}}}

\global\long\def\next#1{#1^{\text{(new)}}}

\global\long\def\trInit{\vx^{(0)}}
 \global\long\def\trCurr{\vx^{(k)}}
 \global\long\def\trNext{\vx^{(k + 1)}}
 \global\long\def\trAdve{\vy^{(k)}}
 \global\long\def\trAfterAdve{\vy}
 \global\long\def\trMeas{\vz^{(k)}}
 \global\long\def\trAfterMeas{\vz}
 \global\long\def\trGradCurr{\grad\Phi_{\alpha}(\trCurr)}
 \global\long\def\trGradAdve{\grad\Phi_{\alpha}(\trAdve)}
 \global\long\def\trGradMeas{\grad\Phi_{\alpha}(\trMeas)}
 \global\long\def\trGradAfterAdve{\grad\Phi_{\alpha}(\trAfterAdve)}
 \global\long\def\trGradAfterMeas{\grad\Phi_{\alpha}(\trAfterMeas)}
 \global\long\def\trSetCurr{U^{(k)}}
\global\long\def\vWeightError{\vvar{\Psi}}
\global\long\def\code#1{\texttt{#1}}

\global\long\def\nnz{\mathrm{nnz}}
\global\long\def\solver{\mathrm{\mathtt{S}}}
\global\long\def\time{\mathrm{\mathcal{T}}}
\global\long\def\sampleConstant{c_{\mathrm{sample}}}

\newcommand{\bracket}[1]{[#1]}

\title{Efficient Inverse Maintenance\\
and Faster Algorithms for Linear Programming}

\author{Yin Tat Lee\\
MIT\\
yintat@mit.edu\and  Aaron Sidford\\
MIT\\
sidford@mit.edu}

\date{}
\maketitle
\begin{abstract}
In this paper, we consider the following \emph{inverse maintenance
problem:} given $\ma\in\R^{n\times d}$ and a number of rounds $r$,
at round $k$, we receive a $n\times n$ diagonal matrix $\md^{(k)}$
and we wish to maintain an efficient linear system solver for $\ma^{T}\md^{(k)}\ma$
under the assumption $\md^{(k)}$ does not change too rapidly. This
inverse maintenance problem is the computational bottleneck in solving
multiple optimization problems. We show how to solve this problem
with $\tilde{O}\left(\nnz(\ma)+d^{\omega}\right)$ preprocessing time
and amortized $\otilde(\nnz(\ma)+d^{2})$ time per round, improving
upon previous running times.

Consequently, we obtain the fastest known running times for solving
multiple problems including, linear programming and computing a rounding
of a polytope. In particular given a feasible point in a linear program
with $n$ variables, $d$ constraints, and constraint matrix $\ma\in\R^{d\times n}$,
we show how to solve the linear program in time $\otilde((\nnz(\ma)+d^{2})\sqrt{d}\log(\epsilon^{-1}))$.
We achieve our results through a novel combination of classic numerical
techniques of low rank update, preconditioning, and fast matrix multiplication
as well as recent work on subspace embeddings and spectral sparsification
that we hope will be of independent interest.
\end{abstract}

\section{Introduction}

Solving a sequence of linear systems is the computational bottleneck
in many state-of-the-art optimization algorithms, including interior
point methods for linear programming \cite{karmarkar1984new,Nesterov1994,renegar1988polynomial,leeS14},
the Dikin walk for sampling a point in a polytope \cite{kannan2012random},
and multiplicative weight algorithms for grouped least squares problem
\cite{chin2013runtime}, etc. In full generality, any particular iteration
of these algorithms could require solving an arbitrary positive definite
(PD) linear system. However, the PD matrices involved in these algorithms
do not change too much between iterations and therefore the average
cost per iteration can possibly be improved by maintaining an approximate
inverse for the matrices involved. 

This insight has been leveraged extensively in the field of interior
point methods for linear programming. Combining this insight with
classic numerical machinery including preconditioning, low ranks update,
and fast matrix multiplication has lead to multiple non-trivial improvements
to the state-of-the art running time for linear programming \cite{nesterov1991acceleration,Nesterov1994,vaidya1989speeding,leeS14,karmarkar1984new}.
Prior to our previous work \cite{leeS14}, the fastest algorithm for
solving a general linear program with $d$ variables, $n$ constraints,
and constraint matrix $\ma\in\R^{n\times d}$, i.e. solving $\min_{\ma\vx\geq\vb}\vc^{T}\vx$,
depended intricately on the precise ratio of $d$, $n$, and $\nnz(\ma)$
the number of non-zero entries in $\ma$ (See In Figure~\ref{fig:run_time}).
While, these running times were recently improved by our work in \cite{leeS14},
the running time we achieved was still a complicated bound of $\otilde(\sqrt{n\beta}(d^{2}+nd^{\omega-1}r^{-1}+\beta^{-\omega}r^{2\omega}+\beta^{-(\omega-1)}dr^{\omega}))$
and $\tilde{O}(\sqrt{d}(\nnz(\ma)+d^{\omega}))$%
\footnote{In this paper we use $\otilde$ to hide factors polylogarithmic in
$d$, $n$ and the width $U$. See Sec~\ref{sec:applications} for
the definition of $U$.%
} to compute an $\epsilon$-approximate solution where $\beta\in[\frac{d}{n},1]$
and $r\geq1$ are free parameters to be tuned and $\omega<2.3729$
is the matrix multiplication constant \cite{williams2012matrixmult,gall2014powers}.

\begin{figure}
\begin{centering}
\begin{tikzpicture}[thick]   \draw[->] (-0.3,0) -- (12,0) coordinate[label = {below:$n$}] (xmax);   \draw[->] (0,-0.3) -- (0,5) coordinate[label = {right:$z \defeq \mathrm{nnz}(\mvar{A})$}] (ymax);   \draw       (0,2) -- (12,5.2) node[pos=0.8, above left] {$z=nd$};   \draw       (2.445,0) -- (2.445,2.49) node[pos=0, below] {$\approx d^{1.32}$};   \draw       (7.5,3.62) -- (2.445,2.49);   \draw       (7.5,2.75) -- (3.585,2.75);   \draw       (7.5,0) -- (7.5,4) node[pos=0, below] {$d^{2}$};   \draw[->] (2.7,3.5) -- (1.2225,0.8) node[pos=0,above] {$\sqrt{n} \ z+\sqrt{n}d^2+n^{1.34}d^{1.15}$};   \draw       (6.5,3) node[align=center] {$\sqrt{n} (z+d^{2.38})$};   \draw       (5,1.1) node[align=center] {$\sqrt{n} \ z+n d^{1.38}+n^{0.69}d^2$};   \draw       (9.75,2) node[align=center] {$d (z+d^{2.38})$};  \pgflowlevelsynccm \end{tikzpicture}
\par\end{centering}

\protect\caption{\label{fig:run_time} The figure shows the fastest algorithms for
solving a linear program, $\min_{\protect\ma\protect\vx\geq\protect\vb}\protect\vc^{T}\protect\vx$,
before \cite{leeS14}. The horizontal axis is the number of constraints
$n$ written as a function of $d$, the vertical access is the number
of nonzero entries in $\protect\ma$ denoted by $z$. Each region
of the diagram is the previous best running time for obtaining one
bit of precision in solving a linear program with the given parameters.
For more information on how these times were computed, see Appendix~\ref{sec:figure_remarks}.}
\end{figure}

In this paper we further improve upon these results. We cast this
computational bottleneck of solving a sequence of slowly changing
linear systems as a self-contained problem we call the \emph{inverse
maintenance problem }and improve upon the previous best running times
for solving this problem. This yields an improved running time of
$\otilde((\nnz(\ma)+d^{2})\sqrt{d}\log(\epsilon^{-1}$)) for solving
a linear program and improves upon the running time of various problems
including multicommodity flow and computing rounding of an ellipse.
We achieve our result by a novel application of recent numerical machinery
such as subspace embeddings and spectral sparsification to classic
techniques for solving this problem which we hope will be of independent
interest.

\subsection{The Inverse Maintenance Problem \label{sub:inverse_maintenance}}

The computational bottleneck we address in this paper is as follows.
We are given a fixed matrix $\ma\in\R^{n\times d}$ and an number
of rounds $r$. In each round $k\in[r]$ we are given a non-negative
vector $\vd^{(k)}\in\R_{\geq0}^{n}$ and we wish to maintain access
to an approximate inverse for the matrix $(\ma^{T}\md^{(k)}\ma)^{-1}$
where $\md^{(k)}\in\R^{n\times n}$ is a diagonal matrix with $\md_{ii}^{(k)}=d_{i}^{(k)}$.
We call this problem the \emph{inverse maintenance problem} and wish
to keep the cost of both maintaining the approximate inverse and applying
it as low as possible under certain stability assumptions on $\md^{(k)}$. 

For our applications we do not require that the approximate inverse
to be stored explicitly. Rather, we simply need an algorithm to solve
the linear system $\ma^{T}\md^{(k)}\ma\vx=\vb$ efficiently for any
input $\vb\in\R^{d}$. We formally define the requirements of this
\emph{solver} and define the\emph{ inverse maintenance problem} below.%
\footnote{In Appendix~\ref{sec:iteration_insertion_removal} we show that the
problem of maintaining a linear solver and maintaining an approximate
inverse are nearly equivalent.%
}
\begin{defn}[Linear System Solver]
 An algorithm $\solver$ is a $\mathcal{T}$-time solver of a PD
matrix $\mm\in\R^{d\times d}$ if for all $\vb\in\R^{d}$ and $\epsilon\in(0,1/2]$,
the algorithm outputs a vector $\solver(\vb,\epsilon)\in\R^{d}$ in
time $O(\time\log(\epsilon^{-1}))$ such that with high probability
in $n$, $\norm{\mathcal{\solver}(\vb,\epsilon)-\mm^{-1}\vb}_{\mm}^{2}\leq\epsilon\norm{\mm^{-1}\vb}_{\mm}^{2}$.
We call the algorithm $\solver$ linear if $\solver(\vb,\epsilon)=\mq_{\epsilon}\vb$
for some $\mq_{\varepsilon}\in\R^{d\times d}$ that depends only on
$\mm$ and $\epsilon$.
\begin{defn}[Inverse Maintenance Problem]
\label{def:problem} We are given a matrix $\ma\in\R^{n\times d}$
and a number of rounds $r>0$. In each round $k\in[r]$ we receive
$\vd^{(k)}\in\R_{>0}^{n}$ and wish to find a $\time$-time solver
$\solver^{(k)}$ for PD $\ma^{T}\md^{(k)}\ma$ such that both $\time$
and the cost of constructing $\solver^{(k)}$ is small. (See Algorithm~\ref{alg:maintenance}.)
\end{defn}
\end{defn}
\begin{algorithm2e}
\caption{Inverse Maintenance Problem}

\label{alg:maintenance}

\SetAlgoLined

\textbf{Input}: Full rank matrix $\ma\in\R^{n\times d}$, initial
scaling $\vd^{(0)}\in\R_{\geq0}^{n}$, the number of rounds $r>0$.\\
\,

\For{Each round $k\in[r]$}{

\textbf{Input}: Current point $\vd^{(k)}\in\R_{>0}^{n}$

\textbf{Output}: Solver $\mathcal{\solver}^{(k)}$ for $\ma^{T}\md^{(k)}\ma$
where $\md^{(k)}\in\R^{n\times n}$ is diagonal with $\md_{ii}^{(k)}=d_{i}^{(k)}$

}

\end{algorithm2e}

The inverse maintenance problem directly encompasses a computational
bottleneck in many interior point methods for linear programming \cite{lsInteriorPoint,lsMaxflow,renegar1988polynomial}
as well as similar methods for sampling a point in a polytope and
computing approximate John Ellipsoids. (See Section~\ref{sec:applications}
and \ref{sub:app:sample}.) Without further assumptions, we would
be forced to pay the cost of solving a general system, i.e. $\otilde(\nnz(\ma)+d^{\omega})$
\cite{nelson2012osnap,li2012iterative,Cohen2014} where $\omega<2.3729$
is the matrix multiplication constant \cite{williams2012matrixmult,gall2014powers}.
However, in these algorithms the $\md^{(k)}$ cannot change arbitrarily
and we develop algorithms to exploit this property. 

We consider two different assumptions on how $\md^{(k)}$ can change.
The first assumption, we call the $\ellTwo$ stability assumption,
and is met in most short-step interior point methods and many of our
applications. Under this assumption, the multiplicative change from
$\vd^{(k)}$ to $\vd^{(k+1)}$ is bounded in $\ellTwo$ norm, indicating
that multiplicative changes to coordinates happen infrequently on
average. The second assumption, we call the $\sigma$ stability assumption,
is weaker and holds both for these algorithms as well as a new interior
point method we provided recently in \cite{leeS14} to improve the
running time for linear programming. Under this assumption, the multiplicative
change from $\vd^{(k)}$ to $\vd^{(k+1)}$ is bounded in a norm induced
by the leverage scores, $\vsigma_{\ma}(\vd^{(k)})$, a common quantity
used to measure the importance of rows of a matrix (See Section~\ref{sub:leverage_scores}).
Here, multiplicative changes to these important rows happen infrequently
on average (See Section~\ref{sec:sigma_solution}). 
\begin{defn}[$\ellTwo$ Stability Assumption]
\label{ass:stability_l2} The inverse maintenance problem satisfies
the \emph{$\ellTwo$ stability assumption }if for all $k\in[r]$ we
have $\norm{\log(\vd^{(k)})-\log(\vd^{(k-1)})}_{2}\leq0.1$ and $\beta^{-1}\ma^{T}\md^{(0)}\ma\preceq\ma^{T}\md^{(k)}\ma\preceq\beta\ma^{T}\md^{(0)}\ma$
for $\beta=\poly(n)$.
\end{defn}

\begin{defn}[$\sigma$ Stability Assumption]
\label{ass:stability_sigma} We say that the inverse maintenance
problem satisfies the $\sigma$ stability assumption if for each $k\in[r]$
we have $\norm{\log(\vd^{(k)})-\log(\vd^{(k-1)})}_{\vLever_{\ma}(\vd^{(k)})}\leq0.1$,
$\norm{\log(\vd^{(k)})-\log(\vd^{(k-1)})}_{\infty}\leq0.1$, and $\beta^{-1}\ma^{T}\md^{(0)}\ma\preceq\ma^{T}\md^{(k)}\ma\preceq\beta\ma^{T}\md^{(0)}\ma$
for $\beta=\poly(n).$
\end{defn}
Note that many inverse maintenance algorithms do not require the condition
$\beta^{-1}\ma^{T}\md^{(0)}\ma\preceq\ma^{T}\md^{(k)}\ma\preceq\beta\ma^{T}\md^{(0)}\ma$;
However, this is a mild technical assumption which holds for all applications
we are interested in and it allows our algorithms to achieve further
running time improvements. 

Also note that the constant $0.1$ in these definitions is arbitrary.
If $\vd^{(k)}$ changes more quickly than this, then in many cases
we can simply add intermediate $\vd^{(k)}$ to make the conditions
hold and only changes the number of rounds by a constant. 

Finally, note that the $\sigma$ stability assumption is strictly
weaker the $\ellTwo$ stability assumption as $\sigma_{i}\leq1$ (See
Section~\ref{sub:leverage_scores}). We use $\ell^{2}$ assumption
mainly for comparison to previous works and as a warm up case for
our paper. In this paper, our central goal is to provide efficient
algorithms for solving the inverse maintenance under the weaker $\sigma$
stability assumption.

\subsection{Previous Work}

Work on the inverse maintenance problem dates back to the dawn of
interior point methods, a broad class of iterative methods for optimization.
In 1984, in the first proof of an interior point method solving a
linear program in polynomial time, Karmarkar \cite{karmarkar1984new}
observed that his algorithm needed to solve the inverse maintenance
problem under the $\ellTwo$ stability guarantee. Under this assumption
if one entry of $\vd^{(k)}$ changes by a constant then the rest barely
change at all. Therefore the updates to $\ma^{T}\md^{(k)}\ma$ are
essentially \emph{low rank}, i.e. on average they are well approximated
by the addition of a rank 1 matrix. He noted that it sufficed to maintain
an approximation $(\ma^{T}\md^{(k)}\ma)^{-1}$ and by using explicit
formulas for rank 1 updates he achieved an average cost of $\tilde{O}(n^{(\omega-2)/2}d^{2})$
for the inverse maintenance problem. This improved upon $\tilde{O}(nd^{\omega-1})$,
the best running time for solving the necessary linear system known
at that time.

\begin{figure}
\begin{centering}
\begin{tabular}{|c|l|c|}
\hline 
Year  & Author  & Amortized Cost Per Iteration\tabularnewline
\hline 
\hline 
\multicolumn{2}{|c|}{Best known linear system solver \cite{nelson2012osnap,li2012iterative,Cohen2014}} & $\tilde{O}(nnz(\ma)+d^{2.3729})$ \tabularnewline
\hline 
1984  & Karmarkar \cite{karmarkar1984new}  & $\tilde{O}(nnz(\ma)+n^{0.5}d^{1.3729}+n^{0.1865}d^{2})$\tabularnewline
\hline 
\multirow{1}{*}{1989} & \multirow{1}{*}{Nesterov and Nemirovskii \cite[Thm 8.4.1]{nesterov1989self,nesterov1991acceleration}} & $\tilde{O}(n^{0.5}d^{1.3729}+n^{0.1718}d^{1.9216}+n^{1.8987})$\tabularnewline
\hline 
1989 & Vaidya \cite{vaidya1989speeding}  & $O(nnz(\ma)+d^{2}+n^{0.8334}d^{1.1441})$\tabularnewline
\hline 
2014 & Lee and Sidford \cite{lsInteriorPoint} & $\tilde{O}(nnz(\ma)+d^{2}+n^{0.8260}d^{1.1340})$\tabularnewline
\hline 
2015 & This paper & $\tilde{O}(nnz(\ma)+d^{2})$\tabularnewline
\hline 
\end{tabular}
\par\end{centering}

\protect\caption{\label{fig:previous_runtime} Algorithms for solving the inverse maintenance
problem under $\ell^{2}$ stability guarantee. To simplify comparison
we let the \emph{amortized cost per iteration }denote the worst of
the average per iteration cost in maintaining the $\protect\time$-time
solver and $\protect\time$.}
\end{figure}

Nesterov and Nemirovskii \cite{nesterov1989self} built on the ideas
of Karmarkar. They showed that how to maintain an even cruder approximation
to $(\ma^{T}\md\ma)^{-1}$ and use it as a preconditioner for the
conjugate gradient method to solve the necessary linear systems. By
using the same rank 1 update formulas as Karmarkar and analyzing the
quality of these preconditioned systems they were able to improve
the running time for solving the inverse maintenance problem.

Vaidya \cite{vaidya1989speeding} noticed that instead of maintaining
an approximation to $(\ma^{T}\md\ma)^{-1}$ explicitly one can maintain
the inverse implicitly and group the updates together into blocks.
Using more general low-rank update formulas and fast matrix multiplication
he achieved an improved cost of $O(\nnz(\ma)+d^{2}+(nd^{(\omega-1)})^{5/6})$.
His analysis was tailored to the case where the matrix $\ma$ is dense;
his running time is $O(nd)$ which is essentially optimal when $\nnz(\ma)=nd$.
Focusing on the sparse regime, we showed that these terms that were
``lower order'' in his analysis can be improved by a more careful
application of matrix multiplication \cite{leeS14}. 

Note that, for a broad setting of parameters the previous fastest
algorithm for solving the inverse maintenance problem was achieved
by solving each linear system from scratch using fast regression algorithms
\cite{nelson2012osnap,li2012iterative,Cohen2014} These algorithms
all provide an $\tilde{O}(\nnz(\ma)+d^{\omega})$-time solver for
$\ma^{T}\md^{(k)}\ma$ directly by using various forms of dimension
reduction. The algorithm in \cite{nelson2012osnap} directly projected
the matrix to a smaller matrix and the algorithms in \cite{li2012iterative,Cohen2014}
each provide iterative algorithms for sampling rows from the matrix
to produce a smaller matrix.

Also note that we are unaware of a previous algorithm working directly
with the $\sigma$ stability assumption. Under this assumption it
is possible that many $\md^{(k)}$ change by a multiplicative constant
in a single round and therefore updates to $\ma^{T}\md^{(k)}\ma$
are no longer ``low rank'' on average, making low rank update techniques
difficult to apply cheaply. This is not just an artifact of an overly
weak assumption; our linear programming algorithm in \cite{leeS14}
had this same complication. In \cite{leeS14} rather than reasoning
about the $\sigma$ stability assumption we simply showed how to trade-off
the $\ellTwo$ stability of the linear systems with the convergence
of our algorithm to achieve the previous best running times.

\subsection{Our Approach}

\label{sub:approach}

In this paper we show how to solve the inverse maintenance under both
the $\ellTwo$ stability assumption and the weaker $\sigma$ stability
assumption such that the amortized maintenance cost per iteration
$\otilde(\nnz(\ma)+d^{2})$ and the solver runs in time $\otilde((\nnz(\ma)+d^{2})\log(\varepsilon^{-1}))$. 

To achieve our results, we show how to use low rank updates to maintain
a spectral sparsifier for $\ma^{T}\md^{(k)}\ma$, that is a weighted
subset of $\otilde(d)$ rows of $\ma$ that well approximate $\ma^{T}\md^{(k)}\ma$.
We use the well known fact that sampling the rows by leverage scores
(see Section~\ref{sub:leverage_scores}) provides precisely such
a guarantee and show that we can decrease both the number of updates
that need to be performed as well as the cost of those updates.

There are two difficulties that need to be overcome in this approach.
The first difficulty is achieving any running time that improves upon
both the low rank update techniques and the dimension reduction techniques;
even obtaining a running time of $\otilde(\nnz(\ma)+d^{\omega-c})$
for the current $\omega$ and $c>0$ for the inverse maintenance problem
under the $\ellTwo$ stability assumption was not known. Simply performing
rank 1 updates in each iteration is prohibitively expensive as the
cost of such an update is $O(d^{2})$ and there would be $\Omega(d^{c})$
such updates that need to be performed on average in each iteration
for some $c>0$. Furthermore, while Vaidya \cite{vaidya1989speeding}
overcame this problem by grouping the updates together and using fast
matrix multiplication, his approach needed to preprocess the rows
of $\ma$ by exactly computing $(\ma^{T}\md^{(0)}\ma)^{-1}\ma$. This
takes time $O(nd^{\omega-1})$ and is prohibitively expensive for
our purposes if $n$ is much larger than $d$.

To overcome this difficulty, we first compute an initial sparsifier
of $\ma^{T}\md^{(0)}\ma$ using only $\otilde(d)$ rows of $\ma$
in time $\otilde(\nnz(\ma)+d^{\omega})$ \cite{nelson2012osnap,li2012iterative,Cohen2014}.
Having performed this preprocessing, we treat low rank updates to
these original rows in the sparsifier and new rows separately. For
the original rows we can perform the preprocessing as in Vaidya \cite{vaidya1989speeding}
and the techniques in \cite{vaidya1989speeding,lsInteriorPoint} to
obtain the desired time bounds. For low rank updates to the new rows,
we show how to use subspace embeddings \cite{nelson2012osnap} to
approximate their contribution to sufficient accuracy without hurting
the asymptotic running times. In Section~\ref{sec:low_rank} we show
that this technique alone (even without use of sparsification), suffices
to mildly improve the running time of solving the inverse maintenance
assumption with the $\ellTwo$ stability assumption (but not to obtain
the fastest running times in this paper). 

The second difficulty is how to use the $\sigma$ stability assumption
to bound the number of updates to our sparsifier; under this weak
assumption where there can be many low rank changes to $\ma^{T}\md^{(k)}\ma$.
Here we simply show that the $\sigma$ stability assumption implies
that the leverage scores do not change too quickly in the norm induced
by leverage scores (See Section~\ref{sec:sigma_solution}). Consequently,
if we sample rows by leverage scores and only re-sample when either
$d_{i}^{(k)}$ or the leverage score changes sufficiently then we
can show that the expected number of low rank updates to our sparsifier
is small. This nearly yields the result, except that our algorithm
will use its own output to compute the approximate leverage scores
and the user of our algorithm might use the solvers to adversarially
decide the next $\vd^{(k)}$ and this would break our analysis. In
Section~\ref{sec:randomness} we show how to make our solvers indistinguishable
from a true solver plus fixed noise with high probability, alleviating
the issue.

Ultimately this yields amortized cost per iteration of $\otilde(\nnz(\ma)+d^{2})$
for the inverse maintenance problem. We remark that our algorithm
is one of few instances we can think of where an algorithm needs to
use both subspace embeddings \cite{nelson2012osnap} as well as iterative
linear system solving and sampling techniques \cite{Cohen2014,li2012iterative}
for different purposes to achieve a goal; for many applications they
are interchangeable.

In Section~\ref{sec:iteration_insertion_removal} we also show that
our approach can be generalized to accommodate for multiple new rows
to be added or removed from the matrix. We also show that our inverse
maintenance problem yields an $\otilde((\nnz(\ma)+d^{2})\sqrt{d}\log(\epsilon^{-1}))$
algorithm for solving a linear program improving upon our previous
work in \cite{leeS14} (See Section~\ref{sub:linear_programming}).
In Section~\ref{sec:applications}~we provide multiple applications
of our results to more specific problems including minimum cost flow,
multicommodity flow, non-linear regression, computing a rounding of
a polytope, and convex optimization and conclude with an open problem
regarding the sampling of a random point in a polytope (Section~\ref{sub:app:sample}).~

\section{Preliminaries\label{sec:preliminaries}}

\subsection{Notation\label{sub:notatin}}

Throughout this paper, we use vector notation, e.g $\vec{x}=(x_{1},\dots,x_{n})$,
to denote a vector and bold, e.g. $\ma$, to denote a matrix. We use
$\nnz(\vx)$ or $\nnz(\ma)$ to denote the number of nonzero entries
in a vector or a matrix respectively. Frequently, for $\vx\in\R^{d}$
we let $\mx\in\R^{d\times d}$ denote $\mDiag(\vx)$, the diagonal
matrix such that $\mx_{ii}=x_{i}$. For a symmetric matrix, $\mm$
and vector $\vx$ we let $\|\vx\|_{\mm}\defeq\sqrt{\vx^{T}\mm\vx}$.
For vectors $\vx$ and $\vy$ we let $\|\vy\|_{\vx}\defeq\sqrt{\sum_{i}x_{i}y_{i}^{2}}$.

In most applications, we use $d$ to denote the smaller dimension
of the problem, which can be either the number of variables or the
number of constraints. For example, the linear programs $\min_{\ma\vx=\vb,\vl\leq\vx\leq\vu}\vc^{T}\vx$
we solve in Theorem~\ref{thm:LPSolve} has $n$ variables and $d$
constraints (and use this formulation because it is more general than
$\min_{\ma\vx\geq\vb}\vc^{T}\vx$).

\subsection{Spectral Approximation\label{sub:approx}}

For symmetric matrices $\mn,\mm\in\R^{n\times n}$, we write $\mn\specLeq\mm$
to denote that $\vx^{T}\mn\vx\leq\vx^{T}\mm\vx$ for all $\vx\in\R^{n}$
and we define $\mn\specGeq\mm$, $\mn\prec\mm$ and $\mn\succ\mm$
analogously. We call a symmetric matrix $\mm$ positive definite (PD)
if $\mm\succ\mvar 0$. We use $\mm\approx_{\epsilon}\mn$ to denote
the condition that $e^{-\varepsilon}\mn\preceq\mm\preceq e^{\varepsilon}\mn$.

\subsection{Leverage Scores\label{sub:leverage_scores}}

Our algorithms make extensive use of \emph{leverage scores}, a common
measure of the importance of rows of a matrix. We denote the leverage
scores of a matrix $\ma\in\R^{n\times d}$ by $\vsigma\in\R^{n}$
and say the \emph{leverage score of row $i\in[n]$} is $\sigma_{i}\defeq[\ma\left(\ma^{T}\ma\right)^{-1}\ma^{T}]_{ii}$.
For $\ma\in\R^{n\times d}$, $\vd\in\rPos^{n}$, and $\md\defeq\mDiag(\vd)$
we use the shorthand $\vsigma_{\ma}(\vd)$ to denote the leverage
scores of the matrix $\md^{1/2}\ma$. We frequently use well known
facts regarding leverage scores, such as $\sigma_{i}\in[0,1]$ and
$\normOne{\vsigma}\leq d$. (See \cite{spielmanS08sparsRes,mahoney11survey,li2012iterative,Cohen2014}
for a more in-depth discussion of leverage scores, their properties,
and their many applications.)

We use the following two key results regarding leverage scores. The
first result states that one can sample $\tilde{O}(d)$ rows of $\ma$
according to their leverage score and obtain a spectral approximation
of $\ma^{T}\ma$ \cite{drineas2006subspace}. In Lemma \ref{lem:spectral_sparsification},
we use a variant of a random sampling lemma stated in \cite{Cohen2014}.
The second, Lemma~\ref{lem:computing_leverage_scores}, is a generalization
of a result in \cite{spielmanS08sparsRes} that has been proved in
various settings (see \cite{lsInteriorPoint} for example) that states
that given a solver for a $\ma^{T}\ma$ one can efficiently compute
approximate leverage scores for $\ma$. 
\begin{lem}[Leverage Score Sampling]
\label{lem:spectral_sparsification} Let $\ma\in\R^{n\times d}$
and let $\vu\in\R^{n}$ be an overestimate of $\vsigma$, the leverage
scores of $\ma$; i.e. $u_{i}\geq\sigma_{i}$ for all $i\in[n]$.
Set $p_{i}\defeq\min\left\{ 1,c_{s}\epsilon^{-2}u_{i}\log n\right\} $
for some absolute constant $c_{s}>0$ and $\epsilon\in(0,0.5]$ and
let $\mh\in\R^{n\times n}$ be a random diagonal matrix where independently
$\mh_{ii}=\frac{1}{p_{i}}$ with probability $p_{i}$ and is $0$
otherwise. With high probability in $n$, we have $\nnz(\mh)=O(\norm{\vu}_{1}\epsilon^{-2}\log n)$
and $\ma^{T}\mh\ma\approx_{\epsilon}\ma^{T}\ma$.
\end{lem}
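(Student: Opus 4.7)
The plan is to deduce both conclusions from matrix concentration. For the spectral approximation, I would first whiten the problem. Let $\mb \defeq \ma(\ma^{T}\ma)^{+/2}$ so that the rows $\vec{b}_{i} \defeq [(\ma^{T}\ma)^{+/2}]^{T}\va_{i}$ satisfy $\|\vec{b}_{i}\|_{2}^{2}=\sigma_{i}$ and $\sum_{i}\vec{b}_{i}\vec{b}_{i}^{T}=\mpi$, where $\mpi$ is the orthogonal projection onto $\im(\ma^{T})$. The conclusion $\ma^{T}\mh\ma\approx_{\varepsilon}\ma^{T}\ma$ is then equivalent to $\sum_{i}(\xi_{i}/p_{i})\vec{b}_{i}\vec{b}_{i}^{T}\approx_{\varepsilon}\mpi$ where $\xi_{i}\in\{0,1\}$ are independent Bernoullis with $\Pr[\xi_{i}=1]=p_{i}$.

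Second, I would apply matrix Bernstein (or matrix Chernoff) to the zero-mean random matrices $\mx_{i}\defeq(\xi_{i}/p_{i}-1)\vec{b}_{i}\vec{b}_{i}^{T}$. For indices with $p_{i}=1$ the term vanishes; otherwise, $p_{i}=c_{s}\varepsilon^{-2}u_{i}\log n\geq c_{s}\varepsilon^{-2}\sigma_{i}\log n$ gives the uniform bound
\[
\|\mx_{i}\|\leq\frac{1}{p_{i}}\sigma_{i}\leq\frac{\varepsilon^{2}}{c_{s}\log n}.
\]
For the matrix variance, since $\E[\mx_{i}^{2}]\preceq\frac{1}{p_{i}}\sigma_{i}\vec{b}_{i}\vec{b}_{i}^{T}$, I would sum to obtain $\sum_{i}\E[\mx_{i}^{2}]\preceq\frac{\varepsilon^{2}}{c_{s}\log n}\mpi$. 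Plugging into Bernstein then gives $\Pr[\|\sum_{i}\mx_{i}\|\geq\varepsilon]\leq 2d\exp(-\Omega(c_{s}\log n))$, so choosing the absolute constant $c_{s}$ large enough makes this $n^{-\omega(1)}$ and yields the claim with high probability in $n$ (with an $\varepsilon$ vs.\ $e^{\pm\varepsilon}$ conversion absorbed into constants, which is standard for $\varepsilon\le 1/2$).

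For the sparsity bound, I would observe that $\E[\nnz(\mh)]=\sum_{i}p_{i}\leq c_{s}\varepsilon^{-2}\|\vu\|_{1}\log n$. Since $\nnz(\mh)$ is a sum of independent $\{0,1\}$ indicators, a standard scalar Chernoff bound shows it concentrates around its mean up to a constant factor with probability $1-\exp(-\Omega(\|\vu\|_{1}\varepsilon^{-2}\log n))$; if $\|\vu\|_{1}\varepsilon^{-2}=\Omega(1)$ this is high probability, and otherwise the mean is already $O(\log n)$ and the crude bound $\nnz(\mh)\leq n$ combined with the scalar Chernoff variant for small means still gives $\nnz(\mh)=O(\|\vu\|_{1}\varepsilon^{-2}\log n)$ with high probability by increasing $c_{s}$.

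The main obstacle is really only bookkeeping: setting the matrix Bernstein parameters so that a \emph{single} choice of $c_{s}$ simultaneously drives both the spectral error below $\varepsilon$ and the sparsity concentration to high probability in $n$, and handling the trivial edge case $p_{i}=1$ (which only helps, as it removes randomness from that row). Everything else is routine once the whitening reduction is in place.
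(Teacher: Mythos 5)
Your proof is correct: the whitening reduction to $\sum_{i}(\xi_{i}/p_{i})\vec{b}_{i}\vec{b}_{i}^{T}\approx_{\epsilon}\mpi$ followed by matrix Bernstein with the uniform bound $\sigma_{i}/p_{i}\leq\epsilon^{2}/(c_{s}\log n)$ and the matching variance bound is exactly the standard argument behind this lemma, and your bookkeeping (the $p_{i}=1$ case, the $1\pm\epsilon$ versus $e^{\pm\epsilon}$ conversion for $\epsilon\leq1/2$, and the scalar Chernoff bound for $\nnz(\mh)$, noting $\norm{\vu}_{1}\geq\norm{\vsigma}_{1}=\rank(\ma)\geq1$) is sound. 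The paper does not prove this lemma itself --- it invokes it as a known variant of the leverage-score sampling results of Drineas--Mahoney--Muthukrishnan and Cohen et al. --- and your argument is the same one used in those sources.
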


\begin{lem}[Computing Leverage Scores]
\label{lem:computing_leverage_scores} Let $\ma\in\R^{n\times d}$,
let $\vsigma$ denote the leverage scores of $\ma$, and let $\epsilon>0$.
If we have a $\time$-time solver for $\ma^{T}\ma$ then in time $\otilde((\nnz(\ma)+\time)\epsilon^{-2}\log(\epsilon^{-1}))$
we can compute $\vec{\tau}\in\R^{n}$ such that with high probability
in $n$, $(1-\epsilon)\sigma_{i}\leq\tau_{i}\leq(1+\epsilon)\sigma_{i}$
for all $i\in[n]$.
\end{lem}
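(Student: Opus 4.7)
The plan is to reduce the computation of all $n$ leverage scores to a small number of calls to $\solver$ via Johnson--Lindenstrauss dimension reduction on the row space of the projection matrix $\mpi \defeq \ma(\ma^T\ma)^{-1}\ma^T$. The starting point is the identity
\[
\sigma_i = e_i^T \mpi e_i = e_i^T \mpi\,\mpi^T e_i = \normTwo{\mpi e_i}^2,
\]
which holds because $\mpi$ is an orthogonal projection. Hence it suffices to estimate the squared column norms of $\mpi$ to relative error $\epsilon$.

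First, I would draw a JL matrix $\mq \in \R^{k\times n}$ with $k = \Theta(\epsilon^{-2}\log n)$ (e.g. a scaled $\pm 1$ or Gaussian matrix) and output the estimates $\tau_i \defeq \normTwo{\mq\mpi e_i}^2$. A standard Johnson--Lindenstrauss argument combined with a union bound over the $n$ vectors $\{\mpi e_i\}_{i\in[n]}$ gives $(1-\tfrac{\epsilon}{2})\sigma_i \le \tau_i \le (1+\tfrac{\epsilon}{2})\sigma_i$ for all $i$ with high probability in $n$.

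Next, I would compute $\mq\mpi$ as follows: (i) form $\mb \defeq \mq\ma \in \R^{k\times d}$ in time $O(k\cdot\nnz(\ma))$; (ii) for each of the $k$ rows $\vb_j^T$ of $\mb$, call the solver to obtain $\tilde\vc_j \approx (\ma^T\ma)^{-1}\vb_j$, for a total cost of $O(k\cdot\time\log(\epsilon'^{-1}))$ where $\epsilon'$ is the internal precision chosen below; (iii) for each row of $\ma$, compute $\tilde\tau_i = \sum_{j=1}^k (\tilde\vc_j^T \ma_i)^2$, which over all $i$ costs $O(k\cdot \nnz(\ma))$. Summing gives the claimed runtime $\otilde((\nnz(\ma)+\time)\epsilon^{-2}\log(\epsilon^{-1}))$.

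The main obstacle is propagating the solver's inexactness into a multiplicative leverage-score bound. Writing $\vc_j \defeq (\ma^T\ma)^{-1}\vb_j$ and $\ve_j \defeq \tilde\vc_j - \vc_j$, the solver guarantee $\normFull{\ve_j}_{\ma^T\ma}^2 \le \epsilon' \normFull{\vc_j}_{\ma^T\ma}^2$ yields $\normTwo{\ma\ve_j}^2 \le \epsilon'\normTwo{\ma\vc_j}^2 = \epsilon'\normTwo{\mpi \mq^T e_j}^2$. Summing over $j$, the total squared Frobenius error of $\tilde\mc\ma^T - \mq\mpi$ is bounded by $\epsilon'\normFull{\mq\mpi}_F^2 = \epsilon'\sum_i \tau_i$, and an entry-wise/column-wise triangle inequality in $\ell_2$ together with the JL estimate then gives $|\tilde\tau_i - \tau_i| \le O(\sqrt{\epsilon'})\tau_i$ once we note $\sum_i\sigma_i = d$ and calibrate $\epsilon'$ appropriately (taking $\epsilon' = \poly(\epsilon/d)$ suffices and only blows up $\log(\epsilon'^{-1})$ by a logarithmic factor, which is absorbed into the $\otilde$). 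Combining with the JL step yields $(1-\epsilon)\sigma_i\le\tilde\tau_i\le(1+\epsilon)\sigma_i$ with high probability, completing the proof.
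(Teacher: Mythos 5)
The paper does not actually prove this lemma itself; it cites the standard argument of \cite{spielmanS08sparsRes} and its generalizations, and your proposal reconstructs exactly that argument: write $\sigma_i=\normTwo{\mpi e_i}^2$ for the projection $\mpi=\ma(\ma^T\ma)^{-1}\ma^T$, sketch with a Johnson--Lindenstrauss matrix $\mq\in\R^{k\times n}$ with $k=\Theta(\epsilon^{-2}\log n)$, and realize each of the $k$ rows of $\mq\mpi$ with one call to the solver. The structure and the runtime accounting are right.

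The one genuine gap is in your error-propagation step. From $\normTwo{\ma\ve_j}^2\le\epsilon'\normTwo{\mpi\mq^Te_j}^2$ you correctly obtain the Frobenius bound $\norm{\mE}_F^2\le\epsilon'\norm{\mq\mpi}_F^2=O(\epsilon' d)$ for the error matrix $\mE$ whose $j$-th row is $(\ma\ve_j)^T$. But a Frobenius bound only controls the \emph{sum} over $i$ of $\normTwo{\mE e_i}^2$; for a fixed column $i$ the error can concentrate entirely there, so all you get is the additive bound $|\sqrt{\tilde\tau_i}-\sqrt{\tau_i}|\le\normTwo{\mE e_i}\le\sqrt{O(\epsilon' d)}$. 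This does not yield $|\tilde\tau_i-\tau_i|\le O(\sqrt{\epsilon'})\,\tau_i$ when $\sigma_i$ is tiny (say $\sigma_i=2^{-n}$), and no choice of $\epsilon'=\poly(\epsilon/d)$ repairs it by this route alone: the required precision would have to depend on $\min_i\sigma_i$. The missing observation is that each error vector $\ma\ve_j$ lies in the column space of $\ma$, so $\mpi\ma\ve_j=\ma\ve_j$ and hence $|(\ma\ve_j)_i|=|(\mpi e_i)^T\ma\ve_j|\le\normTwo{\mpi e_i}\normTwo{\ma\ve_j}=\sqrt{\sigma_i}\,\normTwo{\ma\ve_j}$. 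Summing over $j$ gives $\normTwo{\mE e_i}^2\le\sigma_i\,\epsilon'\norm{\mq\mpi}_F^2=O(\epsilon' d\,\sigma_i)$: the per-column error is itself proportional to $\sqrt{\sigma_i}$. Taking $\epsilon'=\Theta(\epsilon^2/d)$ then gives the multiplicative guarantee for every $i$, and since $\log(1/\epsilon')=O(\log(d/\epsilon))$ the stated running time is unaffected. With that substitution your proof is complete.
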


\subsection{Matrix Results}

Our algorithms combine the sampling techniques mentioned above with
the following two results.
\begin{lem}[Woodbury Matrix Identity]
\label{lem:Woodbury}For any matrices $\ma$, $\mathbf{U}$, $\mathbf{C}$,
$\mathbf{V}$ with compatible size, if $\ma$ and $\mathbf{C}$ are
invertible, then, $\left(\ma+\mathbf{U}\mathbf{C}\mathbf{V}\right)^{-1}=\ma^{-1}-\ma^{-1}\mathbf{U}\left(\mathbf{C}^{-1}+\mathbf{V}\ma^{-1}\mathbf{U}\right)^{-1}\mathbf{V}\ma^{-1}.$
\end{lem}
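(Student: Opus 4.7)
The plan is to prove the identity by direct verification: multiply $\ma + \mU \mc \mv$ on the right by the claimed expression for its inverse and show the product is the identity. Concretely, I will compute
\[
\bigl(\ma + \mU \mc \mv\bigr)\Bigl[\ma^{-1} - \ma^{-1}\mU\bigl(\mc^{-1} + \mv\ma^{-1}\mU\bigr)^{-1}\mv\ma^{-1}\Bigr].
\]
Distributing the first factor, I obtain $\iMatrix + \mU\mc\mv\ma^{-1}$ from the $\ma^{-1}$ term, and from the correction term the expression $\bigl(\mU + \mU\mc\mv\ma^{-1}\mU\bigr)\bigl(\mc^{-1} + \mv\ma^{-1}\mU\bigr)^{-1}\mv\ma^{-1}$.

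The key algebraic step is to factor $\mU\mc$ out of $\mU + \mU\mc\mv\ma^{-1}\mU = \mU\mc\bigl(\mc^{-1} + \mv\ma^{-1}\mU\bigr)$, at which point the inner bracket cancels exactly with the inverse $\bigl(\mc^{-1} + \mv\ma^{-1}\mU\bigr)^{-1}$. This leaves the correction term equal to $\mU\mc\mv\ma^{-1}$, which precisely cancels the $\mU\mc\mv\ma^{-1}$ produced by the first factor, yielding $\iMatrix$. A symmetric left-multiplication check (or invoking the uniqueness of inverses for square matrices) completes the argument.

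The only subtlety is that the expression $\bigl(\mc^{-1} + \mv\ma^{-1}\mU\bigr)^{-1}$ must be well-defined; this is implicit in the statement's hypothesis that $\ma$ and $\mc$ are invertible together with the assumption that both sides of the identity make sense as matrices of compatible dimensions. In the applications of this lemma (low rank updates in the inverse maintenance problem), $\mc$ will be diagonal with nonzero entries and $\ma + \mU\mc\mv$ will be a positive definite matrix of the form $\ma^T\md\ma$, so invertibility of the Schur complement $\mc^{-1} + \mv\ma^{-1}\mU$ is automatic; consequently, no additional regularity hypothesis is needed beyond what is stated. There is no real obstacle here—the entire proof is a one-line factorization—so this lemma will serve simply as a black box for the low rank update arguments later in the paper.
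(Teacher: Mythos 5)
Your proof is correct: the one-line factorization $\mathbf{U}+\mathbf{U}\mathbf{C}\mathbf{V}\ma^{-1}\mathbf{U}=\mathbf{U}\mathbf{C}\left(\mathbf{C}^{-1}+\mathbf{V}\ma^{-1}\mathbf{U}\right)$ is exactly the standard verification of the Woodbury identity, and the paper itself states this lemma as a classical fact without proof, so there is nothing to diverge from. Your remark that invertibility of $\mathbf{C}^{-1}+\mathbf{V}\ma^{-1}\mathbf{U}$ is an implicit additional hypothesis (equivalent to invertibility of $\ma+\mathbf{U}\mathbf{C}\mathbf{V}$) is a fair and correctly resolved observation about how the lemma is used downstream.
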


\begin{thm}[Theorem 9 in \cite{nelson2012osnap} ]
\label{thm:onsap}There is a distribution $\mathcal{D}$ over $\otilde(d\epsilon^{-2})\times n$
matrices such that for any $\ma\in\R^{n\times d}$, with high probability
in $n$ over the choice of $\mpi\sim\mathcal{D}$, we have $\ma^{T}\mpi^{T}\mpi\ma\approx_{\epsilon}\ma^{T}\ma$.
Sampling from $\mathcal{D}$ and computing $\mpi\ma$ for any $\mpi\in\mathrm{supp}(\mathcal{D})$
can be done in $\otilde(\nnz(\ma))$ time.\end{thm}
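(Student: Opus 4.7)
The plan is to construct $\mathcal{D}$ as the OSNAP distribution of~\cite{nelson2012osnap}: draw $\mpi \in \R^{m\times n}$ with $m = \otilde(d\epsilon^{-2})$ by, for each column $j\in[n]$, independently choosing a uniformly random subset $S_j\subseteq[m]$ of size $s = \mathrm{polylog}(n)$ and placing independent Rademacher entries $\pm 1/\sqrt{s}$ on the rows in $S_j$, with every other entry zero. With this description, sampling $\mpi$ takes $\otilde(n)$ time, and since each column of $\mpi$ has at most $s$ nonzeros, computing $\mpi\ma$ takes $O(s\cdot\nnz(\ma)) = \otilde(\nnz(\ma))$ arithmetic operations.

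For the embedding property I would first reduce to the orthonormal case: factor $\ma = \mU\mr$ with $\mU\in\R^{n\times d}$ having orthonormal columns, so that the target inequality $\ma^T\mpi^T\mpi\ma \approx_\epsilon \ma^T\ma$ is equivalent (up to constants) to $\normTwo{\mU^T\mpi^T\mpi\mU - \iMatrix_d}\leq \epsilon$. Writing $\mE := \mU^T\mpi^T\mpi\mU - \iMatrix_d$, independence and the $\pm 1/\sqrt{s}$ normalization give $\E[\mE] = \mvar 0$, so the task is concentration of a sum of mean-zero random matrices.

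The main technical step, and the expected obstacle, is to bound $\Pr[\normTwo{\mE} > \epsilon]$. The natural approach is the moment method: for an even integer $p$, exploit the bound $\normTwo{\mE}^{2p}\leq \mathrm{Tr}(\mE^{2p})$ and expand the trace as a sum over length-$2p$ closed walks whose edges are pairs of columns of $\mpi$. Each summand is a degree-$4p$ monomial in the Rademacher signs, and independence kills every term in which some sign appears an odd number of times. The difficulty is that naively the number of surviving walks grows like $n^{p}$, so one must carefully encode each walk by its isomorphism type as a labeled multigraph and bound the contribution of each type using $\mathrm{Tr}(\mU\mU^T) = d$ together with the sparsity parameter $s$; this is the crux of~\cite{nelson2012osnap} and lets the bound depend only on $d$, not $n$. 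After collecting types, one obtains a bound of the rough shape $\E[\mathrm{Tr}(\mE^{2p})] \leq d\cdot (Cp^2/m)^p + (Cp/s)^p$ for an absolute constant $C$. Choosing $s = \Theta(\log n)$, $m = \Theta(d\log^{c} n/\epsilon^2)$ for a suitable $c$, taking $p = \Theta(\log n)$, and applying Markov's inequality yields $\normTwo{\mE}\leq \epsilon$ with high probability in $n$, completing the proof.
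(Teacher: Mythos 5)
This theorem is not proved in the paper at all -- it is imported verbatim as Theorem 9 of \cite{nelson2012osnap} -- and your sketch is a faithful outline of exactly the argument in that reference: the sparse $s$-nonzeros-per-column construction, the reduction to an orthonormal factor $\mU$ so that the goal becomes $\normTwo{\mU^{T}\mpi^{T}\mpi\mU-\iMatrix}\leq\epsilon$, and the trace-moment method with walks classified by graph isomorphism type. So your proposal matches the intended (cited) proof; the only caveat is that the hard combinatorial accounting of walk types, which is the entire content of the OSNAP theorem, is acknowledged but not carried out, which is appropriate for a result the paper itself treats as a black box.
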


\section{Solving the Inverse Maintenance Problem Using $\ell^{2}$ Stability\label{sec:low_rank}}

In this section we provide an efficient algorithm to solve the inverse
maintenance problem under the $\ellTwo$ stability assumption (See
Section~\ref{sub:inverse_maintenance}). The $\ellTwo$ stability
assumption is stronger than the $\sigma$ stability assumption and
the result we prove in this section is weaker than the one we prove
under the $\sigma$ stability assumption in the next section (although
still a mild improvement over many previous results). This section
serves as a ``warm-up'' to the more complicated results in Section~\ref{sec:sigma_solution}.

The main goal of this section is to prove the following theorem regarding
exactly maintaining an inverse under a sequence of low-rank updates.
We use this result for our strongest results on solving the inverse
maintenance problem under the $\sigma$ stability assumption in Section~\ref{sec:sigma_solution}.
\begin{thm}[Low Rank Inverse Maintenance]
\label{thm:low_rank_main} Suppose we are given a matrix $\ma\in\R^{n\times d}$,
a vector $\vb^{(0)}\in\R_{>0}^{d}$, a number of round $r>0$, and
in each round $k\in[r]$ we receive $\vd^{(k)}\in\R_{>0}^{n}$ such
that $\mb^{(k)}\defeq\ma^{T}\md^{(k)}\ma$ is PD. Further, suppose
that the number of pairs $(i,k)$ such that $d_{i}^{(k)}\neq d_{i}^{(k-1)}$
is bounded by $\alpha\leq d$ and suppose that there is $\beta>2$
such that $\beta^{-1}\mb^{(0)}\preceq\mb^{(k)}\preceq\beta\mb^{(0)}$.
Then, in round $k$, we can implicit construct a symmetric matrix
$\mk$ such that $\mk\approx_{O(1)}\left(\mb^{(k)}\right)^{-1}$ such
that we can apply $\mk$ to an arbitrary vector in time $\tilde{O}(d^{2}\log(\beta))$.
Furthermore, the algorithm in total takes time $\otilde(sd^{\omega-1}+\alpha rd\left(\frac{\alpha}{r}\right)^{\omega-2}+r\alpha^{\omega})$
where $s\defeq\max\{\nnz(\vd_{0}),d\}$.
\end{thm}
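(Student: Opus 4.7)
The plan is to exploit the fact that $\mb^{(k)}$ differs from $\mb^{(0)}$ only in a submatrix of rank at most $\alpha$, so by Lemma~\ref{lem:Woodbury} we can represent $(\mb^{(k)})^{-1}$ as a precomputed base inverse plus an $O(\alpha)$-rank correction, and I maintain that correction incrementally across rounds.

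First, in a preprocessing stage I would explicitly form $\mb^{(0)} = \ma^T \md^{(0)} \ma$: since $\vd^{(0)}$ has $s \defeq \max\{\nnz(\vd^{(0)}), d\}$ nonzero rows contributing, this reduces to one rectangular matrix product of the form ($d \times s$) times ($s \times d$), taking $\tilde O(sd^{\omega-1})$ via fast matrix multiplication, and I then invert it in $O(d^\omega) \leq O(sd^{\omega-1})$ to obtain $\mn \defeq (\mb^{(0)})^{-1}$. Next, at round $k$, let $S_k = \{i : d_i^{(k)} \neq d_i^{(0)}\}$, let $\ma_{S_k}$ be the corresponding row-submatrix of $\ma$, and let $\mc_k$ be the diagonal matrix on $S_k$ with entries $d_i^{(k)} - d_i^{(0)}$. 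Writing $\mb^{(k)} = \mb^{(0)} + \ma_{S_k}^T \mc_k \ma_{S_k}$ and applying Lemma~\ref{lem:Woodbury} gives
\[
(\mb^{(k)})^{-1} \;=\; \mn \;-\; \mU_k \ms_k \mU_k^T, \qquad \mU_k \defeq \mn \ma_{S_k}^T, \quad \ms_k \defeq \bigl(\mc_k^{-1} + \ma_{S_k}\mU_k\bigr)^{-1}.
\]
Storing $\mn$, $\mU_k \in \R^{d \times |S_k|}$, and $\ms_k \in \R^{|S_k|\times|S_k|}$ then allows us to apply $\mk$ to a vector in $O(d^2 + d|S_k| + |S_k|^2) = O(d^2)$ time since $|S_k| \leq \alpha \leq d$.

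The efficient maintenance comes from batching the updates to $\mU_k$ and $\ms_k$. When a batch of $B$ new indices enters $S_k$, I would append the corresponding columns to $\mU_k$ via a single rectangular product $\mn \cdot (\text{new rows of }\ma^T)$, costing $\tilde O(d^2 B^{\omega-2})$ by the usual block decomposition for rectangular matrix multiplication when $B \leq d$. Choosing batches of size $B \approx \alpha/r$, so that on average one batch is processed per round, gives the $\tilde O(\alpha r d (\alpha/r)^{\omega-2})$ term. When only $\mc_k$ changes on rows already in $S_k$, I would either apply a rank-one Sherman--Morrison correction to $\ms_k$ or, on a coarser schedule, recompute $\ms_k$ from scratch by forming $\ma_{S_k} \mU_k$ and inverting the resulting matrix in $O(\alpha^\omega)$; doing this at most once per round accounts for the $\tilde O(r\alpha^\omega)$ term.

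The main obstacle will be calibrating the batch schedule so that (i) the rectangular-multiplication cost across all batches collapses to the stated $\tilde O(\alpha r d (\alpha/r)^{\omega-2})$ bound (rather than the naive $\tilde O(\alpha d^2)$), and (ii) the rebuilds of $\ms_k$ are frequent enough to absorb the rank-one $\mc_k$ updates without letting their amortized cost dominate. A secondary subtlety is the $\approx_{O(1)}$ slack in the statement: exact Woodbury yields $\mk = (\mb^{(k)})^{-1}$, but to apply $\mk$ in $\tilde O(d^2 \log \beta)$ it suffices to use a preconditioned iteration whose termination bound uses $\beta^{-1}\mb^{(0)} \preceq \mb^{(k)} \preceq \beta \mb^{(0)}$ to bound the relevant condition number, which is where the $\log\beta$ factor in the apply time arises.
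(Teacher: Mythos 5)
Your overall strategy---Woodbury on top of a precomputed $(\mb^{(0)})^{-1}$ with batched low-rank updates---is the classical Vaidya-style starting point, but as written it does not achieve the claimed running time, and it misses the one genuinely new ingredient of the theorem. The problem is the cost of forming $\mU_{k}\defeq(\mb^{(0)})^{-1}\ma_{S_{k}}^{T}$ (equivalently, of assembling the middle matrix $\ma_{S_{k}}(\mb^{(0)})^{-1}\ma_{S_{k}}^{T}$). Multiplying the dense $d\times d$ matrix $(\mb^{(0)})^{-1}$ by a $d\times B$ block of newly activated columns costs $\tilde{O}(d^{2}B^{\omega-2})$, so $r$ batches totalling $\alpha$ columns cost $\tilde{O}(rd^{2}(\alpha/r)^{\omega-2})$, \emph{not} the $\tilde{O}(\alpha rd(\alpha/r)^{\omega-2})$ you claim---the two differ by a factor $d/\alpha$, and the larger quantity is not absorbed by $sd^{\omega-1}$ (for $\alpha=s=d$ it is $d^{\omega}r^{3-\omega}$ versus $d^{\omega}$). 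To reach the stated bound, every per-round product must pair an $\tilde{O}(\alpha)\times d$ matrix with a $d\times O(\alpha)$ matrix; the dense $d\times d$ inverse is never allowed to appear in the per-round work.

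The paper avoids this in two steps, and the second is the crux. For the $s$ rows in the support of $\vd^{(0)}$ it precomputes the entire matrix $\ma(\mb^{(0)})^{-1}$ restricted to those rows once, in $O(sd^{\omega-1})$ (within budget precisely because there are only $s$ such rows); the Woodbury middle matrix for those rows is then a product of $\alpha\times d$ and $d\times\alpha$ pieces maintained by Lemmas~\ref{lem:low_rank_helper} and~\ref{lem:low_rank_old}. For rows \emph{outside} the support of $\vd^{(0)}$ this precomputation is impossible within the $sd^{\omega-1}$ budget (one cannot afford $\ma(\mb^{(0)})^{-1}$ on all $n$ rows, nor know in advance which rows will appear), and computing $(\mb^{(0)})^{-1}\va_{i}$ on demand is exactly your $d^{2}$-per-row cost. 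The paper instead sandwiches these contributions with an oblivious subspace embedding $\mpi$ (Theorem~\ref{thm:onsap}) and maintains only $\mpi\mn^{(k)}$, replacing the exact Gram matrix $(\mn^{(k)})^{T}\mn^{(k)}$ by the constant-factor approximation $(\mn^{(k)})^{T}\mpi^{T}\mpi\mn^{(k)}$, all of whose factors are $\tilde{O}(\alpha)\times d$. This is also why the conclusion is only $\mk\approx_{O(1)}(\mb^{(k)})^{-1}$ and why the hypothesis $\beta^{-1}\mb^{(0)}\preceq\mb^{(k)}\preceq\beta\mb^{(0)}$ is needed: the sketching error sits inside an inner Woodbury term and is controlled by choosing the solver accuracy as a function of $\beta$, which is where $\log\beta$ enters the application time. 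Your reading of the $O(1)$ slack as a convenience of preconditioned iteration applied to an exact Woodbury identity therefore inverts the logic: the approximation is forced by the running-time budget, not optional.
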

This improves upon the previous best expected running times in \cite{vaidya1989speeding,lsInteriorPoint}
which had an additive $\otilde(nd^{\omega-1})$ term which would be
prohibitively expensive for our purposes.

To motivate this theorem and our approach, in Section~\ref{sub:algorithm_l2_stable},
we prove that Theorem~\ref{thm:low_rank_main} suffices to yield
improved algorithms for the inverse maintenance problem under $\ellTwo$
stability. Then in Section~\ref{sub:low_rank_proof} we prove Theorem~\ref{thm:low_rank_main}
using a combination of classic machinery involving low rank update
formulas and new machinery involving subspace embeddings \cite{nelson2012osnap}.

\subsection{Inverse Maintenance under $\protect\ellTwo$ Stability\label{sub:algorithm_l2_stable}}

Here we show how Theorem~\ref{thm:low_rank_main} can be used to
solve the inverse maintenance problem under the $\ellTwo$ stability
assumption. Note this algorithm is primarily intended to illustrate
Theorem~\ref{thm:low_rank_main} and is a warm-up to the stronger
result in Section~\ref{sec:sigma_solution}.

\begin{algorithm2e}
\caption{Algorithm for the $\ell^2$ Stability Assumption}

\label{alg:sparseframework_strong_promise}

\SetAlgoLined

\textbf{Input: }Initial point $\vec{d}^{(0)}\in\R_{>0}^{n}$.

Set $\vd^{(apr)}:=\vec{d}^{(0)}$.

$\mq^{(0)}\defeq\ma^{T}\md^{(apr)}\ma$.

Let $\mk^{(0)}$ be an approximate inverse of $\mq^{(0)}$ computed
using Theorem~\ref{thm:low_rank_main}.

\textbf{Output:} A $\tilde{O}(d^{2}+\nnz(\ma))$-time linear solver
for $\ma^{T}\md^{(0)}\ma$ (using Lemma~\ref{lem:approx_implies_solver}
on $\mk^{(0)}$).

\For{each round $k\in[r]$}{

\textbf{Input:} Current point\textbf{ }$\vd^{(k)}\in\R_{>0}^{n}$.

\For{each coordinate $i\in[n]$}{

\uIf{ $0.9d_{i}^{(apr)}\leq d_{i}^{(k)}\leq1.1d_{i}^{(apr)}$ is
false}

{$d_{i}^{(apr)}:=d_{i}^{(k)}$.}}

$\mq^{(k)}\defeq\ma^{T}\md^{(apr)}\ma$.

Let $\mk^{(k)}$ be an approximate inverse of $\mq^{(k)}$ computed
using Theorem~\ref{thm:low_rank_main}.

\textbf{Output:} A $\tilde{O}(d^{2}+\nnz(\ma))$-time linear solver
for $\ma^{T}\md^{(k)}\ma$ (using Lemma~\ref{lem:approx_implies_solver}
on $\mk^{(k)}$).

}

\end{algorithm2e}
\begin{thm}
\label{thm:iteration_solver_strong_promise} Suppose that the inverse
maintenance problem satisfies the $\ell^{2}$ stability assumption.
Then Algorithm~\ref{alg:sparseframework_strong_promise} maintains
a $\tilde{O}(\nnz(\ma)+d^{2})$-time solver with high probability
in total time $\otilde(sd^{\omega-1}+r\left(nd^{\omega-1}\right)^{\frac{2\omega}{2\omega+1}})$
where $s=\max\{\max_{k\in[r]}\nnz(d^{(k)}),d\}$ and $r$ is the number
of rounds.\end{thm}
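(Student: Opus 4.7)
The plan is to combine the slowly-changing $\vd^{(apr)}$ maintained by the algorithm with the low-rank inverse maintenance routine from Theorem~\ref{thm:low_rank_main}. First, I would argue correctness: the coordinate-wise test in the algorithm guarantees $0.9\, d_i^{(apr)} \leq d_i^{(k)} \leq 1.1\, d_i^{(apr)}$ for every $i$ and $k$, so $\mq^{(k)} = \ma^T \md^{(apr)} \ma \approx_{O(1)} \ma^T \md^{(k)} \ma$. Combined with $\mk^{(k)} \approx_{O(1)} (\mq^{(k)})^{-1}$ from Theorem~\ref{thm:low_rank_main}, this yields a constant-factor spectral approximation to $(\ma^T \md^{(k)} \ma)^{-1}$. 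Feeding $\mk^{(k)}$ into a constant number of preconditioned Chebyshev iterations that apply $\ma^T \md^{(k)} \ma$ exactly (each application costing $O(\nnz(\ma))$) then produces the desired $\otilde(\nnz(\ma) + d^2)$-time solver via Lemma~\ref{lem:approx_implies_solver}.

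The quantitative heart of the argument is bounding how often $\vd^{(apr)}$ changes. Whenever coordinate $i$ is updated at round $k$, the test has just failed, so $|\log d_i^{(k)} - \log d_i^{(apr)}| > \log(1.1)$; since $d_i^{(apr)}$ coincides with $d_i^{(k')}$ for some earlier $k'$ after which the test had been passing, the triangle inequality yields $\log(1.1) < \sum_{j=k'+1}^{k}|\log d_i^{(j)} - \log d_i^{(j-1)}|$. Summing over all updates in a window of $L$ consecutive rounds, swapping the order of summation, and applying Cauchy-Schwarz to convert each round's contribution from $\ell^{1}$ to $\ell^{2}$, the $\ell^{2}$ stability assumption $\|\log \vd^{(j)} - \log \vd^{(j-1)}\|_2 \leq 0.1$ delivers a worst-case bound of $O(L\sqrt{n})$ updates per $L$-round window.

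With these two ingredients in hand, I would partition the $r$ rounds into epochs of some length $L$ and invoke Theorem~\ref{thm:low_rank_main} once per epoch with budget $\alpha = \Theta(L\sqrt{n})$. The precondition $\alpha \leq d$ forces $L \leq d/\sqrt{n}$; summing the per-epoch cost from Theorem~\ref{thm:low_rank_main} across the $r/L$ epochs, and carefully identifying which parts of the preprocessing can be amortized over the entire run rather than paid per-epoch, yields a bound of the form $\otilde\bigl(sd^{\omega-1} + rLd\, n^{(\omega-1)/2} + rL^{\omega}n^{\omega/2}\bigr)$. The main obstacle, and where I expect the tightest bookkeeping, is the optimization over $L$: balancing the two $L$-dependent terms gives $L \sim (d/\sqrt{n})^{1/(\omega-1)}$, and substituting back (with a small case analysis for when the optimal $L$ falls outside the admissible range $[1, d/\sqrt{n}]$ --- in which case one either collapses to a single-round epoch or uses the baseline $\otilde(\nnz(\ma)+d^{\omega})$ solver) should produce the stated exponent $\frac{2\omega}{2\omega+1}$. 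The remaining subtlety is verifying that the preprocessing term $sd^{\omega-1}$ can indeed be charged once over the whole run rather than $r/L$ times, which would otherwise dominate when $s$ is close to $n$; this seems to require observing that the expensive preprocessing of Theorem~\ref{thm:low_rank_main} (essentially computing a representation of $\ma$ against the initial diagonal) depends only on $\ma$ and $\vd^{(0)}$ and can therefore be shared across all subsequent epochs.
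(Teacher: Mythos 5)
Your correctness argument and your overall epoch/restart architecture match the paper, but the quantitative heart of your proof --- the count of how often $\vd^{(apr)}$ changes --- is the wrong bound, and it does not lead to the stated exponent. You apply Cauchy--Schwarz \emph{across coordinates} within each round, bounding the per-round $\ell^{1}$ change by $\sqrt{n}\,\normFull{\log\vd^{(j)}-\log\vd^{(j-1)}}_{2}\leq0.1\sqrt{n}$, which gives $O(L\sqrt{n})$ updates per $L$-round window. That is a valid upper bound, but it is weaker than what is needed: the paper instead applies Cauchy--Schwarz \emph{across rounds} for each fixed coordinate. If coordinate $i$ is resampled, the accumulated change satisfies $\sum_{j}|\log d_{i}^{(j)}-\log d_{i}^{(j-1)}|\geq\Omega(1)$ over at most $L$ rounds, hence $\sum_{j}(\log d_{i}^{(j)}-\log d_{i}^{(j-1)})^{2}\geq\Omega(1/L)$; since the total squared change over the window is at most $0.01L$ by $\ell^{2}$ stability, the number of updates is $O(L^{2})$, with no dependence on $n$ at all. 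Your subsequent optimization confirms the problem: balancing $rLdn^{(\omega-1)/2}$ against $rL^{\omega}n^{\omega/2}$ gives $L\sim(d/\sqrt{n})^{1/(\omega-1)}$ and a final bound of the form $rd^{\omega/(\omega-1)}n^{((\omega-1)^{2}-1)/(2(\omega-1))}$, which is not $r(nd^{\omega-1})^{2\omega/(2\omega+1)}$; moreover your admissibility constraint $L\leq d/\sqrt{n}$ collapses entirely when $n\gg d^{2}$, whereas with $\alpha=O(L^{2})$ the constraint $\alpha\leq d$ only requires $L\leq\sqrt{d}$.

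A second, smaller error runs in the opposite direction: you hope to charge the $sd^{\omega-1}$ preprocessing once over the whole run, but the preprocessing of Theorem~\ref{thm:low_rank_main} (computing $(\mb^{(0)})^{-1}$ and $\ma(\mb^{(0)})^{-1}$) depends on the \emph{epoch's} initial diagonal, which is reset at every restart, so it cannot be shared. The paper in fact pays $sd^{\omega-1}$ in every one of the $\lceil r/L\rceil$ epochs, chooses $L=(sd^{\omega-1})^{1/(2\omega+1)}$ so that the $r\alpha^{\omega}=L^{2\omega+1}$ term equals $sd^{\omega-1}$ (and shows the $dL^{\omega+1}$ term is dominated because $\omega\leq1+\sqrt{2}$), and the product $\frac{r}{L}\cdot sd^{\omega-1}=r(sd^{\omega-1})^{2\omega/(2\omega+1)}$ is precisely the source of the main term in the theorem. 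So the $\frac{2\omega}{2\omega+1}$ exponent comes from amortizing the per-epoch preprocessing against the $O(L^{2})$ update count, not from the balance you set up.
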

\begin{proof}
Recall that by the $\ell^{2}$ stability assumption $\norm{\log(\vd^{(k)})-\log(\vd^{(k-1)})}_{2}\leq0.1$
for all $k$. Therefore, in the $r$ rounds of the inverse maintenance
problem at most $O(r^{2})$ coordinates of $\vd^{(k)}$ change by
any fixed multiplicative constant. Consequently, the condition, $0.9d_{i}^{(apr)}\leq d_{i}^{(k)}\leq1.1d_{i}^{(apr)}$
in Algorithm~\ref{alg:sparseframework_strong_promise} is false at
most $O(r^{2})$ times during the course of the algorithm and at most
$O(r^{2})$ coordinates of the vector $\vd^{(apr)}$ change during
the course of the algorithm. 

Therefore, we can use Theorem~\ref{thm:low_rank_main} on $\ma^{T}\md^{(apr)}\ma$
with $\alpha=O(r^{2})$ and $s$ as defined in the theorem statement.
Consequently, the total cost of maintaining an implicit approximation
of $\left(\ma^{T}\md^{(apr)}\ma\right)^{-1}$ for $r$ rounds is $\tilde{O}\left(sd^{\omega-1}+dr^{\omega+1}+r^{2\omega+1}\right)$. 

To further improve the running time we restart the maintenance procedure
every $(sd^{\omega-1})^{\frac{1}{2\omega+1}}$ iterations. This yields
a total cost of maintenance that is bounded by 
\[
\tilde{O}\left(\left\lceil \frac{r}{(sd^{\omega-1})^{\frac{1}{2\omega+1}}}\right\rceil \left(sd^{\omega-1}+d\left((sd^{\omega-1})^{\frac{1}{2\omega+1}}\right)^{\omega+1}+\left((sd^{\omega-1})^{\frac{1}{2\omega+1}}\right)^{2\omega+1}\right)\right)\,.
\]
which is the same as 
\[
\tilde{O}\left(\left\lceil \frac{r}{(sd^{\omega-1})^{\frac{1}{2\omega+1}}}\right\rceil \left(sd^{\omega-1}+d(sd^{\omega-1})^{\frac{\omega+1}{2\omega+1}}\right)\right)\,.
\]
Since $\omega\leq1+\sqrt{2}$, we have $d(d^{\omega})^{\frac{\omega+1}{2\omega+1}}\leq d^{\omega}$
and thus $sd^{\omega-1}$ dominates $d(sd^{\omega-1})^{\frac{\omega+1}{2\omega+1}}$
when $s=d$. Furthermore, as $s\geq d$ the term $sd^{\omega-1}$
grows faster than $d(sd^{\omega-1})^{\frac{\omega+1}{2\omega+1}}$.
Consequently, $sd^{\omega-1}$ always dominates and we have the desired
result.
\end{proof}

\subsection{Low Rank Inverse Maintenance\label{sub:low_rank_proof}}

Here we prove Theorem~\ref{thm:low_rank_main} and provide an efficient
algorithm for maintaining the inverse of a matrix under a bounded
number of low rank modifications. The algorithm we present is heavily
motivated by the work in \cite{vaidya1989speeding} and the slight
simplifications in \cite{lsInteriorPoint}. However, our algorithm
improves upon the previous best cost of $\otilde(nd^{\omega-1}+dr^{\omega+1}+r^{2\omega+1})$
achieved in \cite{lsInteriorPoint} by a novel use of subspace embeddings
\cite{nelson2012osnap} that we hope will be of independent interest.%
\footnote{We require a slightly stronger assumption than \cite{lsInteriorPoint}
to achieve our result; \cite{lsInteriorPoint} did not require any
bound on $\beta$. %
} 

We break our proof of Theorem~\ref{thm:low_rank_main} into several
parts. First, we provide a simple technical lemma about maintaining
the weighted product of two matrices using fast matrix multiplication.
\begin{lem}
\label{lem:low_rank_helper} Let $\ma,\mb\in\R^{n\times d}$ and suppose
that in each round $k$ of $r$ we receive $\vx^{(k)},\vy^{(k)}\in\R^{n}$.
Suppose that the number of pairs $(i,k)$ such that $x_{i}^{(k)}\neq x_{i}^{(k-1)}$
or $y_{i}^{(k)}\neq y_{i}^{(k-1)}$ is upper bounded by $\alpha$.
Suppose that $\nnz(\vx^{(1)})\leq\alpha$, $\nnz(\vy^{(1)})\leq\alpha$
and $\alpha\leq d$. With $O(d\alpha^{\omega-1})$ time precomputation,
we can compute $\mx^{(k)}\ma\mb^{T}\my^{(k)}$ explicitly in an average
cost of $O(\alpha d\left(\frac{\alpha}{r}\right)^{\omega-2})$ per
iteration.\end{lem}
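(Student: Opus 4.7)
The plan is to reduce the task of producing $\mx^{(k)}\ma\mb^T\my^{(k)}$ at each round to maintaining a single \emph{core matrix} $\mc$ whose $(i,j)$-entry stores the inner product $\ma_{i,:}\mb_{j,:}^T$ for every $(i,j)$ that has ever been active in $\vx$ or $\vy$. Given $\mc$, the desired output entry at $(i,j)$ is simply $x_i^{(k)} y_j^{(k)} \mc_{ij}$, so each entry is available in $O(1)$ time, and all nontrivial work reduces to initializing $\mc$ on the initial supports and extending it by new rows/columns as coordinates become active.

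For the precomputation I would compute $\mc^{(1)} = \ma_{S_x^{(1)}} \mb_{S_y^{(1)}}^T$, where $S_x^{(1)}$ and $S_y^{(1)}$ denote the initial supports (of size at most $\alpha$). This is a product of shape $\alpha \times d \times \alpha$; partitioning the contracted dimension into $d/\alpha$ blocks of width $\alpha$ decomposes it into $d/\alpha$ products of shape $\alpha \times \alpha \times \alpha$, each of cost $O(\alpha^\omega)$ via fast matrix multiplication, for total precomputation cost $O(d\alpha^{\omega-1})$ as claimed.

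Across the $r$ rounds I would process \emph{new-index events}---the first appearance of a coordinate in $\vx$ or $\vy$---in batches of size $B := \max\{1,\alpha/r\}$ (separately for $\vx$ and $\vy$): whenever $B$ new-index events have accumulated in one of the vectors, I would perform a bulk update appending the corresponding $B$ rows or columns to $\mc$ by computing $\ma_{\mathrm{new}} \mb_{\mathrm{old}}^T$ or the symmetric variant, a product of shape $B \times d \times O(\alpha)$. Partitioning the $d$-dimension into $d/\alpha$ chunks of width $\alpha$ and applying rectangular fast matrix multiplication on each resulting $B \times \alpha \times \alpha$ sub-problem (further decomposed into $\alpha/B$ copies of $B \times B \times B$), each bulk update costs $O(d\alpha B^{\omega-2})$. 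Since at most $2\alpha$ new-index events occur in total across the rounds, there are at most $O(\alpha/B)$ bulk updates, yielding aggregate bulk-update cost $O(d\alpha^2 B^{\omega-3})$; substituting $B = \alpha/r$ gives total $O(d\alpha^{\omega-1} r^{3-\omega})$, which amortizes to $O(\alpha d(\alpha/r)^{\omega-2})$ per round.

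The main subtlety, and the main obstacle, is what to report at an iteration that falls strictly inside a batch, when up to $B$ new indices have been observed but not yet committed to $\mc$. I would decompose $\mx^{(k)} = \mx_{\mathrm{old}}^{(k)} + \mx_{\mathrm{new}}^{(k)}$ and $\my^{(k)} = \my_{\mathrm{old}}^{(k)} + \my_{\mathrm{new}}^{(k)}$ according to whether each coordinate is committed to $\mc$, and express $\mx^{(k)}\ma\mb^T\my^{(k)}$ as the sum of the four resulting bilinear terms: the ``old-old'' term is supplied implicitly by $\mx_{\mathrm{old}}^{(k)} \mc \my_{\mathrm{old}}^{(k)}$, and each of the remaining three is a rank-at-most-$B$ sparse outer product that can be passed downstream in factored form rather than being materialized. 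This factoring is crucial, since materializing the $B$ pending rows at every iteration would cost $\Omega(d\alpha^2)$ in the worst case and exceed the target budget when $r < \alpha$. The key things to verify are therefore that this compact representation suffices for the downstream Woodbury-style updates needed in Theorem~\ref{thm:low_rank_main}, and that the choice $B = \alpha/r$ does indeed balance the bulk-update and pending-term contributions in both regimes $r \leq \alpha$ and $r > \alpha$.
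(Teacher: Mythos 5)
Your core-matrix idea (store the unweighted inner products $\ma_{i,:}\mb_{j,:}^{T}$ for active index pairs and multiply by the current $x_i^{(k)}y_j^{(k)}$ on demand) is sound, and your precomputation cost and total bulk-update budget both match the target. The genuine gap is the batching. The lemma requires the product $\mx^{(k)}\ma\mb^{T}\my^{(k)}$ to be computed \emph{explicitly} at every round, but with a fixed batch size $B=\alpha/r$ your core matrix is missing up to $B$ pending rows/columns at rounds interior to a batch, so you cannot produce the required output there. Your proposed escape --- handing the three cross terms downstream in factored form --- changes the statement being proved, and you concede yourself that you have not verified that this weaker interface suffices for the Woodbury manipulations in Theorem~\ref{thm:low_rank_main} (it is used there in several places, e.g.\ to form $\mvar P^{(k)}\Delta^{(k)}\mc\ma^{T}\Delta^{(k)}(\mvar P^{(k)})^{T}$ explicitly before inverting). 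As written, the proof is incomplete.

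The batching is also unnecessary, and dropping it closes the gap. Process the $u_k$ coordinates that change at round $k$ \emph{immediately}: the update requires multiplying a $u_k\times d$ matrix against a $d\times O(\alpha)$ matrix (plus symmetric and small terms), which by tiling into $O\bigl(\frac{\alpha}{u_k}\cdot\frac{d}{u_k}\bigr)$ blocks of size $u_k\times u_k$ costs $O(\alpha d\,u_k^{\omega-2})$. Since $\sum_k u_k\le\alpha$ and $t\mapsto t^{\omega-2}$ is concave (as $\omega-2<1$), Jensen's inequality gives
\[
\sum_{k=1}^{r}\alpha d\,u_k^{\omega-2}\;\le\;\alpha d r\Bigl(\tfrac{1}{r}\sum_{k}u_k\Bigr)^{\omega-2}\;\le\;\alpha d r\Bigl(\tfrac{\alpha}{r}\Bigr)^{\omega-2},
\]
which is exactly the claimed amortized bound while keeping the explicit product available after every round. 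In other words, the amortization you were trying to engineer with a fixed batch size is obtained for free from concavity, and your worry about an $\Omega(d\alpha^2)$ cost for materializing pending rows disappears because nothing is ever left pending. (Two minor points: each $B\times\alpha\times\alpha$ product decomposes into $(\alpha/B)^2$, not $\alpha/B$, cubes of side $B$ --- your final cost $O(d\alpha B^{\omega-2})$ is nonetheless correct; and re-deriving the changed entries of the output from the core matrix costs $O(\alpha\sum_k u_k)=O(\alpha^2)$ in total, which is within budget.)
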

\begin{proof}
For the initial round, we compute $\mx^{(0)}\ma\mb^{T}\my^{(0)}$
explicitly by multiplying an $\alpha\times d$ and a $d\times\alpha$
matrix. Since $\alpha\leq d$, we can compute $\mx^{(0)}\ma\mb^{T}\my^{(0)}$
by multiplying $O\left(\frac{d}{\alpha}\right)$ matrices of size
$\alpha\times\alpha$. Using fast matrix multiplication this takes
time $O(d\alpha^{\omega-1})$. 

For all $k\in[r]$ let $\mDelta_{X}^{(k)}\defeq\mx^{(k)}-\mx^{(k-1)}$
and $\mDelta_{Y}^{(k)}=\my^{(k)}-\my^{(k-1)}$. Consequently, 
\[
\mx^{(k)}\ma\mb^{T}\my^{(k)}=\left(\mx^{(k-1)}+\mDelta_{X}^{(k)}\right)\ma\mb^{T}\left(\my^{(k-1)}+\mDelta_{Y}^{(k)}\right)\,.
\]
Note that $\nnz(\vx^{(k)})$ and $\nnz(\vy^{(k)})$ are less than
$2\alpha$. Thus, if we let $u_{k}$ denote the number of coordinates
$i$ such that $x_{i}^{(k)}\neq x_{i}^{(k-1)}$ or $y_{i}^{(k)}\neq y_{i}^{(k-1)}$,
then to compute $\mx^{(k)}\ma\mb^{T}\my^{(k)}$ we need only multiply
a $u_{k}\times d$ with a $d\times u_{k}$ matrix and multiply two
$2\alpha\times d$ matrices with $d\times u_{k}$ matrices. Since
$u_{k}\leq\alpha$, the running time is dominated by the time to compute
the $2\alpha\times d$ and a $d\times u_{k}$ . Since $u_{k}\leq\alpha\leq d$,
we can do this by multiplying $O\left(\frac{\alpha}{u_{k}}\cdot\frac{d}{u_{k}}\right)$
matrices of size $u_{k}\times u_{k}$. Using fast matrix multiplication,
this takes time $O(\alpha du_{k}^{\omega-2})$. 

Summing over all $u_{k}$ we see that the total cost of computing
the $\mx^{(k)}\ma\mb^{T}\my^{(k)}$ is
\[
O\left(\sum_{k=1}^{r}\alpha du_{k}^{\omega-2}\right)\leq O\left(\alpha dr\left(\frac{1}{r}\sum_{k=1}^{r}u_{k}\right)^{\omega-2}\right)\leq O\left(\alpha dr\left(\frac{\alpha}{r}\right)^{\omega-2}\right)\,,
\]
where in the second inequality we used the concavity of $x^{\omega-2}$.
Since this is at least the $O(d\alpha^{\omega-1})$ cost of computing
$\mx^{(0)}\ma\mb^{T}\my^{(0)}$ we obtain the desired result.
\end{proof}
Next, for completeness we prove a slightly more explicit variant of
a Lemma in \cite{lsInteriorPoint}.
\begin{lem}
\label{lem:low_rank_old} Let $\ma\in\R^{n\times d}$, $\vd^{(0)},\ldots,\vd^{(r)}\in\rPos^{n}$,
and $\mb^{(k)}\defeq\ma^{T}\md^{(k)}\ma$ for all $k$. Suppose that
the number of pairs $(i,k)$ such that $d_{i}^{(k)}\neq d_{i}^{(k-1)}$
is bounded by $\alpha$ and $\alpha\leq d$. In $O(nd^{\omega-1}+\alpha dr\left(\frac{\alpha}{r}\right)^{\omega-2})$
total time, we can explicitly output $\mc\in\R^{n\times d}$ and a
$\mv^{(k)}\in\R^{n\times n}$ in each round $k$ such that 
\[
\left(\mb^{(k)}\right)^{-1}=\left(\mb^{(0)}\right)^{-1}-\mc^{T}\Delta^{(k)}\mv^{(k)}\Delta^{(k)}\mc
\]
where $\Delta^{(k)}$ is a $n\times n$ diagonal matrix with $\Delta_{ii}^{(k)}=\vd_{i}^{(k)}-\vd_{i}^{(0)}$.
Furthermore, $\mv^{(k)}$ is an $\tilde{O}(\alpha)\times\tilde{O}(\alpha)$
matrix if we discard the zero rows and columns.\end{lem}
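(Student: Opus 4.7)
The plan is to apply the Woodbury matrix identity (Lemma~\ref{lem:Woodbury}) to $\mb^{(k)}=\mb^{(0)}+\ma^{T}\Delta^{(k)}\ma$, thereby expressing $(\mb^{(k)})^{-1}$ as a low-rank correction to $(\mb^{(0)})^{-1}$ whose rank is governed by the support of $\Delta^{(k)}$ rather than by $n$. The one subtlety is that $\Delta^{(k)}$ has zero diagonal entries wherever $d_{i}^{(k)}=d_{i}^{(0)}$, so it is not directly invertible; I handle this by restricting to its support. Let $S^{(k)}\defeq\{i\in[n]:d_{i}^{(k)}\neq d_{i}^{(0)}\}$, let $\ma_{S^{(k)}}$ be the $|S^{(k)}|\times d$ submatrix of $\ma$ on rows $S^{(k)}$, and let $\tilde{\Delta}^{(k)}$ be the $|S^{(k)}|\times|S^{(k)}|$ diagonal matrix obtained by discarding the zero rows and columns of $\Delta^{(k)}$. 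Since at most $\alpha$ pairs $(i,k)$ satisfy $d_{i}^{(k)}\neq d_{i}^{(k-1)}$, we have $|S^{(k)}|\leq\alpha$ at every round, and $\ma^{T}\Delta^{(k)}\ma=\ma_{S^{(k)}}^{T}\tilde{\Delta}^{(k)}\ma_{S^{(k)}}$.

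Setting $\mc\defeq\ma(\mb^{(0)})^{-1}$ and applying Woodbury with $\mathbf{U}=\ma_{S^{(k)}}^{T}$, $\mathbf{C}=\tilde{\Delta}^{(k)}$, and $\mathbf{V}=\ma_{S^{(k)}}$ yields
\[
(\mb^{(k)})^{-1}=(\mb^{(0)})^{-1}-\mc_{S^{(k)}}^{T}\bigl((\tilde{\Delta}^{(k)})^{-1}+\ma_{S^{(k)}}\mc_{S^{(k)}}^{T}\bigr)^{-1}\mc_{S^{(k)}}.
\]
To put this in the factored form stated by the lemma, I define $\mv^{(k)}$ to be the $n\times n$ matrix whose only nonzero block lies on rows and columns $S^{(k)}$, where it equals $(\tilde{\Delta}^{(k)})^{-1}\bigl((\tilde{\Delta}^{(k)})^{-1}+\ma_{S^{(k)}}\mc_{S^{(k)}}^{T}\bigr)^{-1}(\tilde{\Delta}^{(k)})^{-1}$. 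A direct computation then shows that $\Delta^{(k)}\mv^{(k)}\Delta^{(k)}$ has $\bigl((\tilde{\Delta}^{(k)})^{-1}+\ma_{S^{(k)}}\mc_{S^{(k)}}^{T}\bigr)^{-1}$ on the $S^{(k)}\times S^{(k)}$ block and zero elsewhere, so that $\mc^{T}\Delta^{(k)}\mv^{(k)}\Delta^{(k)}\mc$ reproduces exactly the Woodbury correction, establishing the identity. By construction $\mv^{(k)}$ has at most $|S^{(k)}|\leq\alpha$ nonzero rows and columns.

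For the running time, $\mc$ is computed once during preprocessing by forming $\mb^{(0)}=\ma^{T}\md^{(0)}\ma$ via $O(n/d)$ block multiplications of $d\times d$ matrices, inverting in time $O(d^{\omega})$, and then computing $\ma\cdot(\mb^{(0)})^{-1}$ analogously, for a total cost of $O(nd^{\omega-1})$. Within round $k$ the dominant work is forming the $|S^{(k)}|\times|S^{(k)}|$ matrix $\ma_{S^{(k)}}\mc_{S^{(k)}}^{T}$ plus a small inversion. I amortize the first step across all $r$ rounds by invoking Lemma~\ref{lem:low_rank_helper} with the two input matrices taken to be $\ma$ and $\mc$ and with $\vx^{(k)}=\vy^{(k)}$ chosen as the indicator of $S^{(k)}$: a change $x_{i}^{(k)}\neq x_{i}^{(k-1)}$ can occur only when $d_{i}^{(k)}\neq d_{i}^{(k-1)}$, so the total number of coordinate changes across all rounds is at most $\alpha$, and Lemma~\ref{lem:low_rank_helper} yields every $\ma_{S^{(k)}}\mc_{S^{(k)}}^{T}$ in total time $O(\alpha dr(\alpha/r)^{\omega-2})$, absorbing the $O(d\alpha^{\omega-1})$ startup term. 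The per-round inversions and assembly of $\mv^{(k)}$ contribute only lower-order terms.

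The main obstacle is conceptual rather than computational: one must resist writing Woodbury with $\Delta^{(k)}$ itself (which is not invertible) and instead carefully absorb the two resulting copies of $(\tilde{\Delta}^{(k)})^{-1}$ into $\mv^{(k)}$. This absorption is precisely what makes the outer factors in the final formula the full products $\Delta^{(k)}\mc$ rather than the sparse selector expressions $\mc_{S^{(k)}}$, so that the downstream algorithm can apply $\mc$ to arbitrary vectors without ever enumerating $S^{(k)}$ while still exploiting fast matrix multiplication to extract the required $|S^{(k)}|\times|S^{(k)}|$ blocks via Lemma~\ref{lem:low_rank_helper}.
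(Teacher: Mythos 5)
Your proof is correct and follows essentially the same route as the paper's: precompute $\mc=\ma(\mb^{(0)})^{-1}$ in $O(nd^{\omega-1})$ time, apply the Woodbury identity restricted to the support of $\Delta^{(k)}$, maintain the inner Gram matrix via Lemma~\ref{lem:low_rank_helper}, and fold the support-selection (and, in your parametrization, the two extra $(\tilde{\Delta}^{(k)})^{-1}$ factors) into $\mv^{(k)}$. The only difference is cosmetic -- the paper takes $\mathbf{C}=\mvar P^{(k)}(\Delta^{(k)})^{\dagger}(\mvar P^{(k)})^{T}$ so the inner matrix is $\tilde{\Delta}^{(k)}+\mvar P^{(k)}\Delta^{(k)}\mc\ma^{T}\Delta^{(k)}(\mvar P^{(k)})^{T}$ rather than your $(\tilde{\Delta}^{(k)})^{-1}+\ma_{S^{(k)}}\mc_{S^{(k)}}^{T}$ -- which is an algebraically equivalent choice.
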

\begin{proof}
Let $\mb\defeq\mb^{(0)}$ for notational simplicity. Note that we
can compute $\mb$ directly in $O(nd^{\omega-1})$ time using fast
matrix multiplication. Then, using fast matrix multiplication we can
then compute $\mb^{-1}$ explicitly in $O(d^{\omega})$ time. Furthermore,
we can compute $\mc\defeq\ma\mb^{-1}$ in $O(nd^{\omega-1})$ time.

Let $u_{k}$ be the number of non-zero entries in $\Delta^{(k)}$
and $\mvar P^{(k)}$ be a $u_{k}\times n$ matrix such that the $\mvar P_{ij}^{(k)}=1$
if $j$ is the index of the $i^{th}$ non-zero diagonal entries in
$\Delta^{(k)}$ and $0$ otherwise. Let $\ms^{(k)}=\left(\mvar P^{(k)}\right)\left(\Delta^{(k)}\right)^{\dagger}\left(\mvar P^{(k)}\right)^{T}$.
By definition of $\ms^{(k)}$ and $\mvar P^{(k)}$, we have 
\[
\Delta^{(k)}=\Delta^{(k)}\left(\mvar P^{(k)}\right)^{T}\ms^{(k)}\mvar P^{(k)}\Delta^{(k)}.
\]
By the Woodbury matrix identity (Lemma~\ref{lem:Woodbury}), we know
that 
\begin{align*}
\left(\mb^{(k)}\right)^{-1} & =\left(\mb+\ma^{T}\Delta^{(k)}\left(\mvar P^{(k)}\right)^{T}\ms^{(k)}\mvar P^{(k)}\Delta^{(k)}\ma\right)^{-1}\\
 & =\mb^{-1}-\mc^{T}\Delta^{(k)}\left(\mvar P^{(k)}\right)^{T}\mvar T^{(k)}\mvar P^{(k)}\Delta^{(k)}\mc
\end{align*}
where $\mt^{(k)}=((\ms^{(k)})^{-1}+\mvar P^{(k)}\Delta^{(k)}\mc\ma^{T}\Delta^{(k)}(\mvar P^{(k)})^{T})^{-1}$.
Now by Lemma~\ref{lem:low_rank_helper}, we know that we can maintain
$\Delta^{(k)}\mc\ma^{T}\Delta^{(k)}$ in an additional $O(\alpha dr\left(\frac{\alpha}{r}\right)^{\omega-2})$
time, using that $\mc=\ma\mb^{-1}$ was precomputed in $O(nd^{\omega-1})$
time. Having the matrix $\Delta^{(k)}\mc\ma^{T}\Delta^{(k)}$ explicitly,
one can compute $\mvar P^{(k)}\Delta^{(k)}\mc\ma^{T}\Delta^{(k)}\left(\mvar P^{(k)}\right)^{T}$
in $O(u_{k}^{2})$ time and hence compute $\mt^{(k)}$ in $O(u_{k}^{\omega})$
time using fast matrix multiplication. Hence, the total running time
is $O(nd^{\omega-1}+\alpha dr\left(\frac{\alpha}{r}\right)^{\omega-2}+\sum u_{k}^{\omega})$.
The convexity of $x^{\omega}$ yields $\sum u_{k}^{\omega}\leq\alpha^{\omega}$.
Hence, the total time is bounded by
\[
O(nd^{\omega-1}+\alpha dr\left(\frac{\alpha}{r}\right)^{\omega-2}+\alpha^{\omega})=O(nd^{\omega-1}+\alpha dr\left(\frac{\alpha}{r}\right)^{\omega-2}).
\]

By setting $\mv^{(k)}=\left(\mvar P^{(k)}\right)^{T}\mt^{(k)}\mvar P^{(k)}$,
we have the desired formula. Note that $\mt^{(k)}$ is essentially
$\mv^{(k)}$ with the zero columns and rows and hence we have the
desired running time.
\end{proof}
We now everything we need to prove Theorem \ref{thm:low_rank_main}.
\begin{proof}[Proof of Theorem~\ref{thm:low_rank_main}]
Let $S\subset[n]$ denote the indices for which $\vd^{(0)}$ is nonzero.
Furthermore, let us split each vector $\vd^{(k)}$ into a vector $\ve^{(k)}$
for the coordinates in $S$ and $\vf^{(k)}$ for the coordinates not
in $S$, i.e. $\ve^{(k)},\vf^{(k)}\in\rPos^{n}$ such that $\vd^{(k)}=\ve^{(k)}+\vf^{(k)}$.
By the stability guarantee, we know $\beta^{-1}\mb^{(0)}\preceq\mb^{(k)}\preceq\beta\mb^{(0)}$
for some $\beta$. We make $\ma^{T}\mvar E^{(k)}\ma$ invertible for
all $k$ by adding $\frac{1}{10\beta}\vd^{(0)}$ to all of $\vd^{(k)}$and
$\ve^{(k)}$; we will show this only increases the error slightly.

Now, we can compute $\mb^{(0)}$ and $\left(\mb^{(0)}\right)^{-1}$
in $O(sd^{\omega-1})$ time using fast matrix multiplication where
$s=|S|$. Furthermore, using Lemma~\ref{lem:low_rank_old} we can
compute $\mc$ and maintain $\mv^{(k)}$ such that 
\[
\left(\ma^{T}\mvar E^{(k)}\ma\right)^{-1}=\left(\mb^{(1)}\right)^{-1}-\mc^{T}\Delta^{(k)}\mv^{(k)}\Delta^{(k)}\mc
\]
in total time $O(sd^{\omega-1}+\alpha dr\left(\frac{\alpha}{r}\right)^{\omega-2})$.
All that remains is to maintain the contribution from $\vf^{(k)}$. 

Using our representation of $\left(\ma^{T}\mvar E^{(k)}\ma\right)^{-1}$
and the Woodbury matrix identity (Lemma~\ref{lem:Woodbury}), we
have 
\begin{align}
\left(\mb^{(k)}\right)^{-1} & =\left(\ma^{T}\mE^{(k)}\ma+\ma^{T}\mf^{(k)}\ma\right)^{-1}\nonumber \\
 & =\left(\ma^{T}\mE^{(k)}\ma+\ma^{T}\mf^{(k)}\left(\mf^{(k)}\right)^{\dagger}\mf^{(k)}\ma\right)^{-1}\nonumber \\
 & =\left(\ma^{T}\mE^{(k)}\ma\right)^{-1}-\left(\ma^{T}\mE^{(k)}\ma\right)^{-1}\ma^{T}\mf^{(k)}\left(\mm^{(k)}\right)^{-1}\mf^{(k)}\ma\left(\ma^{T}\mE^{(k)}\ma\right)^{-1}\label{eq:B_eq}
\end{align}
where
\begin{align}
\mm^{(k)} & =\left(\left(\mf^{(k)}\right)^{\dagger}\right)^{-1}+\mf^{(k)}\ma\left(\ma^{T}\mE^{(k)}\ma\right)^{-1}\ma^{T}\mf^{(k)}\label{eq:M_eq}
\end{align}
Now note that 
\begin{align*}
\mf^{(k)}\ma\left(\ma^{T}\mE^{(k)}\ma\right)^{-1}\ma^{T}\mf^{(k)} & =\mf^{(k)}\ma\left(\ma^{T}\mE^{(k)}\ma\right)^{-1}\left(\ma^{T}\mE^{(k)}\ma\right)\left(\ma^{T}\mE^{(k)}\ma\right)^{-1}\ma^{T}\mf^{(k)}\\
 & =\left(\mn^{(k)}\right)^{T}\mn^{(k)}
\end{align*}
where 
\begin{align*}
\mn^{(k)} & \defeq\left(\mE^{(k)}\right)^{1/2}\ma\left(\ma^{T}\mE^{(k)}\ma\right)^{-1}\ma^{T}\mf^{(k)}\\
 & =\left(\left(\md^{(0)}\right)^{1/2}+\left(\left(\mE^{(k)}\right)^{1/2}-\left(\md^{(0)}\right)^{1/2}\right)\right)\ma\left(\left(\mb^{(0)}\right)^{-1}-\mc^{T}\Delta^{(k)}\mv^{(k)}\Delta^{(k)}\mc\right)\ma^{T}\mf^{(k)}.
\end{align*}

Note that computing the $\left(\md^{(0)}\right)^{1/2}\ma\left(\mb^{(0)}\right)^{-1}\ma^{T}\mf^{(k)}$
term directly would be prohibitively expensive. Instead, we compute
a spectral approximation to $\left(\mn^{(k)}\right)^{T}\mn^{(k)}$
and show that suffices. Since $\left(\mn^{(k)}\right)^{T}\mn^{(k)}$
is a rank $\alpha$ matrix, we use Theorem~\ref{thm:onsap} to $\mpi\in\R^{\otilde(\alpha)\times d}$
such that 
\begin{equation}
\left(\mn^{(k)}\right)^{T}\mpi^{T}\mpi\mn^{(k)}\approx_{1}\left(\mn^{(k)}\right)^{T}\mn^{(k)}\label{eq:n_pi_gurantee}
\end{equation}
 for all $k$ with high probability. Now, we instead consider the
cost of maintaining
\[
\left(\mn^{(k)}\right)^{T}\mpi^{T}\mpi\mn^{(k)}.
\]

To see the cost of maintaining $\mpi\mn^{(k)}$, we separate the matrix
as follows:
\begin{eqnarray}
\mpi\mn^{(k)} & = & \mpi\left(\md^{(0)}\right)^{1/2}\ma\left(\mb^{(0)}\right)^{-1}\ma^{T}\mf^{(k)}\label{eq:pi_n_terms}\\
 &  & +\mpi\left(\left(\mE^{(k)}\right)^{1/2}-\left(\md^{(0)}\right)^{1/2}\right)\ma\left(\mb^{(0)}\right)^{-1}\ma^{T}\mf^{(k)}\nonumber \\
 &  & -\mpi\left(\md^{(0)}\right)^{1/2}\ma\mc^{T}\Delta^{(k)}\mv^{(k)}\Delta^{(k)}\mc\ma^{T}\mf^{(k)}\nonumber \\
 &  & -\mpi\left(\left(\mE^{(k)}\right)^{1/2}-\left(\md^{(0)}\right)^{1/2}\right)\ma\mc^{T}\Delta^{(k)}\mv^{(k)}\Delta^{(k)}\mc\ma^{T}\mf^{(k)}.\nonumber 
\end{eqnarray}

For the first term in \eqref{eq:pi_n_terms}, note that $\left(\md^{(0)}\right)^{1/2}\ma$
is a $s\times d$ matrix and hence we can precompute $\left(\md^{(0)}\right)^{1/2}\ma\left(\mb^{(0)}\right)^{-1}$
in $\tilde{O}(sd^{\omega-1})$ time and therefore precompute $\mpi\left(\md^{(0)}\right)^{1/2}\ma\left(\mb^{(0)}\right)^{-1}$
in the same amount of time. Note that $\mpi$ is a $\tilde{O}(\alpha)\times d$
matrix, so, we can write
\[
\mpi\left(\md^{(0)}\right)^{1/2}\ma\left(\mb^{(0)}\right)^{-1}\ma^{T}\mf^{(k)}=\mLambda\mk\ma^{T}\mf^{(k)}
\]
where $\mk$ is some explicitly computed $n\times d$ matrix and $\mLambda$
is a diagonal matrix with only $\tilde{O}(\alpha)$ non-zeros. Consequently,
we can use Lemma~\ref{lem:low_rank_helper} to maintain $\mpi(\md^{(0)})^{1/2}\ma(\mb^{(0)})^{-1}\ma^{T}\mf^{(k)}$
in total time $\tilde{O}(sd^{\omega-1}+\alpha dr\left(\frac{\alpha}{r}\right)^{\omega-2}$).

For the second term in \eqref{eq:pi_n_terms}, we can precompute $\ma\left(\mb^{(0)}\right)^{-1}$
in $\tilde{O}(sd^{\omega-1})$ time. Therefore, using Lemma~\ref{lem:low_rank_helper}
we can maintain $\left(\left(\mE^{(k)}\right)^{1/2}-\left(\md^{(0)}\right)^{1/2}\right)\ma\left(\mb^{(0)}\right)^{-1}\ma^{T}\mf^{(k)}$
in total time $\tilde{O}\left(\alpha dr\left(\frac{\alpha}{r}\right)^{\omega-2}\right)$
and by Theorem~\ref{thm:onsap} we can maintain $\mpi\left(\left(\mE^{(k)}\right)^{1/2}-\left(\md^{(0)}\right)^{1/2}\right)\ma\left(\mb^{(0)}\right)^{-1}\ma^{T}\mf^{(k)}$
in the same amount of time.

For the last two terms in \eqref{eq:pi_n_terms}, we use Lemma~\ref{lem:low_rank_helper}
on $\Delta^{(k)}\mc\ma^{T}\mf^{(k)}$, $\mpi\left(\md^{(0)}\right)^{1/2}\ma\mc^{T}\Delta^{(k)}$
and $\left(\left(\mE^{(k)}\right)^{1/2}-\left(\md^{(0)}\right)^{1/2}\right)\ma\mc^{T}\Delta^{(k)}$
all in total time $\tilde{O}\left(\alpha dr\left(\frac{\alpha}{r}\right)^{\omega-2}\right)$.
Note that all of those matrices including $\mv^{(k)}$ are essentially
$\tilde{O}(\alpha)\times\tilde{O}(\alpha)$ matrices if we discard
the zero rows and columns. So, having those matrices explicitly computed,
we can compute the last two terms in an additional of $\tilde{O}(\alpha^{\omega})$
time per iteration.

Finally computing $\left(\mn^{(k)}\right)^{T}\mpi^{T}\mpi\mn^{(k)}$
only requires an additional $\otilde(\alpha^{\omega})$ time per-iteration.
Hence, the total cost of maintaining $\left(\mn^{(k)}\right)^{T}\mpi^{T}\mpi\mn^{(k)}$
is 
\[
\tilde{O}\left(\alpha dr\left(\frac{\alpha}{r}\right)^{\omega-2}+sd^{\omega-1}+r\alpha^{\omega}\right).
\]

Using \eqref{eq:B_eq} and \eqref{eq:M_eq}, we have shown how to
approximate $\left(\mb^{(k)}\right)^{-1}$. Now, we show how to compute
a better approximation. Recall from \eqref{eq:B_eq} that
\[
\left(\mb^{(k)}\right)^{-1}=\left(\ma^{T}\mE^{(k)}\ma\right)^{-1}+\left(\ma^{T}\mE^{(k)}\ma\right)^{-1}\ma^{T}\mf^{(k)}\left(\mm^{(k)}\right)^{-1}\mf^{(k)}\ma\left(\ma^{T}\mE^{(k)}\ma\right)^{-1}.
\]
Using Lemma~\ref{lem:low_rank_old} as we have argued, we can apply
the first term $\left(\ma^{T}\mE^{(k)}\ma\right)^{-1}$ exactly to
a vector in $\otilde(d^{2})$ time. The only difficulty in applying
the second term to a vector comes from the term $\left(\mm^{(k)}\right)^{-1}$.
Using \eqref{eq:n_pi_gurantee}, we have
\[
\mm^{(k)}\approx_{1}\left(\mf^{(k)}\right)^{\dagger}+\left(\mn^{(k)}\right)^{T}\mpi^{T}\mpi\mn^{(k)}
\]
and hence
\[
\left(\mm^{(k)}\right)^{-1}\approx_{1}\left(\left(\mf^{(k)}\right)^{\dagger}+\left(\mn^{(k)}\right)^{T}\mpi^{T}\mpi\mn^{(k)}\right)^{-1}.
\]
Since computing $\left(\left(\mf^{(k)}\right)^{\dagger}+\left(\mn^{(k)}\right)^{T}\mpi^{T}\mpi\mn^{(k)}\right)^{-1}$
only requires $\otilde(\alpha^{\omega})$ time and \eqref{eq:M_eq}
shows that we can apply $\mm^{(k)}$ to a vector exactly in $\otilde(d^{2})$
time, by Lemma \ref{lem:approx_implies_solver} we have a symmetric
matrix $\mm_{\epsilon}^{(k)}\approx_{\epsilon}\mm^{(k)}$ such that
we can apply $\left(\mm_{\epsilon}^{(k)}\right)^{-1}$ to a vector
in $\otilde(d^{2}\log(\epsilon^{-1}))$ for any $\epsilon>0$. Hence,
we obtain 
\[
\mLambda_{\epsilon}^{(k)}\approx_{\epsilon}\left(\ma^{T}\mE^{(k)}\ma\right)^{-1}\ma^{T}\mf^{(k)}\left(\mm^{(k)}\right)^{-1}\mf^{(k)}\ma\left(\ma^{T}\mE^{(k)}\ma\right)^{-1}
\]
such that we can apply $\mLambda_{\epsilon}^{(k)}$ to a vector in
$\otilde(d^{2}\log(\epsilon^{-1})$) time. All that remains is to
compute what value of $\epsilon$ is needed for this to be a spectral
approximation to the $\left(\mb^{(k)}\right)^{-1}$.

Recall that by the assumption $\beta^{-1}\mb^{(0)}\preceq\mb^{(k)}\preceq\beta\mb^{(0)}$.
As we mentioned in the beginning, we replaced $\vd^{(k)}$ with $\vd^{(k)}+\frac{1}{10\beta}\vd^{(0)}$
and compute a constant spectral approximation to the new $\mb^{(k)}$
that will suffice for the theorem statement. Consequently 
\[
\mLambda^{(k)}\defeq\left(\ma^{T}\mE^{(k)}\ma\right)^{-1}-\left(\mb^{(k)}\right)^{-1}\specLeq10\beta\left(\mb^{(0)}\right)^{-1}
\]
Furthermore, since $\vf^{(k)}\geq0$ we have $\mvar 0\preceq\mLambda^{(k)}$
and therefore
\[
\mLambda^{(k)}-10\epsilon\beta\left(\mb^{(0)}\right)^{-1}\specLeq\mLambda_{\varepsilon}^{(k)}\preceq\mLambda^{(k)}+10\epsilon\beta\left(\mb^{(0)}\right)^{-1}.
\]
Using the assumption $\beta^{-1}\mb^{(0)}\preceq\mb^{(k)}\preceq\beta\mb^{(0)}$
again, we have
\[
\mLambda^{(k)}-10\epsilon\beta^{2}\left(\mb^{(k)}\right)^{-1}\specLeq\mLambda_{\varepsilon}^{(k)}\preceq\mLambda^{(k)}+10\epsilon\beta^{2}\left(\mb^{(k)}\right)^{-1}.
\]
Picking $\epsilon=O(\frac{1}{20\beta^{2}})$, we have a matrix $\left(\ma^{T}\mE^{(k)}\ma\right)^{-1}-\mLambda_{\epsilon}^{(k)}\approx_{O(1)}\left(\mb^{(k)}\right)^{-1}$
which we can apply in $\otilde(d^{2}\log(\beta))$ time. Furthermore,
again since $\beta^{-1}\mb^{(0)}\preceq\mb^{(k)}\preceq\beta\mb^{(0)}$
we see that our replacement of d $\vd^{(k)}$ with $\vd^{(k)}+\frac{1}{10\beta}\vd^{(0)}$only
affected our approximation quality by a constant.\end{proof}

\section{An algorithm for the $\sigma$ Stable Case \label{sec:sigma_solution}}

In this section, we provide our algorithm for solving the inverse
maintenance problem under the $\sigma$ stability assumption. The
central result of this section is the following:
\begin{thm}
\label{thm:iteration_solver} Suppose that the inverse maintenance
problem satisfies the $\sigma$ stability assumption. Then Algorithm~\ref{alg:sparseframework}
maintains a $\tilde{O}(\nnz(\ma)+d^{2})$-time solver with high probability
in total time $\otilde(d^{\omega}+r(\nnz(\ma)+d^{2}))$ where $r$
is the number of rounds.
\end{thm}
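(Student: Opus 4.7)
The plan is to maintain a leverage-score based sparsifier $\mh^{(k)}$, a diagonal matrix with only $\tilde{O}(d)$ nonzero entries, so that $\ma^T\mh^{(k)}\ma \approx_{O(1)} \ma^T\md^{(k)}\ma$, and to feed the sequence of low-rank changes of $\mh^{(k)}$ into the inverse maintenance routine of Theorem~\ref{thm:low_rank_main}. Having $(\ma^T\mh^{(k)}\ma)^{-1}$ accessible as a $\tilde{O}(d^2)$-time operator gives a constant-quality preconditioner for $\ma^T\md^{(k)}\ma$; preconditioned Chebyshev or conjugate-gradient iteration then converges in $O(\log\epsilon^{-1})$ rounds, each costing $O(\nnz(\ma)+d^2)$, which matches the desired solver time.

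Preprocessing proceeds as follows. Apply Theorem~\ref{thm:onsap} to $(\md^{(0)})^{1/2}\ma$ to produce a $\tilde{O}(d)\times d$ sketch in $\tilde{O}(\nnz(\ma))$ time, factor the associated Gram matrix in $\tilde{O}(d^\omega)$ time to obtain an initial solver, feed it to Lemma~\ref{lem:computing_leverage_scores} to obtain approximate leverage scores of $(\md^{(0)})^{1/2}\ma$ in $\tilde{O}(\nnz(\ma)+d^\omega)$ time, and build $\mh^{(0)}$ via Lemma~\ref{lem:spectral_sparsification}. For round $k>0$, refresh the leverage-score estimates using the current solver (cost $\tilde{O}(\nnz(\ma)+d^2)$), then, for each currently sampled row $i$, compare $(d_i^{(k)},\tilde\sigma_i^{(k)})$ against the pair $(d_i^{(\mathrm{apr})},\tilde\sigma_i^{(\mathrm{apr})})$ recorded at the last resampling of row $i$ and resample only when one of them has moved by more than a fixed constant multiplicative factor. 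Between triggers the sparsifier guarantee follows from a standard multiplicative-perturbation argument for leverage-score sampling. Using the $\sigma$ stability condition $\sum_i\sigma_i^{(k)}(\log(d_i^{(k)}/d_i^{(k-1)}))^2\le 0.01$ together with its $\ell_\infty$ companion, and the fact that leverage scores are themselves Lipschitz in the $\vsigma$-norm with respect to $\log\vd$ changes, a potential-function argument bounds the expected number of low-rank changes to $\mh^{(k)}$ over the run. Partitioning the horizon into appropriately sized blocks and applying Theorem~\ref{thm:low_rank_main} on each block with initial support $s=\tilde{O}(d)$ then gives the target amortized maintenance cost of $\tilde{O}(d^2)$ per round plus a one-time $\tilde{O}(d^\omega)$ preprocessing.

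The main obstacle is the adaptivity issue flagged in Section~\ref{sec:randomness}. The algorithm uses its own randomized solver (via Lemma~\ref{lem:computing_leverage_scores}) to estimate the leverage scores driving the sampler, and the caller may in principle pick $\vd^{(k+1)}$ as a function of the solver's random output. Under such adaptive queries the independence assumptions behind Lemmas~\ref{lem:spectral_sparsification} and~\ref{lem:computing_leverage_scores} fail and the potential argument above collapses. Following the approach of Section~\ref{sec:randomness}, I would force each solver invocation to behave as a \emph{linear} solver in the sense of the paper's definition --- using a non-adaptive iteration like Chebyshev with a predetermined step schedule --- with all randomness drawn once at initialization, so that its output is a deterministic function of the input from the caller's perspective. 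Then the trajectory $\vd^{(1)},\ldots,\vd^{(r)}$ is measurable with respect to randomness independent of the sampling choices used to maintain $\mh^{(k)}$, and the concentration arguments underlying leverage-score sampling go through. Minor remaining issues --- absorbing the $O(1)$ spectral distortion into the preconditioner analysis and handling coordinates with $d_i^{(0)}=0$ that later become positive --- are direct extensions of arguments already appearing in the proof of Theorem~\ref{thm:low_rank_main}.
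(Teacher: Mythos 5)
Your overall architecture matches the paper's: maintain a leverage-score sparsifier $\mh^{(k)}$ with $\tilde{O}(d)$ nonzeros, resample a row only when $d_i$ or its approximate leverage score moves by a constant factor, bound the expected number of resamplings via the $\sigma$-norm stability of leverage scores (the paper's Lemma~\ref{lem:lever_score_continuity} feeding into Lemma~\ref{lem:alg_diag_change_bound}), feed the resulting low-rank update sequence into Theorem~\ref{thm:low_rank_main} with $s=\tilde{O}(d)$, and restart in blocks to amortize. That part is sound and is essentially the paper's proof.

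The genuine gap is in your resolution of the adaptivity issue. You propose to make each solver invocation a deterministic function of its input given coins drawn once at initialization, and conclude that the trajectory $\vd^{(1)},\ldots,\vd^{(r)}$ is then "measurable with respect to randomness independent of the sampling choices used to maintain $\mh^{(k)}$." This does not follow. The solver is built from the sparsifier $\mh^{(k)}$, so its output is a function of the very sampling coins you need independence from; fixing the coins up front and using a non-adaptive iteration makes the output deterministic \emph{given} those coins, but it still leaks them. A caller (or the algorithm's own leverage-score estimation via Lemma~\ref{lem:computing_leverage_scores}) that observes the output can in principle infer which rows were sampled into $\mh^{(k)}$ and choose $\vd^{(k+1)}$ to force $\omega(r^2)$ resamplings, collapsing the counting argument in Lemma~\ref{lem:alg_diag_change_bound}. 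The paper needs the substantially stronger $\mathtt{NoisySolver}$ construction of Section~\ref{sec:randomness}: it injects a calibrated Gaussian perturbation $\alpha\vy_2$ with covariance matched to $(\ma^T\ma)^{-1}$ so that, by a KL-divergence/Pinsker computation (Theorem~\ref{thm:randomness}), the output is within total variation $1/n^3$ of an \emph{ideal} solver whose distribution depends only on the true matrix $\ma^T\md^{(k)}\ma$ and fresh Gaussian noise --- hence reveals essentially nothing about the sampling randomness over any polynomially bounded number of queries. Without this (or an equivalent information-hiding argument), the independence hypothesis of your potential-function bound is unjustified and the proof is incomplete.
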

To prove this we first provide a technical lemma showing that leverage
scores are stable in leverage score number assuming $\sigma$ stability
(See Section~\ref{sec:leverage_score_stability}). Using this lemma
we then prove the Theorem~\ref{thm:iteration_solver} (See Section~\ref{sub:sigma_stable_algorithm}).
This proof will assume that the randomness we use to maintain our
solvers is independent from the output of our solvers. In Section~\ref{sec:randomness}
we show how to make this assumption hold.

\subsection{Leverage Scores are Stable under $\sigma$ Stability}

\label{sec:leverage_score_stability}

Here we show that leverage scores are stable in the leverage score
norm assuming $\sigma$ stability. This technical lemma, Lemma~\ref{lem:lever_score_continuity},
is crucial to showing that we do not need to perform too many low-rank
updates on our sparsifier in our solution to the inverse maintenance
problem under the $\sigma$ stability assumption. (See Section~\ref{sub:approach}
for more intuition.)
\begin{lem}
\label{lem:lever_score_continuity}For all $\ma\in\R^{n\times d}$
and any vectors $\vx,\vy\in\rPos^{n}$ such that $\norm{\ln(\vx)-\ln(\vy)}_{\infty}\leq\epsilon$,
we have
\[
\norm{\ln\vsigma_{\ma}(\vx)-\ln\vsigma_{\ma}(\vy)}_{\vsigma_{\ma}(\vx)}\leq e^{\epsilon}\norm{\ln\vx-\ln\vy}_{\vsigma_{\ma}(\vx)}\,.
\]
\end{lem}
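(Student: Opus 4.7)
The plan is to compute the Jacobian of $\ln\vsigma_{\ma}$ with respect to $\ln\vd$ explicitly, bound its operator norm in the $\mSigma$-weighted $\ell_{2}$ norm by recognizing a hidden graph-Laplacian structure, and then integrate along the log-linear path from $\vx$ to $\vy$.

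First I would parameterize by $\vw=\ln\vd$ and recall that $\sigma_{i}(\vd)=\mpi_{ii}(\vd)$ for the orthogonal projection $\mpi(\vd)\defeq\md^{1/2}\ma(\ma^{T}\md\ma)^{-1}\ma^{T}\md^{1/2}$. Using $\partial\md/\partial w_{j}=d_{j}\vec{e}_{j}\vec{e}_{j}^{T}$ together with the matrix-inverse differentiation rule, a short symmetric calculation yields
\[
\frac{\partial\sigma_{i}}{\partial w_{j}}=\delta_{ij}\sigma_{i}-\mpi_{ij}^{2},\qquad\frac{\partial\ln\sigma_{i}}{\partial w_{j}}=\delta_{ij}-\frac{\mpi_{ij}^{2}}{\sigma_{i}}.
\]
Hence the Jacobian of $\ln\vsigma$ in $\vw$ is $\mj\defeq\iMatrix-\mSigma^{-1}\mpi^{(2)}$, where $\mpi^{(2)}$ is the Schur (entrywise) square of $\mpi$ and $\mSigma=\diag(\vsigma)$.

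Next I would establish the key spectral bound $\|\mj\vDelta\|_{\mSigma}\le\|\vDelta\|_{\mSigma}$ for every $\vDelta\in\R^{n}$. Indeed $\mpi^{(2)}$ is PSD (Schur product of PSD matrices), and since $\mpi^{2}=\mpi$ we have $\sum_{j}\mpi_{ij}^{2}=\sigma_{i}$. Therefore $\mSigma-\mpi^{(2)}$ is a symmetric matrix with nonnegative diagonal $\sigma_{i}(1-\sigma_{i})$, nonpositive off-diagonals $-\mpi_{ij}^{2}$, and vanishing row sums, i.e.\ a weighted graph Laplacian, and so is PSD. Equivalently, $\mk\defeq\mSigma^{-1}\mpi^{(2)}$ is self-adjoint in the $\mSigma$-inner product with spectrum in $[0,1]$, so $\mj=\iMatrix-\mk$ has $\mSigma$-operator norm at most $1$.

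Finally I would integrate along the log-linear path $\vd(t)\defeq\exp((1-t)\ln\vx+t\ln\vy)$, $t\in[0,1]$. Setting $\vDelta\defeq\ln\vy-\ln\vx$, the chain rule gives
\[
\ln\sigma_{i}(\vy)-\ln\sigma_{i}(\vx)=\int_{0}^{1}(\mj(t)\vDelta)_{i}\,dt,
\]
where $\mj(t)$ is the Jacobian at $\vd(t)$. A Minkowski-in-$t$ inequality combined with the operator bound from the previous step turns this into a bound on $\|\ln\vsigma(\vy)-\ln\vsigma(\vx)\|_{\vsigma(\vx)}$, provided we can translate between the $\vsigma(\vd(t))$-weighted norm (in which the operator bound lives) and the $\vsigma(\vx)$-weighted norm (in which the statement lives). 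This translation is controlled by $\|\vw(t)-\ln\vx\|_{\infty}\le t\epsilon$ and the scale-invariance of leverage scores, which together give $\vsigma_{\ma}(\vd(t))\in[e^{-O(\epsilon)},e^{O(\epsilon)}]\vsigma_{\ma}(\vx)$ coordinatewise. Collecting the resulting exponential factors and simplifying yields the claimed bound with right-hand side $e^{\epsilon}\|\ln\vx-\ln\vy\|_{\vsigma_{\ma}(\vx)}$. The main obstacle is the spectral inequality in the middle step; once $\mSigma-\mpi^{(2)}$ is recognized as a graph Laplacian the remainder is calculus plus bookkeeping of $e^{\pm\epsilon}$ factors.
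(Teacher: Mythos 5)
Your proposal is correct and follows essentially the same route as the paper's proof: the same log-linear path $\vec{\theta}_t$, the same Jacobian $\iMatrix-\mSigma^{-1}\mpi^{(2)}$ (which the paper imports from a cited lemma rather than rederiving), the same observation that $\mSigma-\mpi^{(2)}$ has zero row sums and nonpositive off-diagonals (the paper phrases this as symmetric diagonal dominance, you as a graph Laplacian), and the same integration with $e^{O(\epsilon)}$ norm-equivalence bookkeeping along the path.
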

\begin{proof}
For $t\in[0,1]$, let $\ln\vec{\theta}_{t}$ denote a straight line
from $\ln\vx$ to $\ln\vy$ with $\vec{\theta}_{0}=\vx$ and $\vec{\theta}_{1}=\vy$
or equivalently let $\vec{\theta}_{t}\defeq\exp(\ln x+t(\ln y-\ln x))$).
Since $\norm{\ln(\vx)-\ln(\vy)}_{\infty}\leq\epsilon$ we have for
all $i\in[n]$
\begin{align}
\vLever_{\ma}(\vx)_{i} & =\indicVec i^{T}\sqrt{\mx}\ma\left(\ma^{T}\mx\ma\right)^{-1}\ma^{T}\sqrt{\mx}\indicVec i\nonumber \\
 & \approx_{2\epsilon}\indicVec i^{T}\sqrt{\Theta}\ma\left(\ma^{T}\Theta\ma\right)^{-1}\ma^{T}\sqrt{\Theta}\indicVec i\nonumber \\
 & =\vsigma_{\ma}(\vec{\theta}_{i})\label{eq:sigma_inequality}
\end{align}
Consequently, for all $\vz$, we have $\norm{\vz}_{\vsigma_{\ma}(\vvar{\theta}_{t})}\leq e^{\epsilon}\norm{\vz}_{\vsigma_{\ma}}$
and by Jensen's inequality we have
\begin{equation}
\norm{\ln\vLever_{\ma}(\vx)-\ln\vLever_{\ma}(\vy)}_{\sigma_{\ma}(\vx)}\leq\normFull{\int_{0}^{1}\left(\frac{d}{dt}\ln\vLever_{\ma}(\vec{\theta}_{t})\right)dt}_{\sigma_{\ma}(\vx)}\leq e^{\epsilon}\cdot\int_{0}^{1}\normFull{\frac{d}{dt}\ln\vLever_{\ma}(\vec{\theta}_{t})}_{\sigma_{\ma}(\vvar{\theta}_{t})}dt\label{eq:sigma_proof_2}
\end{equation}
Now, for all $z\in\R_{>0}^{n}$ let $\jacobian_{\ln\vz}(\ln\vLever_{\ma}(\vz))$
denote the Jacobian matrix $\left[\frac{\partial\ln\vLever_{\ma}(\vz)_{i}}{\partial\ln\vz_{j}}\right]_{ij}=\left[\frac{\partial\ln\vLever_{\ma}(\vz)_{i}}{\partial\vz_{j}}z_{j}\right]_{ij}$
and suppose that for all $\vz$ and $\vu$ we have
\begin{equation}
\norm{\mj_{\ln\vz}(\ln\vsigma_{\ma}(\vz))\vu}_{\vsigma_{\ma}}\leq\norm{\vu}_{\vsigma_{\ma}(\vz)}.\label{eq:J_norm_1}
\end{equation}
Then
\begin{align}
\int_{0}^{1}\normFull{\frac{d}{dt}\ln\vLever_{\ma}(\vec{\theta}_{t})}_{\sigma_{\ma}(\vvar{\theta}_{t})}dt & =\int_{0}^{1}\norm{\mj_{\ln\vvar{\theta}_{t}}(\ln\vsigma_{\ma}(\vvar{\theta}_{t}))\left(\frac{d}{dt}\ln\vvar{\theta}_{t}\right)}_{\vsigma_{\ma}(\vvar{\theta}_{t})}\leq e^{\epsilon}\norm{\ln\vx-\ln\vy}_{\vsigma_{\ma}(\vx)}\label{eq:sigma_proof_3}
\end{align}
where again used \ref{eq:sigma_inequality} as well as the definition
of $\vvar{\theta}_{t}$. All that remains is to prove \eqref{eq:J_norm_1}.
For this, we first note that in \cite[Ver 1, Lemma 36]{lsInteriorPoint}
we showed that $\jacobian_{\vz}(\vLever_{\ma}(\vz))=\left(\Sigma_{\ma}(\vz)-\mm\right)\mz^{-1}$
where $\mm_{ij}(\vz)\defeq\left(\sqrt{\mz}\ma(\ma^{T}\mz\ma)^{-1}\ma^{T}\sqrt{\mz}\right)^{2}$.
Consequently $\jacobian_{\ln\vz}(\ln\vLever_{\ma}(\vz))=\Sigma_{\ma}(\vz)^{-1}\left(\Sigma_{\ma}(\vz)-\mm(\vz)\right)$.
Now note that
\begin{eqnarray*}
\sum_{i}\mm_{ij} & = & \sum_{i}\left(\sqrt{\mz}\ma(\ma^{T}\mz\ma)^{-1}\ma^{T}\sqrt{\mz}\right)_{ij}^{2}\\
 & = & \left\langle \sqrt{\mz}\ma(\ma^{T}\mz\ma)^{-1}\ma^{T}\sqrt{\mz}\onesVec_{j},\sqrt{\mz}\ma(\ma^{T}\mz\ma)^{-1}\ma^{T}\sqrt{\mz}\onesVec_{j}\right\rangle \\
 & = & \left(\sqrt{\mz}\ma(\ma^{T}\mz\ma)^{-1}\ma^{T}\sqrt{\mz}\right)_{jj}\\
 & = & \left(\vLever_{\ma}(\vz)\right)_{j}.
\end{eqnarray*}
Consequently, $\Sigma_{\ma}(\vz)-\mm$ is a symmetric diagonally dominant
matrix and therefore $\Sigma_{\ma}(\vz)\succeq\Sigma_{\ma}(\vz)-\mm\succeq\mZero$
and $\mZero\preceq\Sigma_{\ma}(\vz)^{1/2}\jacobian_{\ln\vz}(\ln\vLever_{\ma}(\vz))\Sigma_{\ma}(\vz)^{-1/2}\preceq\iMatrix$.
Using this fact, we have that for all $\vz\in\R_{>0}^{n}$ and $\vu\in\R^{n}$
\[
\norm{\jacobian_{\ln\vz}(\ln\vLever_{\ma}(\vz))\vu}_{\vsigma_{\ma}(\vz)}=\norm{\Sigma_{\ma}(\vz)^{1/2}\vu}_{\left(\Sigma_{\ma}(\vz)^{-1/2}\left(\Sigma_{\ma}(\vz)-\mm\right)\Sigma_{\ma}(\vz)^{-1/2}\right)^{2}}\leq\norm{\vu}_{\vsigma_{\ma}(\vz)}
\]
and this proves \eqref{eq:J_norm_1}. Combining \eqref{eq:sigma_proof_2}
and \eqref{eq:sigma_proof_3} yields the result.
\end{proof}

\subsection{Algorithm for $\sigma$ Stability}

\label{sub:sigma_stable_algorithm}

Here we prove Theorem~\ref{thm:iteration_solver}. The full pseudocode
for our algorithm, with the exception of how we compute leverage scores
and maintain the inverse of $\ma\mh^{(k)}\ma$ can be seen in Algorithm~\ref{alg:sparseframework}. 

\begin{algorithm2e}[h]
\caption{Algorithm for $\sigma$ Stability Assumption }

\label{alg:sparseframework}

\SetAlgoLined

\textbf{Input: }Initial point $\vec{d}^{(0)}\in\R_{>0}^{n}$.

Set $\vd^{(old)}:=\vec{d}^{(0)}$ and $\gamma\defeq1000c_{s}\log d$
where $c_{s}$ defined in Lemma~\ref{lem:spectral_sparsification}.

Use Lemma~\ref{lem:computing_leverage_scores} to find $\sigma^{(apr)}$
such that $0.99\sigma_{i}^{(apr)}\leq\sigma(\vd^{(0)})_{i}\leq1.01\sigma_{i}^{(apr)}$.

For each $i\in[n]$ : let $h_{i}^{(0)}:=d_{i}/\min\{1,\gamma\cdot\sigma_{i}^{(apr)}\}$
with probability $\min\{1,\gamma\cdot\sigma_{i}^{(apr)}\}$ 

$\quad$and is set to $0$ otherwise.

$\mq^{(0)}\defeq\ma^{T}\mh^{(0)}\ma$.

Let $\mk^{(0)}$ be an approximate inverse of $\mq^{(0)}$ computed
using Theorem~\ref{thm:low_rank_main}.

\textbf{Output: }A $\tilde{O}(d^{2}+\nnz(\ma))$-time linear solver
for $\ma^{T}\md^{(0)}\ma$ (using Theorem~\ref{thm:iteration_solver_strong_promise}
on $\mk^{(0)}$).

\For{each round $k\in[r]$}{

\textbf{Input:} Current point\textbf{ }$\vd^{(k)}\in\R_{>0}^{n}$.

Use Lemma \ref{lem:computing_leverage_scores} and the solver from
the previous round to find $\sigma^{(apr)}$ such that 

$\quad$$0.99\sigma_{i}^{(apr)}\leq\sigma(\vd^{(k)})_{i}\leq1.01\sigma_{i}^{(apr)}$.

\For{each coordinate $i\in[n]$}{

\uIf{ either $0.9\sigma_{i}^{(old)}\leq\sigma_{i}^{(apr)}\leq1.1\sigma_{i}^{(old)}$
or $0.9d_{i}^{(old)}\leq d_{i}^{(k)}\leq1.1d_{i}^{(old)}$ is violated}{

$d_{i}^{(old)}:=d_{i}^{(k)}$.

$\sigma_{i}^{(old)}:=\sigma_{i}^{(apr)}$.

$h_{i}^{(k)}:=d_{i}^{(k)}/\min\{1,\gamma\cdot\sigma_{i}^{(apr)}\}$
with probability $\min\{1,\gamma\cdot\sigma_{i}^{(apr)}\}$ 

$\enspace\enspace\enspace$and is set to $0$ otherwise.

}\Else{

$h_{i}^{(k)}:=h_{i}^{(k-1)}.$

}

}

$\mq^{(k)}\defeq\ma^{T}\mh^{(k)}\ma$.

Let $\mk^{(k)}$ be an approximate inverse of $\mq^{(k)}$ computed
using Theorem~\ref{thm:low_rank_main}.

\textbf{Output:} A linear $\tilde{O}(d^{2}+\nnz(\ma))$-time solver
for $\ma^{T}\md^{(k)}\ma$ (using Theorem~\ref{thm:iteration_solver_strong_promise}
on $\mk^{(k)}$).

}

\end{algorithm2e}

First, we bound the number of coordinates $\mh$ that will change
during the algorithm.
\begin{lem}
\label{lem:alg_diag_change_bound} Suppose that the changes of $\vd$
and the error occurred in computing $\vsigma$ is independent of our
sampled matrix. Under the $\sigma$ stability guarantee, during first
$r$ iterations of Algorithm~\ref{alg:sparseframework}, the expected
number of coordinates changes in $\mh^{(k)}$ over all $k$ is $O(r^{2}\log d)$.\end{lem}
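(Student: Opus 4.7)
The plan is to decompose the expected number of changes in $\mh^{(k)}$ into contributions from individual \emph{trigger} events --- those $(i,k)$ for which either $0.9\sigma_i^{(old)}\le\sigma_i^{(apr)}\le1.1\sigma_i^{(old)}$ or $0.9d_i^{(old)}\le d_i^{(k)}\le1.1d_i^{(old)}$ fails --- and then to bound the resulting weighted trigger count using the $\sigma$ stability assumption on $\vd$ together with Lemma~\ref{lem:lever_score_continuity} applied to $\vsigma_\ma(\vd)$.

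First I would observe that $h_i^{(k)}$ can change only at a trigger: at such a step the algorithm draws an independent Bernoulli-scaled sample taking value $d_i^{(k)}/p_i^{(k)}$ with probability $p_i^{(k)}:=\min\{1,\gamma\sigma_i^{(apr)}\}$ and $0$ otherwise. Let $t_{i,0}:=0<t_{i,1}<\cdots<t_{i,N_i}$ denote the trigger times for coordinate $i$. Using the independence hypothesis to treat each new sample as independent of the entire prior history, I would compute
\[
\Pr\!\bigl[h_i^{(t_{i,j})}\neq h_i^{(t_{i,j-1})}\bigr]\;=\;p_i^{(t_{i,j})}+p_i^{(t_{i,j-1})}-p_i^{(t_{i,j})}p_i^{(t_{i,j-1})}\;\le\;p_i^{(t_{i,j})}+p_i^{(t_{i,j-1})},
\]
since on the event that the previous sample is nonzero the new independent draw almost surely gives a different value. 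Summing across triggers and using Lemma~\ref{lem:computing_leverage_scores} to pass from $\sigma_i^{(apr)}$ to $\sigma_i^{(k)}$ up to a $1\%$ factor yields $\mathbb{E}[\text{number of changes}]\le O(\gamma)\sum_{(i,k)\,\text{trig}}\sigma_i^{(k)}$.

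It then suffices to show $\sum_{(i,k)\,\text{trig}}\sigma_i^{(k)}=O(r^2)$. I would split the triggers into $\vd$-triggers and $\vsigma_\ma(\vd)$-triggers according to which tolerance failed. For a $\vd$-trigger at $(i,t_{i,j})$, the tolerance conditions on steps $k\in W_{i,j}:=(t_{i,j-1},t_{i,j}]$ combined with Lemma~\ref{lem:computing_leverage_scores} imply $\sigma_i^{(k)}=\Theta(\sigma_i^{(t_{i,j})})$ throughout $W_{i,j}$ (in particular, the $\vsigma$-tolerance still holds at the endpoint of a $\vd$-trigger). Combining the trigger inequality $(\log d_i^{(t_{i,j})}-\log d_i^{(t_{i,j-1})})^2\ge(\ln 1.1)^2$ with Cauchy--Schwarz
\[
(\log d_i^{(t_{i,j})}-\log d_i^{(t_{i,j-1})})^2\le|W_{i,j}|\sum_{k\in W_{i,j}}(\log d_i^{(k)}-\log d_i^{(k-1)})^2,
\]
the bound $|W_{i,j}|\le r$, and the fact that the windows partition $\{1,\ldots,r\}$, one obtains
\[
\sum_{(i,j)\,\vd\text{-trig}}\sigma_i^{(t_{i,j})}\;\le\;O(r)\sum_{k=1}^{r}\sum_i\sigma_i^{(k)}\bigl(\log d_i^{(k)}-\log d_i^{(k-1)}\bigr)^2\;\le\;O(r)\cdot 0.01 r\;=\;O(r^2),
\]
using the $\sigma$-stability $\|\log\vd^{(k)}-\log\vd^{(k-1)}\|_{\vsigma_\ma(\vd^{(k)})}\le 0.1$. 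The $\vsigma$-triggers are handled symmetrically after invoking Lemma~\ref{lem:lever_score_continuity} with $\epsilon=0.1$ to conclude $\|\log\vsigma_\ma(\vd^{(k)})-\log\vsigma_\ma(\vd^{(k-1)})\|_{\vsigma_\ma(\vd^{(k)})}=O(1)$, and repeating the window argument with $\log\sigma$ playing the role of $\log d$. Combined with the per-trigger bound, this yields the desired $O(r^2\log d)$.

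The main obstacle is the ``weight-transfer'' step of pulling $\sigma_i^{(t_{i,j})}$ inside the Cauchy--Schwarz sum over $k\in W_{i,j}$. For $\vd$-triggers the $\vsigma$-tolerance holds throughout the window, so this is immediate, and the independence hypothesis in the lemma statement is precisely what allows conditioning on the full trajectory of $(\vd^{(k)},\sigma^{(apr,k)})$ before evaluating the sampling expectations in the per-trigger probability computation above. For $\vsigma$-triggers the leverage score $\sigma_i$ can in principle jump at the endpoint $k=t_{i,j}$, but the per-step bound $\sigma_i^{(k)}(\log\sigma_i^{(k)}-\log\sigma_i^{(k-1)})^2=O(1)$ coming from Lemma~\ref{lem:lever_score_continuity} ensures that such a jump contributes only $O(1)$ to the weighted sum at that step and does not break the overall $O(r^2)$ bound.
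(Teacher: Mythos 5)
Your proposal is correct and follows essentially the same route as the paper's proof: bound the per-trigger probability of altering $\mh$ by $O(\gamma\,\sigma_{\ma}(\vd)_i)$ via the union bound over the old and new samples (using the independence hypothesis and the fact that the sampling probability changes by at most a multiplicative constant between triggers), then charge each trigger $\Omega(\sigma_i/r)$ against the total stability budget $\sum_k\norm{\ln\vd^{(k+1)}-\ln\vd^{(k)}}_{\vsigma}^{2}=O(r)$ (and its $\ln\vsigma$ analogue from Lemma~\ref{lem:lever_score_continuity}) via Cauchy--Schwarz over inter-trigger windows of length at most $r$. Your explicit window notation and the remark on transferring the weight $\sigma_i^{(t_{i,j})}$ across the window correspond to the paper's observation that $\sigma_{\ma}(\vd)_i$ changes by at most a constant factor between re-samples, so there is no substantive difference.
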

\begin{proof}
Since our error in computing $\sigma$ is smaller than the re-sampling
threshold on how much change of $\sigma$, the re-sampling process
for the $i^{th}$ row happens only when either $\sigma_{\ma}(\vd)_{i}$
or $d_{i}$ changes by more than a multiplicative constant. In order
for re-sampling to affect $\ma^{T}\mh\ma$, it must be the case that
it is either currently in $\ma^{T}\mh\ma$ or about to be put in $\ma^{T}\mh\ma$.
However, since whenever re-sampling occurs the resampling probability
has changed by at most a multiplicative constant, we have that both
these events happen with probability $O(\gamma\cdot\sigma_{\ma}(\vd)_{i})$
using the independence between $\vd$ and the approximate $\vsigma$.
By union bound we have that the probability of sampling row $i$ changing
the matrix $\ma^{T}\mh\ma$ is $O(\sigma_{\ma}(\vd)_{i}\log(d))$.

Observe that whenever we re-sampled the $i^{th}$ row, either $\sigma_{\ma}(\vd)_{i}$
or $d_{i}$ has changed by more than a multiplicative constant. Let
$k_{1}$ be the last iteration we re-sampled the $i^{th}$ row and
let $k_{2}$ be the current iteration. Then, we know that $\sum_{k=k_{1}}^{k_{2}-1}\left|\ln\sigma_{\ma}(\vd^{(k+1)})_{i}-\ln\sigma_{\ma}(\vd^{(k)})_{i}\right|=\Omega(1)$
or $\sum_{k=k_{1}}^{k_{2}-1}\left|\ln\vd_{i}^{(k+1)}-\ln\vd_{i}^{(k)}\right|\geq\Omega(1)$.
Since there are only $r$ steps, we have $\left|k_{2}-k_{1}\right|\leq r$
and hence
\[
\sum_{k=k_{1}}^{k_{2}-1}\left(\ln\sigma_{\ma}(\vd^{(k+1)})_{i}-\ln\sigma_{\ma}(\vd^{(k)})_{i}\right)^{2}=\Omega\left(\frac{1}{r}\right)\text{ or }\sum_{k=k_{1}}^{k_{2}-1}\left(\ln\vd_{i}^{(k+1)}-\ln\vd_{i}^{(k)}\right)^{2}=\Omega\left(\frac{1}{r}\right).
\]
Since $\sigma_{\ma}(\vd)_{i}$ does not change more by a constant
between re-sample (by $\sigma$-stability assumption), we have
\begin{eqnarray*}
\sum_{k=k_{1}}^{k_{2}-1}\sigma_{\ma}(\vd^{(k)})_{i}\left(\ln\sigma_{\ma}(\vd^{(k+1)})_{i}-\ln\sigma_{\ma}(\vd^{(k)})_{i}\right)^{2} & = & \Omega\left(\frac{\sigma_{\ma}(\vd^{(k_{2})})_{i}}{r}\right)\text{ or }\\
\sum_{k=k_{1}}^{k_{2}-1}\sigma_{\ma}(\vd^{(k)})_{i}\left(\ln\vd_{i}^{(k+1)}-\ln\vd_{i}^{(k)}\right)^{2} & = & \Omega\left(\frac{\sigma_{\ma}(\vd^{(k_{2})})_{i}}{r}\right).
\end{eqnarray*}

In summary, re-sampling the $i^{th}$ row indicates that the sum of
the $\sigma$ norm square of the changes of either $\ln\sigma_{\ma}$
or $\ln d$ is at least $\sigma_{\ma}(\vd)_{i}/r$ and with $O(\sigma_{\ma}(\vd)_{i}\log d)$
probability, the sampled matrix is changed. Since $\sum_{k}\norm{\ln\vd^{(k+1)}-\ln\vd^{(k)}}_{\sigma_{\ma}(\vd^{(k)})}^{2}\leq O(r)$
and by Lemma~\ref{lem:lever_score_continuity} $\sum_{k}\norm{\ln\sigma_{\ma}(\vd^{(k+1)})-\ln\sigma_{\ma}(\vd^{(k)})}_{\sigma_{\ma}(\vd^{(k)})}^{2}\leq O(r)$
we have the desired result.
\end{proof}
We now everything we need to prove our main theorem assuming that
the changes of $\vd$ and the error occurred in computing $\sigma$
is independent of our sampled matrix. (In Section~\ref{sec:randomness}
we show how to drop this assumption.)
\begin{proof}[Proof of Theorem \ref{thm:iteration_solver}]
 Note that by design in each iteration $k\in[r]$ we have that $\md^{(old)}\approx_{0.2}\md^{(k)}$,
$\mSigma^{(old)}\approx_{0.2}\mSigma(\vd^{(k)})$ where $\mSigma\defeq\mDiag(\sigma)$.
Thus, we see that the sample probability was chosen precisely so that
we can apply Lemma~\ref{lem:spectral_sparsification}. Hence, we
have $\ma^{T}\mh^{(k)}\ma\approx_{0.1}\ma^{T}\md^{(k)}\ma$. Thus,
the algorithm is correct, it simply remains to bound the running time.

To maintain inverse of $\ma^{T}\mh^{(k)}\ma$, we can simply apply
Theorem \ref{thm:low_rank_main}. By Lemma~\ref{lem:spectral_sparsification},
we know that with high probability $\nnz(\mh^{(0)})\leq\otilde(d)$.
By Lemma~\ref{lem:alg_diag_change_bound} we know there are only
$\tilde{O}(r^{2})$ coordinate changes during the algorithm in expectation.
Therefore, we can use Theorem~\ref{thm:low_rank_main} on $\ma^{T}\mh^{(k)}\ma$
with $\alpha=\tilde{O}(r^{2})$ and $s=\tilde{O}(d)$. Hence, the
average cost of maintain a linear $\tilde{O}(\nnz(\ma)+d^{2})$-time
solver of $\left(\ma^{T}\mh^{(k)}\ma\right)^{-1}$ is $\tilde{O}\left(\frac{d^{\omega}}{r}+dr^{\omega}+r^{2\omega}\right).$
Similar to Theorem \ref{thm:iteration_solver_strong_promise} , we
restart the algorithm every $r=(nd^{\omega-1})^{\frac{1}{2\omega+1}}$
and see that the total cost of maintenance is $\tilde{O}(d^{\omega}+rd^{\frac{2\omega^{2}}{2\omega+1}}).$
Since $\omega\leq1+\sqrt{2}$, we have the total maintenance cost
is $\tilde{O}\left(d^{\omega}+rd^{2}\right).$ 

Thus, all the remains is to bound the cost of computing $\vsigma^{(k)}$.
However, since by the $\sigma$stability assumption $\norm{\log(\vd_{k+1})-\log(\vd_{k})}_{\infty}\leq0.1$
we know that $\md^{(k)}\approx_{0.1}\md^{(k+1)}$ and thus $\mq^{(k)}\approx_{0.2}\ma^{T}\md^{(k+1)}\ma$.
Thus, using Lemma~\ref{lem:computing_leverage_scores} we can compute
$\vsigma^{(k)}$ for all $k\geq2$ in $\otilde(\nnz(\ma)+d^{2}\log\beta)$
time within $0.01$ multiplicative factor. Furthermore, using this
same Lemma and fast matrix multiplication, we can compute $\vsigma^{(1)}$
in $\otilde(\nnz(\ma)+d^{\omega})$ time; therefore, we have our result.

Note that Lemma \ref{lem:alg_diag_change_bound} assume that the changes
of $\vd$ and the error occurred in computing $\sigma$ is independent
of our sampled matrix. Given any linear solver, Theorem \ref{thm:randomness}
in Section~\ref{sec:randomness} shows how to construct a solver
that has same running time up to $\tilde{O}(1)$ factor and is statistically
indistinguishable from the true solver $\left(\ma^{T}\md^{(k)}\ma\right)^{-1}$
and thus circumvent this issue; thereby completing the proof.\end{proof}

\section{Hiding Randomness in Linear System Solvers\label{sec:randomness}}

In many applications (see Section~\ref{sec:applications}), the input
to a particular round of the inverse maintenance problem depends on
our output in the previous round. However, our solution to the inverse
maintenance problem is randomized and if the input to the inverse
maintenance problem was chosen adversarially based on this randomness,
this could break the analysis of our algorithm. Moreover, even within
our solution to the inverse maintenance problem we needed to solve
linear systems and if the output of these linear systems was adversarially
correlated with our randomized computation our analysis would break
(see Section~\ref{sec:sigma_solution})

In this section, we show how to fix both these problems and hide the
randomness we use to approximately solve linear system. We provide
a general transformation, $\mathtt{NoisySolver}$ in Algorithm~\ref{alg:sparseframework_noise},
which turns a linear solver for $\ma^{T}\ma$ into a nonlinear solver
for $\ma^{T}\ma$ that with high probability is indistinguishable
from an exact solver for $\ma^{T}\ma$ plus a suitable Gaussian noise.
The algorithm simply solves the desired linear system using the input
solver and then add a suitable Gaussian noise.

\begin{algorithm2e}
\caption{$\mathtt{NoisySolver}(\vb,\epsilon)$}

\label{alg:sparseframework_noise}

\SetAlgoLined

\textbf{Input: }a linear $\mathcal{T}$-time solver $\solver$ of
$\ma^{T}\ma$, vector $\vb\in\R^{n}$, and accuracy $\epsilon\in(0,1/2)$.

$\vy_{1}:=\solver(\vb,\epsilon_{1})$ where $\epsilon_{1}=\varepsilon\left(32n^{7}\right)^{-2}$.

Let $\veta\in\mathbb{R}^{n}$ be sampled randomly from the normal
distribution with mean $\vzero$ and covariance $\iMatrix$.

$\vy_{2}:=\solver(\ma^{T}\veta,\epsilon_{2})$ where $\epsilon_{2}=\left(12dn^{6}\right)^{-2}$.

\textbf{Output:} $\vy\defeq\vy_{1}+\alpha\vy_{2}$ where $\alpha=\frac{1}{8}\sqrt{\frac{\varepsilon}{n}}\normFull{\vy_{1}}_{\ma^{T}\ma}.$

\end{algorithm2e}

We break the proof that this works into two parts. First, in Lemma~\ref{lem:noisey_solver_correctness}
we show that $\mathtt{NoisySolver}$, is in fact a solver and then
in Theorem~\ref{thm:randomness} we show that with high probability
it is indistinguishable from an exact solver plus a Gaussian noise. 
\begin{lem}
\label{lem:noisey_solver_correctness} Let $\ma\in\R^{n\times d}$
and let $\solver$ be a linear $\time$-time solver for $\ma^{T}\ma$.
For all $\vb\in\R^{n}$ and $\epsilon\in(0,1/2)$, the algorithm $\mathtt{NoisySolver}(\vb,\varepsilon)$
is a $(\nnz(\ma)+\time)$-time solver for $\ma^{T}\ma$.\end{lem}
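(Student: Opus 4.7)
The plan is to verify two things: the accuracy guarantee of a solver (with high probability in $n$) and the claimed running time. Let me write $\vx^* \defeq (\ma^T\ma)^{-1}\vb$ and $\vx_2^* \defeq (\ma^T\ma)^{-1}\ma^T\veta$, and apply the $\solver$-guarantee to each of the two internal calls to get $\|\vy_1 - \vx^*\|_{\ma^T\ma}^2 \leq \epsilon_1 \|\vx^*\|_{\ma^T\ma}^2$ and $\|\vy_2 - \vx_2^*\|_{\ma^T\ma}^2 \leq \epsilon_2 \|\vx_2^*\|_{\ma^T\ma}^2$, both holding with high probability in $n$ by a union bound.

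Next I would control $\|\vx_2^*\|_{\ma^T\ma}$. Since $\|\vx_2^*\|_{\ma^T\ma}^2 = \veta^T \ma (\ma^T\ma)^{-1} \ma^T \veta$ is the squared norm of the orthogonal projection of $\veta$ onto $\im(\ma)$, it is dominated by $\|\veta\|_2^2$. Standard Gaussian concentration gives $\|\veta\|_2^2 \leq 2n$ (say) with high probability in $n$, so $\|\vx_2^*\|_{\ma^T\ma} \leq \sqrt{2n}$ with high probability. Combined with the solver guarantee, $\|\vy_2\|_{\ma^T\ma} \leq (1+\sqrt{\epsilon_2})\|\vx_2^*\|_{\ma^T\ma} \leq 2\sqrt{n}$.

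Then by the triangle inequality,
\begin{align*}
\|\vy - \vx^*\|_{\ma^T\ma} &\leq \|\vy_1 - \vx^*\|_{\ma^T\ma} + \alpha \|\vy_2\|_{\ma^T\ma} \\
&\leq \sqrt{\epsilon_1}\|\vx^*\|_{\ma^T\ma} + \tfrac{1}{8}\sqrt{\varepsilon/n}\cdot\|\vy_1\|_{\ma^T\ma}\cdot 2\sqrt{n}.
\end{align*}
Using $\|\vy_1\|_{\ma^T\ma} \leq (1+\sqrt{\epsilon_1})\|\vx^*\|_{\ma^T\ma} \leq 2\|\vx^*\|_{\ma^T\ma}$ and the tiny choice $\epsilon_1 = \varepsilon/(32n^7)^2$, the first term is utterly negligible and the second is at most $\tfrac{1}{2}\sqrt{\varepsilon}\|\vx^*\|_{\ma^T\ma}$, giving $\|\vy - \vx^*\|_{\ma^T\ma}^2 \leq \varepsilon \|\vx^*\|_{\ma^T\ma}^2$ as required. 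The choice of $\epsilon_2$ is not needed to establish accuracy of $\mathtt{NoisySolver}$ here; it is calibrated for the statistical indistinguishability result in Theorem~\ref{thm:randomness} that follows, and the much slacker bound $\sqrt{\epsilon_2} \leq 1$ suffices for this lemma.

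For the running time, sampling $\veta$ costs $O(n)$, computing $\ma^T\veta$ costs $O(\nnz(\ma))$, and the two calls to the linear $\mathcal{T}$-time solver cost $O(\mathcal{T}\log(\epsilon_i^{-1})) = O(\mathcal{T}\log(n\varepsilon^{-1}))$ each, which fits within the $O((\nnz(\ma) + \mathcal{T})\log(\varepsilon^{-1}))$ budget for a $(\nnz(\ma) + \mathcal{T})$-time solver once $\log n$ factors are absorbed into the $\otilde$ notation. The main (and really only) obstacle is bookkeeping the constants so that the two error contributions together fit under $\varepsilon \|\vx^*\|_{\ma^T\ma}^2$; this is why $\alpha$ is defined with the $\tfrac{1}{8}\sqrt{\varepsilon/n}$ factor and why $\epsilon_1$ is taken so small relative to both $\varepsilon$ and $n$.
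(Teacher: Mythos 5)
Your proof is correct and follows essentially the same route as the paper's: both rest on the chi-squared concentration bound $\norm{\veta}_2^2 \leq 2n$ (w.h.p.), the fact that $\veta^T\ma(\ma^T\ma)^{-1}\ma^T\veta \leq \norm{\veta}_2^2$ since $\ma(\ma^T\ma)^{-1}\ma^T$ is an orthogonal projection, and the relation $\alpha \lesssim \sqrt{\epsilon/n}\,\normFull{(\ma^T\ma)^{-1}\vb}_{\ma^T\ma}$; the only cosmetic difference is that you combine the two error terms via the triangle inequality on norms where the paper uses $(a+b)^2 \leq 2a^2+2b^2$ on squared norms.
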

\begin{proof}
By the definition of $\vy$ and the inequality $(a+b)^{2}\leq2a^{2}+2b^{2}$,
we have 
\begin{eqnarray*}
\normFull{\vy-\left(\ma^{T}\ma\right)^{-1}\vb}_{\ma^{T}\ma}^{2} & \leq & 2\norm{\vy_{1}-\left(\ma^{T}\ma\right)^{-1}\vb}_{\ma^{T}\ma}^{2}+2\alpha^{2}\norm{\vy_{2}}_{\ma^{T}\ma}^{2}.
\end{eqnarray*}
To bound the first term, recall that by the definition of $\solver$
\[
\norm{\vy_{1}-\left(\ma^{T}\ma\right)^{-1}\vb}_{\ma^{T}\ma}^{2}\leq\epsilon_{1}\norm{\left(\ma^{T}\ma\right)^{-1}\vb}_{\ma^{T}\ma}^{2}.
\]
To bound the second term, we note that by the definition of $\veta$,
$\norm{\veta}_{2}^{2}$ follows $\chi^{2}$-distribution with $n$
degrees of freedom. It is known that \cite[Lem 1]{laurent2000adaptive}
for all $t>0$, 
\[
\mathbb{P}\left(\norm{\veta}_{2}^{2}>n+2\sqrt{nt}+2t\right)\leq\exp(-t)\enspace.
\]
Hence, with high probability in $n$, $\norm{\veta}_{2}^{2}<2n.$
Using this and the definition of $\solver$ yields
\begin{eqnarray*}
\norm{\vy_{2}}_{\ma^{T}\ma}^{2} & \leq & 2\norm{\vy_{2}-\left(\ma^{T}\ma\right)^{-1}\ma^{T}\veta}_{\ma^{T}\ma}^{2}+2\norm{\left(\ma^{T}\ma\right)^{-1}\ma^{T}\veta}_{\ma^{T}\ma}^{2}\\
 & \leq & 2\left(1+\epsilon_{2}\right)\norm{\left(\ma^{T}\ma\right)^{-1}\ma^{T}\veta}_{\ma^{T}\ma}^{2}\\
 & \leq & 4\norm{\veta}_{2}^{2}\leq8n.
\end{eqnarray*}
where we used that $\epsilon_{2}\leq1$ and $\norm{\ma\left(\ma^{T}\ma\right)^{-1}\ma^{T}}_{2}\leq1$.
Using that $\epsilon_{1}\leq1$ and applying a similar proof yields
that
\[
\alpha^{2}=\frac{\varepsilon}{64n}\norm{\vy_{1}}_{\ma^{T}\ma}^{2}\leq\frac{\epsilon}{32n}\normFull{\left(\ma^{T}\ma\right)^{-1}\vb}_{\ma^{T}\ma}^{2}
\]
Consequently, with high probability in $n$,
\[
\normFull{\vy-\left(\ma^{T}\ma\right)^{-1}\vb}_{\ma^{T}\ma}^{2}\leq2\epsilon_{1}\norm{\left(\ma^{T}\ma\right)^{-1}\vb}_{\ma^{T}\ma}^{2}+16\alpha^{2}n\leq\epsilon\norm{\left(\ma^{T}\ma\right)^{-1}\vb}_{\ma^{T}\ma}^{2}\enspace.
\]
\end{proof}
\begin{thm}
\label{thm:randomness}Let $\mathtt{IdealSolver}(\vb)=\left(\ma^{T}\ma\right)^{-1}\vb+\vc$
where $\vc$ follows the normal distribution with mean $0$ and covariance
$\beta^{2}\left(\ma^{T}\ma\right)^{-1}$ where $\beta=\frac{1}{8}\sqrt{\frac{\varepsilon}{n}}\normFull{\left(\ma^{T}\ma\right)^{-1}\vb}_{\ma^{T}\ma}.$
Then, the total variation distance between the outcome of $\mathtt{NoisySolver}$
and $\mathtt{IdealSolver}$ is less than $1/n^{3}.$ Therefore, any
algorithm calls $\mathtt{IdealSolver}$ less than $n^{2}$ times cannot
distinguish between $\mathtt{NoisySolver}$ and $\mathtt{IdealSolver}$.\end{thm}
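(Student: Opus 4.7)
The plan is to observe that, because $\solver$ is linear, the outputs of both $\mathtt{NoisySolver}(\vb,\varepsilon)$ and $\mathtt{IdealSolver}(\vb)$ are multivariate Gaussians, and then bound the total variation distance between these two Gaussians by controlling the difference of their means and covariances.

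First I would unpack the distribution of $\mathtt{NoisySolver}$. Since $\solver$ is linear, $\solver(\vb,\epsilon_1)=\mq_{\epsilon_1}\vb$ and $\solver(\ma^{T}\veta,\epsilon_2)=\mq_{\epsilon_2}\ma^{T}\veta$ for deterministic matrices $\mq_{\epsilon_1},\mq_{\epsilon_2}$ that depend only on $\ma^{T}\ma$ and the accuracy parameter. Consequently $\vy_1$ and the scalar $\alpha=\tfrac{1}{8}\sqrt{\varepsilon/n}\,\|\vy_1\|_{\ma^{T}\ma}$ are deterministic, and $\vy=\vy_1+\alpha\mq_{\epsilon_2}\ma^{T}\veta$ is Gaussian with mean $\vy_1$ and covariance $\alpha^{2}\mq_{\epsilon_2}\ma^{T}\ma\mq_{\epsilon_2}^{T}$. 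By definition, $\mathtt{IdealSolver}(\vb)$ is Gaussian with mean $(\ma^{T}\ma)^{-1}\vb$ and covariance $\beta^{2}(\ma^{T}\ma)^{-1}$, so it suffices to bound the TV distance between these two Gaussians.

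Next I would quantify the closeness. The $\epsilon_1$-solver guarantee gives $\|\vy_1-(\ma^{T}\ma)^{-1}\vb\|_{\ma^{T}\ma}\leq\sqrt{\epsilon_1}\,\|(\ma^{T}\ma)^{-1}\vb\|_{\ma^{T}\ma}$, and by the reverse triangle inequality in the $\ma^{T}\ma$-norm this also bounds $|\alpha-\beta|$ relative to $\beta$. For the covariances, writing $\mq_{\epsilon_2}=(\ma^{T}\ma)^{-1}+\mE$, applying the $\epsilon_2$-solver guarantee with arbitrary right-hand sides yields $\mE^{T}(\ma^{T}\ma)\mE\preceq\epsilon_2(\ma^{T}\ma)^{-1}$; expanding
\[
\mq_{\epsilon_2}\ma^{T}\ma\mq_{\epsilon_2}^{T}=(\ma^{T}\ma)^{-1}+\mE+\mE^{T}+\mE\ma^{T}\ma\mE^{T}
\]
shows that each perturbation term is small after whitening by $(\ma^{T}\ma)^{1/2}$ on both sides, with the $\mE\ma^{T}\ma\mE^{T}$ term dominated by $\epsilon_2$ and the linear-in-$\mE$ cross-terms dominated by $\sqrt{\epsilon_2}$.

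Finally I would convert these perturbations into a TV bound via Pinsker's inequality and the closed-form Gaussian KL divergence. After whitening by $(\ma^{T}\ma)^{1/2}$, the problem reduces to comparing $\mathcal{N}(\tilde{\mu}_1,\tilde{\Sigma}_1)$ with $\mathcal{N}(\tilde{\mu}_2,\beta^2\iMatrix)$, where the whitened mean gap $\|\tilde{\mu}_1-\tilde{\mu}_2\|_2$ and covariance deviation $\|\beta^{-2}\tilde{\Sigma}_1-\iMatrix\|_{F}$ shrink polynomially in $\epsilon_1$ and $\epsilon_2$. The aggressive choices $\epsilon_1=\varepsilon(32n^{7})^{-2}$ and $\epsilon_2=(12dn^{6})^{-2}$ are engineered to make the resulting KL divergence at most $O(n^{-6})$, yielding total variation at most $1/n^{3}$; the statement about $n^{2}$ calls then follows by subadditivity of TV distance under independent calls. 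The main obstacle is the careful bookkeeping that translates the $\|\cdot\|_{\ma^{T}\ma}$-style solver guarantees into spectral inequalities of exactly the form needed by the Gaussian KL formula, and that absorbs the dimension factors $n$ and $d$ lost through trace and Frobenius estimates into the polynomially small choices of $\epsilon_1$ and $\epsilon_2$.
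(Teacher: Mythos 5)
Your proposal is correct and follows essentially the same route as the paper: identify both outputs as Gaussians (using linearity of $\solver$), bound the mean shift and the covariance distortion via the solver guarantees, and convert to a total variation bound through Pinsker's inequality and the closed-form Gaussian KL divergence. The only cosmetic difference is that you re-derive the covariance approximation $\mq_{\epsilon_2}\ma^{T}\ma\mq_{\epsilon_2}^{T}\approx_{O(\sqrt{\epsilon_2})}(\ma^{T}\ma)^{-1}$ inline via the perturbation $\mq_{\epsilon_2}=(\ma^{T}\ma)^{-1}+\mE$, whereas the paper invokes its Lemma~\ref{lem:solver_implies_approx} for exactly this fact.
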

\begin{proof}
Since $\solver$ is linear, we have $\vy_{2}=\mq_{\varepsilon_{2}}\ma^{T}\veta$
for some matrix $\mq_{\varepsilon_{2}}$. Since $\veta$ follows the
normal distribution with mean $0$ and covariance matrix $\iMatrix$,
we have
\[
\mathbb{E}\left[\vy_{2}\vy_{2}^{T}\right]=\mq_{\varepsilon_{2}}\ma^{T}\mathbb{E}\left[\veta\veta^{T}\right]\ma\mq_{\varepsilon_{2}}^{T}=\mq_{\varepsilon_{2}}\ma^{T}\ma\mq_{\varepsilon_{2}}^{T}
\]
Lemma \ref{lem:solver_implies_approx} shows that $\mq_{\varepsilon_{2}}^{T}\ma^{T}\ma\mq_{\varepsilon_{2}}\approx_{4\sqrt{\varepsilon_{2}}}(\ma^{T}\ma)^{-1}$
and hence
\begin{equation}
\mathbb{E}\left[\vy_{2}\vy_{2}^{T}\right]\approx_{4\sqrt{\varepsilon_{2}}}\left(\ma^{T}\ma\right)^{-1}.\label{eq:QAAQ_error}
\end{equation}
Since $\vy_{2}$ is a linear transformation of $\veta$, $\vy_{2}$
follows is given by the normal distribution with mean $0$ and covariance
$\mq_{\varepsilon_{2}}\ma^{T}\ma\mq_{\varepsilon_{2}}^{T}$, i.e.
$\vy_{2}\in\mathcal{N}(0,\mq_{\varepsilon_{2}}\ma^{T}\ma\mq_{\varepsilon_{2}}^{T})$.

Now let, $\vy,\vz$ be the output of $\mathtt{NoisySolver}(\vb)$
and $\mathtt{IdealSolver}(\vb)$ respectively; we know that 
\begin{eqnarray*}
\vy & \in & \mathcal{N}\left(\vy_{1},\alpha^{2}\mq_{\varepsilon_{2}}\ma^{T}\ma\mq_{\varepsilon_{2}}^{T}\right)\text{ and }\\
\vz & \in & \mathcal{N}\left(\left(\ma^{T}\ma\right)^{-1}\vb,\beta^{2}\left(\ma^{T}\ma\right)^{-1}\right)
\end{eqnarray*}
where $\alpha=\frac{1}{8}\sqrt{\frac{\varepsilon}{n}}\normFull{\vy_{1}}_{\ma^{T}\ma}$
and $\beta=\frac{1}{8}\sqrt{\frac{\varepsilon}{n}}\normFull{\left(\ma^{T}\ma\right)^{-1}\vb}_{\ma^{T}\ma}$.
To bound $\norm{\vy-\vz}_{\mathrm{tv}}$, Pinsker's inequality shows
that $\norm{\vy-\vz}_{\mathrm{tv}}\leq\sqrt{\frac{1}{2}D_{\mathrm{KL}}(\vy||\vz)}$
and using an explicit formula for the KL divergence for the normal
distribution we in turn have that $\norm{\vy-\vz}_{\mathrm{tv}}$
is bounded by
\[
\frac{1}{2}\sqrt{\mathrm{tr}\left(\left(\frac{\alpha}{\beta}\right)^{2}\ma^{T}\ma\mq_{\varepsilon_{2}}\ma^{T}\ma\mq_{\varepsilon_{2}}^{T}\right)+\norm{\vy_{1}-\left(\ma^{T}\ma\right)^{-1}\vb}_{\beta^{-2}\ma^{T}\ma}^{2}-d+\ln\left(\frac{\det\beta^{2}\left(\ma^{T}\ma\right)^{-1}}{\det\alpha^{2}\mq_{\varepsilon_{2}}\ma^{T}\ma\mq_{\varepsilon_{2}}^{T}}\right)}
\]
Hence, we need to prove $\alpha^{2}\mq_{\varepsilon_{2}}\ma^{T}\ma\mq_{\varepsilon_{2}}^{T}\approx\beta^{2}\left(\ma^{T}\ma\right)^{-1}$
and $\vy_{1}\approx\left(\ma^{T}\ma\right)^{-1}\vb$

First, we first prove $\vy_{1}\approx\left(\ma^{T}\ma\right)^{-1}\vb$:
\begin{align*}
\norm{\vy_{1}-\left(\ma^{T}\ma\right)^{-1}\vb}_{\beta^{-2}(\ma^{T}\ma)}^{2} & \leq\beta^{-2}\varepsilon_{1}\normFull{\left(\ma^{T}\ma\right)^{-1}\vb}_{\ma^{T}\ma}^{2}\tag{Definition of \ensuremath{\solver}and \ensuremath{\vy_{1}}}\\
 & \leq\frac{64n\varepsilon_{1}}{\varepsilon}\,.\tag{Definition of \ensuremath{\beta}}
\end{align*}
Next, to prove $\alpha^{2}\mq_{\varepsilon_{2}}\ma^{T}\ma\mq_{\varepsilon_{2}}^{T}\approx\beta^{2}\left(\ma^{T}\ma\right)^{-1}$,
we note that by triangle inequality and the definition of $\solver$
and $\vy_{1}$ we have
\[
\left(1-\sqrt{\varepsilon_{1}}\right)\normFull{\left(\ma^{T}\ma\right)^{-1}\vb}_{\ma^{T}\ma}\leq\normFull{\vy_{1}}_{\ma^{T}\ma}\leq\left(1+\sqrt{\varepsilon_{1}}\right)\normFull{\left(\ma^{T}\ma\right)^{-1}\vb}_{\ma^{T}\ma}.
\]
Therefore by the definition of $\alpha$ and $\beta$ we have 
\[
\left(1-3\sqrt{\varepsilon_{1}}\right)\leq\frac{\alpha^{2}}{\beta^{2}}\leq\left(1+3\sqrt{\varepsilon_{1}}\right).
\]
Using \eqref{eq:QAAQ_error} then yields that
\[
\alpha^{2}\mq_{\varepsilon_{2}}\ma^{T}\ma\mq_{\varepsilon_{2}}^{T}\approx_{4\sqrt{\varepsilon_{2}}+4\sqrt{\varepsilon_{1}}}\beta^{2}\left(\ma^{T}\ma\right)^{-1}.
\]
Therefore, we have 
\begin{eqnarray*}
\mathrm{tr}\left((\frac{\alpha}{\beta})^{2}\ma^{T}\ma\mq_{\varepsilon_{2}}\ma^{T}\ma\mq_{\varepsilon_{2}}^{T}\right) & = & \mathrm{tr}\left((\frac{\alpha}{\beta})^{2}\left(\ma^{T}\ma\right)^{1/2}\mq_{\varepsilon_{2}}\ma^{T}\ma\mq_{\varepsilon_{2}}^{T}\left(\ma^{T}\ma\right)^{1/2}\right)\\
 & \leq & de^{4\sqrt{\varepsilon_{2}}+4\sqrt{\varepsilon_{1}}}\\
 & \leq & d+8\sqrt{\varepsilon_{2}}d+8\sqrt{\varepsilon_{1}}d
\end{eqnarray*}
and
\begin{eqnarray*}
\ln\left(\frac{\det\beta^{2}\left(\ma^{T}\ma\right)^{-1}}{\det\alpha^{2}\mq_{\varepsilon_{2}}\ma^{T}\ma\mq_{\varepsilon_{2}}^{T}}\right) & = & \ln\det\left((\frac{\alpha}{\beta})^{2}\left(\ma^{T}\ma\right)^{1/2}\mq_{\varepsilon_{2}}\ma^{T}\ma\mq_{\varepsilon_{2}}^{T}\left(\ma^{T}\ma\right)^{1/2}\right)^{-1}\\
 & \leq & d\ln\left(e^{4\sqrt{\varepsilon_{2}}+4\sqrt{\varepsilon_{1}}}\right)\\
 & = & 4\sqrt{\varepsilon_{2}}d+4\sqrt{\varepsilon_{1}}d.
\end{eqnarray*}

Combining inequalities above and using our choice of $\varepsilon_{1}$
and $\varepsilon_{2}$ we have
\[
\norm{\vy-\vz}_{\mathrm{tv}}\leq\frac{1}{2}\sqrt{12\sqrt{\varepsilon_{2}}d+12\sqrt{\varepsilon_{1}}d+64n\frac{\varepsilon_{1}}{\varepsilon}}\leq1/n^{3}\,.
\]
\end{proof}

\section{Row Insertion and Removal\label{sec:iteration_insertion_removal}}

For some applications we need to solve the inverse maintenance problem
when the matrix $\ma$ might change between rounds. In particular
some rows of $\ma$ might be entirely removed or some new rows might
be added. 

For example, in each iteration of cutting plane methods, a new constraint
is added to the current polytope $\{\ma\vx\geq\vb\}$; each iteration
of quasi newton methods, the current approximate Hessian is updated
by a rank 1 matrix. If the matrix is updated only by a rank 1 matrix
each iteration, the inverse can be updated efficiently using the Sherman-Morrison
formula. However, for the general case, it is less obvious when the
matrix $\ma$ and the diagonal $\md$ can be changed by a high rank
matrix. 

Here we formally define the more general set of assumptions under
which we would like to solve the inverse maintenance problem and show
how to solve this problem efficiently. 
\begin{defn}[$K$ Stability Assumption]
\label{ass:stability_sigma_2} The inverse maintenance problem satisfies
the $K$ stability assumption if 
\begin{enumerate}
\item A row of $\ma$ is revealed to the algorithm at round $k$ if and
only if the corresponding entry in $\vd^{(k)}$ is non-zero. 
\item For each $k\in[r]$, we have either

\begin{enumerate}
\item $\norm{\log(\vd^{(k)})-\log(\vd^{(k-1)})}_{\vLever_{\ma}(\vd^{(k)})}\leq0.1$
and $\norm{\log(\vd^{(k)})-\log(\vd^{(k-1)})}_{\infty}\leq0.1$; or
\item The vectors $\vd^{(k)}$ and $\vd^{(k-1)}$ differ in at most $K$
coordinates.
\end{enumerate}
\item $\beta^{-1}\ma^{T}\md^{(0)}\ma\preceq\ma^{T}\md^{(k)}\ma\preceq\beta\ma^{T}\md^{(0)}\ma$
for $\beta=\poly(n)$. 
\end{enumerate}
\end{defn}
Some of the previous algorithms for the inverse maintenance, such
as \cite{vaidya1989speeding}, rely on the assumption that the entire
matrix $\ma$ is given explicitly. In particular these algorithms
perform precomputation on the entirety of $\ma$ that they use to
decrease the amortize cost of later steps. Here we show that the Algorithm~\ref{alg:sparseframework}
we proposed does not have this drawback and can be easily modified
to solve this version of inverse maintenance problem under the $K$
stability assumption for a fairly large $K$.
\begin{thm}
\label{thm:iteration_solver_2} Suppose that the inverse maintenance
problem satisfies the $K$ stability assumption for $K\leq d^{(3-\omega)/(\omega-1)}$,
where $\omega$ is the matrix multiplication constant. Then there
is a variant of Algorithm~\ref{alg:sparseframework} that maintains
a $\tilde{O}(\nnz(\ma)+d^{2})$-time solver with high probability
in total time $\otilde(d^{\omega}+r(\nnz(\ma)+d^{2}))$ where $r$
is the number of rounds.\end{thm}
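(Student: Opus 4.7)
The plan is to extend Algorithm~\ref{alg:sparseframework} with a minimal modification to accommodate the $K$ stability assumption. The initial sparsifier $\ma^{T}\mh^{(0)}\ma$ is built only from the rows revealed at round $0$, namely those with $d_{i}^{(0)}\neq0$, and is spectrally close to $\ma^{T}\md^{(0)}\ma$ by Lemma~\ref{lem:spectral_sparsification}. In each subsequent round, the coordinate-wise resampling logic proceeds exactly as in Algorithm~\ref{alg:sparseframework} for rounds satisfying case (a) of Definition~\ref{ass:stability_sigma_2}. For a round satisfying only case (b) (up to $K$ coordinates of $\vd$ changing arbitrarily, possibly including rows switching in or out of the revealed set), we simply force a fresh sampling decision on each of those $K$ coordinates, including it in $\mh^{(k)}$ with probability $\min\{1,\gamma\cdot\sigma_{i}^{(apr)}\}$ using the newly-computed leverage-score overestimates.

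The next step is to bound the expected number $\alpha$ of row changes in $\mh$ over $T$ consecutive rounds within a restart epoch. A direct copy of the argument in Lemma~\ref{lem:alg_diag_change_bound} gives $\tilde{O}(T^{2})$ expected changes from case (a) rounds, while in each case (b) round each of the $K$ affected coordinates enters or leaves the sparsifier with probability $O(\gamma\sigma_{i})=O(\log d)$, so the total contribution is $O(TK\log d)$ in expectation. Hence $\alpha=\tilde{O}(T^{2}+TK)$. Substituting this $\alpha$ together with the initial sparsifier size $s=\tilde{O}(d)$ into Theorem~\ref{thm:low_rank_main} yields a per-epoch maintenance cost of
\begin{equation*}
\tilde{O}\bigl(d^{\omega}+dT^{\omega+1}+dT^{2}K^{\omega-1}+T^{2\omega+1}+T^{\omega+1}K^{\omega}\bigr)\,.
\end{equation*}

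The third step is to pick the epoch length. Setting $T=\max\{d^{1/\omega},\,d/K^{\omega-1}\}$ and using $\omega\le1+\sqrt{2}$ together with the hypothesis $K\le d^{(3-\omega)/(\omega-1)}$, a case split on whether $K\le d^{1/\omega}$ or $K\ge d^{1/\omega}$ shows that every term above is bounded by $T\cdot d^{2}$ and that $T\ge d^{\omega-2}$, so the per-round amortized maintenance cost is $\tilde{O}(d^{2})$. Summing over the $r/T$ epochs and adding the once-per-epoch $d^{\omega}$ preprocessing (which, since $d^{\omega}/d^{2}=d^{\omega-2}\le T$, is absorbed by the amortized rate) gives total maintenance cost $\tilde{O}(d^{\omega}+rd^{2})$. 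Computing the approximate leverage scores $\sigma^{(apr)}$ via Lemma~\ref{lem:computing_leverage_scores} costs $\tilde{O}(\nnz(\ma)+d^{2})$ per round using the previous round's solver; in case (b) rounds the previous solver remains a $\beta$-preconditioner for $\ma^{T}\md^{(k)}\ma$ thanks to the uniform bound $\beta^{-1}\ma^{T}\md^{(0)}\ma\preceq\ma^{T}\md^{(k)}\ma\preceq\beta\ma^{T}\md^{(0)}\ma$ in the $K$-stability assumption, incurring only an additional $\log\beta$ factor. Wrapping every produced solver in $\mathtt{NoisySolver}$ from Section~\ref{sec:randomness} decouples the solver's randomness from the adversarial choice of $\vd^{(k)}$ with polylog overhead, so the calls to Lemma~\ref{lem:alg_diag_change_bound} remain valid.

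The main obstacle is verifying that the two-branch choice of $T$ really covers the full range $K\le d^{(3-\omega)/(\omega-1)}$ while simultaneously satisfying $T\ge d^{\omega-2}$; this is exactly where the specific exponent $(3-\omega)/(\omega-1)$ emerges, since it is the threshold at which the binding constraint $dTK^{\omega-1}\le d^{2}$ becomes compatible with $T\ge d^{\omega-2}$. A second subtlety is the type (b) leverage-score computation: one must check that iterative refinement with the prior solver as preconditioner converges in $\tilde{O}(\log\beta)$ rounds on the possibly-reshuffled matrix, which follows from the global $\beta$-spectral bound and is the reason we insisted on clause (3) in Definition~\ref{ass:stability_sigma_2}.
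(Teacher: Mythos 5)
There is a genuine gap in your bound on the number of sparsifier changes in case (b) rounds. You count only the $K$ coordinates whose $d_i$ values change directly, arguing each enters or leaves $\mh^{(k)}$ with probability $O(\gamma\sigma_i)$. But the algorithm also resamples row $j$ whenever its leverage score $\sigma_j$ drifts outside the $[0.9,1.1]$ multiplicative window, and inserting or removing $K$ rows can change the leverage scores of \emph{all other} $n-K$ rows --- a priori by constant factors for $\Omega(n)$ of them --- triggering a cascade of resamplings that your bound $\alpha=\tilde{O}(T^{2}+TK)$ does not account for. This is precisely where the paper's proof spends its main technical effort: it splits each case (b) update into insertion-only and removal-only steps, uses that for an insertion-only update $\sigma_{i}(\vd^{(k+1)})\leq\sigma_{i}(\vd^{(k)})$ for every unchanged row $i$, and combines this monotonicity with the conservation law $\sum_{i}\sigma_{i}(\vd^{(k)})=d$ to conclude
\[
\sum_{i}\left|\sigma_{i}(\vd^{(k)})-\sigma_{i}(\vd^{(k+1)})\right|\leq4\left|\left\{ i:\vd_{i}^{(k+1)}\neq\vd_{i}^{(k)}\right\} \right|,
\]
so the total induced leverage-score movement, and hence the expected number of induced resamplings, is still $O(K)$ per round. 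Without this argument the claim $\alpha=O(rK+r^{2})$ is unjustified and the rest of your runtime calculation does not go through.

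Two smaller points. First, the paper also enforces that consecutive matrices satisfy $\ma^{T}\md^{(k-1)}\ma\approx_{O(1)}\ma^{T}\md^{(k)}\ma$ by splitting each case (b) update into $O(\log n)$ intermediate steps via the interpolation $\mh(\alpha)\approx_{O(1)}\mh(2\alpha)$; your alternative of using the previous round's solver as a $\beta$-preconditioner (paying $\log\beta=O(\log n)$) for the leverage-score computation is a legitimate substitute given clause (3) of the $K$-stability assumption, so I would not count that as a gap. Second, your epoch-length optimization differs cosmetically from the paper's single choice $r=d^{\omega-2}$, but both yield the same amortized $\tilde{O}(d^{2})$ bound once the correct $\alpha$ is in hand; the missing piece is the bound on $\alpha$ itself, not the arithmetic that follows it.
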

\begin{proof}
From our solution to the problem under the $\ellTwo$ stability assumption,
i.e. the proof of Theorem~\ref{thm:low_rank_main}, we see that we
do not need to know the entire matrix $\ma$ to maintain an approximate
inverse provided that the number of low rank updates we need to perform
is small. Therefore, it suffices to show how to reduce the $\sigma$
stability assumption with row addition and removal to the low rank
update problem mentioned in Theorem~\ref{thm:low_rank_main}. 

Our previous reduction under the $\sigma$ stability assumption, i.e.
Theorem~\ref{thm:iteration_solver}, relied on two facts, (1) that
$\ma^{T}\md^{(k-1)}\ma\approx_{O(1)}\ma^{T}\md^{(k)}\ma$ for all
$k$ and that (2) the expected number of row changes is $O(r^{2})$.
The first condition, (1) was used to ensure that we could efficiently
compute approximate leverage scores for $\ma^{T}\md^{(k)}\ma$ and
the second condition (2) was used to ensure our invocation of Theorem~\ref{thm:low_rank_main}
was efficient. In the remainder of this proof we show how both condition
(1) and (2) can be achieved.

First we show how to achieve condition (1) under changes to $\vd^{(k)}.$
First note that if the change follows case (a) of the $K$ stability
of assumption then since $\norm{\log(\vd^{(k)})-\log(\vd^{(k-1)})}_{\infty}\leq0.1$
clearly $\ma^{T}\md^{(k-1)}\ma\approx_{O(1)}\ma^{T}\md^{(k)}\ma$
as before. On the other hand, for case (b) we claim that (1) can be
achieved by adding intermediate steps into the problem as follows.
First, we can assume $\vd^{(k)}\geq\vd^{(k-1)}$ or $\vd^{(k)}\leq\vd^{(k-1)}$
by splitting changes that corresponds to inserting or removing rows
into two steps. If we do the insertion first, it is easy to prove
that $\gamma^{-1}\ma^{T}\md^{(0)}\ma\preceq\ma^{T}\md^{(k)}\ma\preceq\gamma\ma^{T}\md^{(0)}\ma$
still holds for some $\gamma=O(\beta)$ after the splitting. For the
insertion case, we consider the matrix
\[
\mh(\alpha)=\ma^{T}\left(\md^{(k-1)}+\alpha\left(\md^{(k)}-\md^{(k-1)}\right)\right)\ma.
\]
Since $\md^{(k)}-\md^{(k-1)}\succeq\mZero$, it is easy to see that
$\mh(\alpha)\approx_{O(1)}\mh(2\alpha)$. Since $\mh(1)\approx_{\poly(n)}\mh(0)$,
as we have just argued, we have $\ma^{T}\left(\md^{(k)}-\md^{(k-1)}\right)\ma\approx_{\poly(n)}\mh(0)$
and hence $\mh(\poly(1/n))\approx_{O(1)}\mh(0)$. Therefore, one can
split the insertion case into $O(\log(n))$ steps to ensure $\ma^{T}\md^{(k-1)}\ma\approx_{O(1)}\ma^{T}\md^{(k)}\ma$.
The argument for removal is analogous.

Next we show how to achieve condition (2). Again we consider the case
where a round is updated by (a) and by (b) separately. For (a), the
changes of $\vd$ is small in $\sigma$ norm, we already know from
our previous analysis of the $\sigma$ stability assumption that it
can only cause $O(r^{2})$ many rows change. For case (b) we define
$\alpha$ be the total number of row changes due to (b). From the
definition of Algorithm~\ref{alg:sparseframework}, we see that there
are two causes of resampling, namely, the change of $\vd$ and the
change of $\vsigma$. By the assumption (b) we know that there are
at most $K$ rows changes in $\vd$ and therefore, this contributes
at most $rK$ to $\alpha$. For the change of $\vsigma$, from first
half of the proof of Lemma~\ref{lem:alg_diag_change_bound}, we know
that the number of resampling that causes changes in $\vh$ is bounded
by 

\[
\sum_{i}\max\left\{ \sigma_{i}(\vd^{(k)}),\sigma_{i}(\vd^{(k+1)})\right\} \min\left\{ \left|\ln\sigma_{i}(\vd^{(k)})-\ln\sigma_{i}(\vd^{(k+1)})\right|,1\right\} .
\]
Therefore, we have that
\begin{equation}
\alpha\leq rK+\sum_{i,k}\max\left\{ \sigma_{i}(\vd^{(k)}),\sigma_{i}(\vd^{(k+1)})\right\} \min\left\{ \left|\ln\sigma_{i}(\vd^{(k)})-\ln\sigma_{i}(\vd^{(k+1)})\right|,1\right\} .\label{eq:sec_insertion_alpha_bound}
\end{equation}
For any $k$, we define 
\[
S_{k}=\left\{ i\text{ such that }\frac{1}{2}\leq\sigma_{i}(\vd^{(k+1)})/\sigma_{i}(\vd^{(k)})\leq2\right\} .
\]
Then, we have that
\begin{eqnarray*}
\alpha & \leq & rK+O\left(1\right)\sum_{k}\sum_{i\in S_{k}}\left|\sigma_{i}(\vd^{(k)})-\sigma_{i}(\vd^{(k+1)})\right|+O\left(1\right)\sum_{k}\sum_{i\notin S_{k}}\max\left\{ \sigma_{i}(\vd^{(k)}),\sigma_{i}(\vd^{(k+1)})\right\} \\
 & \leq & rK+O\left(1\right)\sum_{k}\sum_{i\in S_{k}}\left|\sigma_{i}(\vd^{(k)})-\sigma_{i}(\vd^{(k+1)})\right|+O\left(1\right)\sum_{k}\sum_{i\notin S_{k}}\left|\sigma_{i}(\vd^{(k+1)})-\sigma_{i}(\vd^{(k)})\right|\\
 & \leq & rK+O\left(1\right)\sum_{i,k}\left|\sigma_{i}(\vd^{(k)})-\sigma_{i}(\vd^{(k+1)})\right|.
\end{eqnarray*}
Again, we can split the second case into insertion only and removal
only. For the insertion case, we note that $\sigma_{i}(\vd^{(k+1)})\leq\sigma_{i}(\vd^{(k)})$
for all $i$ with $\vd_{i}^{(k+1)}=\vd_{i}^{(k)}$. Since $\sum\sigma_{i}(\vd^{(k+1)})=\sum\sigma_{i}(\vd^{(k)})=d$,
we have that
\begin{eqnarray*}
\sum_{i}\left|\sigma_{i}(\vd^{(k)})-\sigma_{i}(\vd^{(k+1)})\right| & = & 2\sum_{\vd_{i}^{(k+1)}\neq\vd_{i}^{(k)}}\left|\sigma_{i}(\vd^{(k)})-\sigma_{i}(\vd^{(k+1)})\right|\\
 & \leq & 4\left|\left\{ i\text{ such that }\vd_{i}^{(k+1)}\neq\vd_{i}^{(k)}\right\} \right|.
\end{eqnarray*}
We have the same bound for the removal case. Putting it into \eqref{eq:sec_insertion_alpha_bound},
we have
\[
\alpha\leq rK+O\left(1\right)\sum_{k}\left|\{i\text{ such that }\vd_{i}^{(k+1)}\neq\vd_{i}^{(k)}\}\right|\leq O(rK)\,.
\]
This proves that the total expected number of row changes is bounded
by $O(r^{2}+rK)$. Note that we use the fact that if there is an update
triggered by case (a) after the update is triggered by case (b) then
we account for it by case (b), i.e. the interactions between case
(a) and (b) only increase the number of row changes by a constant.

Now, we can apply Theorem \ref{thm:low_rank_main} and restart every
$r=d^{\omega-2}$ steps and get that the average cost of maintenance
is
\begin{align*}
 & \tilde{O}\left(d^{2}+(r^{2}+rK)d\left(\frac{r^{2}+rK}{r}\right)^{\omega-2}+(r^{2}+rK)^{\omega}\right)\\
 & =\otilde\left(d^{2}+dr^{2(\omega-1)-(\omega-2)}+r^{\omega-1-(\omega-2)}K^{\omega-1}+r^{2\omega}+r^{\omega}K^{\omega}\right)\\
 & =\otilde\left(d^{2}+d^{1+(\omega-2)\omega}+d^{\omega-2}K^{\omega-1}+d^{2\omega(\omega-2)}+d^{\omega(\omega-2)}K^{\omega}\right)
\end{align*}
Now using that $\omega\leq1+\sqrt{2}$ we have that $1+(\omega-2)\omega\leq2$
and $2\omega(\omega-2)\leq2$ and therefore the average cost of maintenance
is 
\[
\otilde\left(d^{2}+d^{\omega-2}K^{\omega-1}+d^{\omega(\omega-2)}K^{\omega}\right)
\]
Hence, as long as $K\leq d^{(3-\omega)/(\omega-1)}$, we have that
the average cost is $\tilde{O}(d^{2})$.\end{proof}
\begin{rem}
Using $\omega<2.37287$ \cite{gall2014powers}, the above theorem
shows how to solve inverse maintenance problem with $d^{0.4568}$
rows addition and removal in amortized $\tilde{O}(d^{2})$ time.\end{rem}

\section{Applications}

\label{sec:applications}

In this section, we provide multiple applications of our algorithm
for solving the inverse maintenance problem under the $\ellTwo$ and
$\sigma$ stability assumptions. First in Section~\ref{sub:linear_programming}
we show how to use our results to solve a linear program. Then in
Section~\ref{sub:app:regression}, Section~\ref{sub:app:multicommodity_flow},
and Section~\ref{sub:app:mincost_flow} we state the consequences
of this linear programming algorithm for solving the non-linear regression,
multicommodity flow, and minimum cost flow respectively. In Section~\ref{sub:app:rounding}
we show how our results lead to new efficient algorithms for computing
a rounding of a polytope.

\subsection{Linear Programming\label{sub:linear_programming}}

Here we show how to use our solution to the inverse maintenance problem
under the $\sigma$ stability assumption (See Section~\ref{sec:sigma_solution})
to improve the running time of solving a linear program. Our main
result is the following:
\begin{thm}
\label{thm:LPSolve} Let $\ma\in\R^{d\times n}$, $\vc,\vl,\vu\in\R^{n}$,
and $\vb\in\R^{d}$ for $d\leq n$ and suppose $\vx\in\R^{n}$ is
an interior point of the polytope
\[
S=\left\{ \vx\in\R^{n}\,:\,\ma\vx=\vb\text{ and }l_{i}\leq x_{i}\leq u_{i}\text{ for all }i\in[n]\right\} \enspace.
\]
Let $U\defeq\max\left(\normFull{\frac{\vu-\vl}{\vu-\vx}}_{\infty},\normFull{\frac{\vu-\vl}{\vx-\vl}}_{\infty},\norm{\vu-\vl}_{\infty},\norm{\vc}_{\infty}\right)$
. Then, consider the linear program
\begin{equation}
\text{OPT}\defeq\min_{\vx\in S}\vc^{T}\vx\enspace.\label{eq:LP}
\end{equation}
In time $\tilde{O}\left(\sqrt{d}\left(\nnz(\ma)+d^{2}\right)\log\left(U/\epsilon\right)\right)$,
we can compute $\vy$ such that $\norm{\ma\vy-\vb}_{\ma^{T}\ms^{-2}\ma}\leq\epsilon$,
$l_{i}\leq x_{i}\leq u_{i}$ and $\vc^{T}\vy\leq OPT+\epsilon$ where
$\ms$ is a diagonal matrix $\ms_{ii}=\min\left(u_{i}-y_{i},y_{i}-l_{i}\right)$.\end{thm}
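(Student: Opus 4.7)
The plan is to combine the $O(\sqrt{d}\log(U/\epsilon))$-iteration weighted path-following interior point method from our previous work \cite{leeS14} with the inverse maintenance result established in Theorem~\ref{thm:iteration_solver}. That IPM reduces solving the linear program \eqref{eq:LP} to $r = \tilde{O}(\sqrt{d}\log(U/\epsilon))$ rounds, in each of which the only non-trivial computation is (approximately) solving a linear system of the form $\ma \md^{(k)} \ma^{T} \vh = \vg$, where $\md^{(k)}$ is a diagonal matrix built from the current slacks $\vs^{(k)}_i = \min(u_i - x_i^{(k)}, x_i^{(k)} - l_i)$ and the barrier weights. If we can deliver a $\tilde{O}(\nnz(\ma)+d^{2})$-time solver for each of these systems with amortized maintenance cost $\tilde{O}(\nnz(\ma)+d^{2})$, the total running time is $r \cdot \tilde{O}(\nnz(\ma)+d^{2}) = \tilde{O}(\sqrt{d}(\nnz(\ma)+d^{2})\log(U/\epsilon))$, as claimed; the $\tilde O(d^\omega)$ preprocessing cost of Theorem~\ref{thm:iteration_solver} is absorbed since $d^\omega \le \sqrt{d}\cdot d^2$.

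To execute step~1 I would first invoke the IPM framework of \cite{leeS14}: starting from the given interior point (or, if necessary, a standard big-$M$/homogeneous embedding), the method takes centered Newton steps along a weighted central path and terminates after $\tilde O(\sqrt d\log(U/\epsilon))$ rounds with a point of the required accuracy in the $\ma^{T}\ms^{-2}\ma$ norm. Step~2 is to verify that the sequence $\vd^{(k)}$ produced by this IPM satisfies the $\sigma$ stability assumption of Definition~\ref{ass:stability_sigma}. The short-step analysis of \cite{leeS14} already bounds the per-step multiplicative change of $\vd^{(k)}$ in the leverage-score norm and in $\ell_\infty$, up to an absolute constant; by subdividing each original round into a constant number of sub-rounds one forces the $0.1$ threshold of Definition~\ref{ass:stability_sigma}, without changing the iteration count asymptotically. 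The spectral bound $\beta^{-1}\ma\md^{(0)}\ma^{T}\preceq\ma\md^{(k)}\ma^{T}\preceq\beta\ma\md^{(0)}\ma^{T}$ with $\beta=\poly(n,U/\epsilon)$ follows from the polynomial bit-complexity guarantees of the IPM.

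Step~3 is to plug Theorem~\ref{thm:iteration_solver} into this IPM as the solver in each round; its running time guarantee immediately produces the claimed bound. Step~4 is to deal with the fact that the IPM's choice of $\vd^{(k+1)}$ is a function of the output of our randomized solver at round $k$, which would adversarially correlate $\vd^{(k+1)}$ with the random sparsifier we maintain inside Theorem~\ref{thm:iteration_solver} and invalidate the independence assumption used in Lemma~\ref{lem:alg_diag_change_bound}. We handle this by composing every solver call with the $\mathtt{NoisySolver}$ wrapper from Algorithm~\ref{alg:sparseframework_noise}; by Theorem~\ref{thm:randomness}, over the $\poly(n)$ queries made during the IPM, its output is indistinguishable (in total variation) from an exact solve plus an independent Gaussian noise whose variance is a tiny multiple of $\|(\ma\md^{(k)}\ma^{T})^{-1}\vg\|^2_{\ma\md^{(k)}\ma^{T}}$. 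This fixed-noise perturbation lies well within the approximate-Newton-step tolerance permitted by the \cite{leeS14} analysis, so the IPM proceeds correctly and the randomness used inside inverse maintenance stays independent of the adversary's choices.

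The main obstacle is step~2, verifying $\sigma$ stability of the Lee--Sidford IPM rather than just the weaker conditions explicitly proved there. The key tool is Lemma~\ref{lem:lever_score_continuity}: combined with the Hessian-norm control of the weight updates in \cite{leeS14}, it lets one bound the leverage-score-weighted change in $\log\vd^{(k)}$ by the leverage-score-weighted change in the weights plus an $O(1)$ slippage, both of which are already controlled per step. A secondary, but routine, point is to translate the duality-gap bound produced by the IPM into the exact final error statement $\|\ma\vy-\vb\|_{\ma^{T}\ms^{-2}\ma}\le\epsilon$ and $\vc^{T}\vy\le\mathrm{OPT}+\epsilon$; this is a standard short-step calculation once the centering parameter is driven to $O(\epsilon/U)$.
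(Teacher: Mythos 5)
Your proposal is correct and follows essentially the same route as the paper: the paper's proof simply cites its earlier interior point method (\cite[ArXiv v2, Thm 28 and Lem 32]{lsMaxflow}) for the $\tilde{O}(\sqrt{d}\log(U/\epsilon))$-round reduction to slowly changing linear systems and for the fact that this sequence already satisfies the $\sigma$ stability assumption, then plugs in Theorem~\ref{thm:iteration_solver} (which internally uses $\mathtt{NoisySolver}$ to handle the adaptivity you flag in your step~4). The only difference is that your ``main obstacle'' of re-verifying $\sigma$ stability via Lemma~\ref{lem:lever_score_continuity} is unnecessary, since the cited prior work proves it directly.
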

\begin{proof}
In \cite[ArXiv v2, Thm 28]{lsMaxflow}, we showed how to solve linear
program of the form \eqref{eq:LP} by solving a sequence of slowly
changing linear systems $\ma^{T}\md^{(k)}\ma\vx=\vq^{(k)}$ where
$\md^{(k)}$ is a diagonal matrix corresponding to a weighted distance
of $\vx_{k}$ to the boundary of the polytope. In \cite[ArXiv v2, Lem 32]{lsMaxflow}
we showed that this sequence of linear systems satisfied the $\sigma$
stability assumption. Furthermore, we showed that it suffices to solve
these linear systems to $1/\poly(n)$ accuracy. Since, the algorithm
consists of $\sqrt{d}\log(U/\epsilon)$ rounds of this algorithm plus
additional $\tilde{O}(\nnz(\ma)+d^{2})$ time per round we have the
desired result.
\end{proof}
We remark that we can only output an almost feasible point but it
is difficult to avoid because finding any point $\vx$ such that $\ma\vx=\vb$
takes $O(nd^{\omega-1})$ which is slower than our algorithm when
$n\gg d$. Similarly, we have an algorithm for the dual of \eqref{eq:LP}
as follows:
\begin{thm}
\label{thm:LPSolve_dual} Let $\ma\in\R^{n\times d}$ where $d\leq n$.
Suppose we have an initial point $\vx\in\R^{n}$ such that $\ma^{T}\vx=\vb$
and $-1<x_{i}<1$. Then, we can find $\vy\in\R^{d}$ such that 
\begin{equation}
\vb^{T}\vy+\norm{\ma\vy+\vc}_{1}\leq\min_{\vy}\left(\vb^{T}\vy+\norm{\ma\vy+\vc}_{1}\right)+\epsilon.\label{eq:LP_dual}
\end{equation}
in $\tilde{O}\left(\sqrt{d}\left(\nnz(\ma)+d^{2}\right)\log\left(U/\epsilon\right)\right)$
time where $U=\max\left(\normFull{\frac{2}{1-\vx}}_{\infty},\normFull{\frac{2}{\vx+1}}_{\infty},\norm{\vc}_{\infty}\right)$. \end{thm}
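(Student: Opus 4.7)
The plan is to reduce Theorem~\ref{thm:LPSolve_dual} to Theorem~\ref{thm:LPSolve} via LP duality. I claim that $\min_\vy \vb^T\vy + \norm{\ma\vy+\vc}_1$ is exactly the Lagrangian dual of the standard-form LP
\[
\min_{\vx \in \R^n} \vc^T \vx \quad \text{s.t.} \quad \ma^T \vx = \vb,\quad -\onesVec \leq \vx \leq \onesVec.
\]
To see this, introduce multipliers $\vy$ for $\ma^T\vx = \vb$ and nonnegative $\vs,\vt$ for the box constraints $-\vx \le \onesVec$ and $\vx \le \onesVec$. The Lagrangian is linear in $\vx$, so stationarity gives $\vs - \vt = \vc + \ma\vy$; minimizing $\onesVec^T(\vs+\vt)$ coordinatewise over $\vs,\vt \ge 0$ subject to fixed $\vs_i - \vt_i$ yields $|\vs_i - \vt_i|$, and the dual collapses to exactly $\min_\vy \vb^T\vy + \norm{\ma\vy+\vc}_1$. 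Strong LP duality therefore equates the two optimal values, so it suffices to produce an $\epsilon$-optimal dual certificate.

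Next, I would invoke Theorem~\ref{thm:LPSolve} on the primal above, whose constraint matrix $\ma^T \in \R^{d\times n}$ plays the role of ``$\ma$'' there, with $\vl=-\onesVec$ and $\vu=\onesVec$. The hypothesis of Theorem~\ref{thm:LPSolve_dual} supplies an interior feasible point, since $\ma^T \vx = \vb$ and $-1 < x_i < 1$ is precisely what is required. Substituting into the width parameter of Theorem~\ref{thm:LPSolve} gives $U' = \max(\normFull{2/(\onesVec-\vx)}_\infty, \normFull{2/(\vx+\onesVec)}_\infty, 2, \norm{\vc}_\infty)$, which matches the $U$ in the claim up to a constant inside the logarithm. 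Since $\nnz(\ma^T) = \nnz(\ma)$, the runtime $\tilde O(\sqrt{d}(\nnz(\ma)+d^2)\log(U/\epsilon))$ is inherited directly.

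The main obstacle is that Theorem~\ref{thm:LPSolve} is stated as returning an approximately optimal \emph{primal} $\vx$, whereas we need an approximately optimal \emph{dual} $\vy$. The resolution is that the interior point method of \cite{lsMaxflow} invoked in the proof of Theorem~\ref{thm:LPSolve} is intrinsically a primal--dual method that tracks a point on an approximate central path; at each iteration the dual slacks (and hence the dual iterate $\vy$) are available as a byproduct of the Newton step, since the gradient of the log-barrier at the current primal iterate is exactly $\ma\vy + \vc$ up to the centering parameter. In particular, the barrier subgradient at the output primal $\vx$ yields $\vs,\vt\geq 0$ with $\vs - \vt = \vc + \ma\vy$ and small complementary-slackness gap, so the candidate $\vy$ obeys $\vb^T\vy + \norm{\ma\vy+\vc}_1 = \vb^T\vy + \onesVec^T(\vs+\vt)$ and, by weak and strong duality applied to the near-central primal--dual pair, this lies within $\epsilon$ of the optimum (using the $1/\poly(n)$ centering accuracy already invoked in the proof of Theorem~\ref{thm:LPSolve}).

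Combining the three steps gives $\vy$ satisfying \eqref{eq:LP_dual} within the stated $\tilde O(\sqrt{d}(\nnz(\ma)+d^2)\log(U/\epsilon))$ time. The only real work beyond Theorem~\ref{thm:LPSolve} is making explicit how the primal--dual iterates of \cite{lsMaxflow} furnish the dual; all remaining steps are bookkeeping about the correspondence between the $\ell_1$ objective and the $[-\onesVec,\onesVec]$ box, and the fact that the width and sparsity parameters transport cleanly under the transpose.
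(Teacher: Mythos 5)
Your proposal is correct and ultimately rests on the same machinery as the paper: the paper's entire proof is to invoke \cite[ArXiv v2, Thm 29]{lsMaxflow} (the dual-output counterpart of the theorem cited for Theorem~\ref{thm:LPSolve}), and your duality derivation between $\min_\vy \vb^T\vy+\norm{\ma\vy+\vc}_1$ and the box-constrained primal $\min\{\vc^T\vx : \ma^T\vx=\vb,\ -\onesVec\le\vx\le\onesVec\}$, together with the extraction of an $\epsilon$-accurate dual certificate from the primal--dual central path, is precisely the content of that citation. The one step you flag as the ``main obstacle'' (that the IPM of \cite{lsMaxflow} furnishes the dual iterate with the stated accuracy) is exactly what the paper delegates to the external theorem, so your argument matches the paper's level of rigor while making the reduction explicit.
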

\begin{proof}
It is same as Theorem \ref{thm:LPSolve} except we invoke \cite[ArXiv v2, Thm 29]{lsMaxflow}.\end{proof}
\begin{rem}
The existence of the interior point in Theorem \ref{thm:LPSolve_dual}
certifies the linear program has bounded optimum value. Standard tricks
can be used to avoid requiring such interior point but may yield a
more complicated looking running time (see Appendix~E of \cite{lsInteriorPoint}
for instance).
\end{rem}

\subsection{$\ell^{1}$ and $\ell^{\infty}$ Regression \label{sub:app:regression}}

The $\ell^{p}$ regression problem involves finding a vector $\vx$
that minimize $\norm{\ma\vx-\vc}_{p}$ for some $n\times d$ matrix
$\ma$ and some vector $\vc$. Recently, there has been much research
\cite{mahoney2012fast,nelson2012osnap,li2012iterative,clarkson2013low,Cohen2014}
on solving overdetermined problems (i.e. $n\gg d$) as these arises
naturally in applications involving large datasets. While there has
been recent success on achieving algorithms whose running time is
nearly linear input plus something polynomial in $d$, in the case
that $p\neq2$ these algorithms achieve a polynomial dependence on
the desired accuracy $\epsilon$ \cite{cohen2014ell_p}. Here we show
how to improve the dependence on $\epsilon$ by paying a multiplicative
$\sqrt{d}$ factor and improve upon previous algorithms in this regime.
\begin{cor}
Let $\ma\in\R^{n\times d}$, $\vc\in\mathbb{R}^{n}$, and $p=1$ or
$p=\infty$. There is an algorithm to find $\vx$ such that
\[
\norm{\ma\vx-\vc}_{p}\leq\min_{\vy}\norm{\ma\vy-\vc}_{p}+\epsilon\norm{\vc}_{p}
\]
in $\tilde{O}\left(\sqrt{d}\left(\nnz(\ma)+d^{2}\right)\log\left(\epsilon^{-1}\right)\right)$
time.\end{cor}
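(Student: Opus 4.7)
I would reduce both regression problems to linear programs that fit the hypotheses of Theorem~\ref{thm:LPSolve} or Theorem~\ref{thm:LPSolve_dual} and invoke the corresponding solver. For $p=1$ this is essentially immediate: $\min_\vy \norm{\ma\vy - \vc}_1$ matches the objective of Theorem~\ref{thm:LPSolve_dual} on taking $\vb = \vzero$ and replacing $\vc$ by $-\vc$. A valid initial interior point is just $\vx = \vzero$, since then $\ma^T \vx = \vzero = \vb$ and $-1 < 0 < 1$; this makes the width $U = \max(2, \norm{\vc}_\infty)$ polynomial in the input size. Feeding the LP solver accuracy $\epsilon\norm{\vc}_1$ turns the additive $\mathrm{OPT} + \epsilon$ guarantee of Theorem~\ref{thm:LPSolve_dual} into the relative bound $\mathrm{OPT} + \epsilon\norm{\vc}_1$ required by the corollary, and the resulting $\log U$ and $\log\norm{\vc}_1$ factors are absorbed into $\otilde$.

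For $p = \infty$ I would pass through the LP dual. Using $\norm{\vw}_\infty = \max_{\norm{\vz}_1 \leq 1} \vz^T \vw$ and exchanging min/max via LP strong duality gives
\[
\min_\vx \norm{\ma\vx - \vc}_\infty \;=\; -\min_\vz\bigl\{\vc^T \vz \,:\, \ma^T \vz = \vzero,\ \norm{\vz}_1 \leq 1\bigr\}.
\]
Writing $\vz = \vy_+ - \vy_-$ with $\vy_\pm \geq \vzero$ and introducing a slack $s \geq 0$ via $\vones^T(\vy_+ + \vy_-) + s = 1$ yields an equivalent LP with $d+1$ equality constraints, $2n+1$ box-constrained variables, and constraint matrix with $O(\nnz(\ma))$ nonzeros. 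A strict interior point is $\vy_\pm = \tfrac{1}{4n}\vones$, $s = \tfrac12$, and the width is again polynomial in $n$ and $\norm{\vc}_\infty$, so Theorem~\ref{thm:LPSolve} applies in the target time $\otilde(\sqrt{d}(\nnz(\ma) + d^2)\log(\epsilon^{-1}))$.

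The main obstacle is that Theorem~\ref{thm:LPSolve} as stated returns only the minimizer of the LP it solves, which in the $\ell^\infty$ reduction is the dual variable $\vz$ rather than the primal regression variable $\vx$. To extract $\vx$, I would rely on the fact that the interior point method of \cite{lsMaxflow} underlying Theorem~\ref{thm:LPSolve} maintains a primal--dual iterate throughout the algorithm, so the primal $\vx$ is available at termination essentially for free; standard primal--dual error bounds convert the additive accuracy on the dual LP value into an additive bound on $\norm{\ma\vx - \vc}_\infty$ of the same order, and choosing LP accuracy $\epsilon\norm{\vc}_\infty$ then gives the corollary. Verifying this primal recovery carefully, and in particular checking that the near-feasibility of the LP output (i.e.\ the $\norm{\ma\vy - \vb}_{\ma^T\ms^{-2}\ma} \leq \epsilon$ slack in Theorem~\ref{thm:LPSolve}) only perturbs $\norm{\ma\vx - \vc}_\infty$ by $O(\epsilon\norm{\vc}_\infty)$, is the only part of the argument that requires real work; everything else is bookkeeping.
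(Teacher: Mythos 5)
Your $\ell^{1}$ case is exactly the paper's argument: apply Theorem~\ref{thm:LPSolve_dual} with $\vb=\vzero$, initial point $\vzero$, and accuracy rescaled by $\norm{\vc}_{1}$. The $\ell^{\infty}$ case, however, has a genuine gap. Your reduction via $\norm{\vw}_{\infty}=\max_{\norm{\vz}_{1}\leq1}\vz^{T}\vw$ produces an LP whose variable is the dual vector $\vz\in\R^{2n+1}$, and Theorem~\ref{thm:LPSolve} as stated returns \emph{only} that variable (an almost-feasible, almost-optimal $\vz$); it gives no access to the Lagrange multipliers on $\ma^{T}\vz=\vzero$, which is where the regression variable $\vx$ lives. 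Your proposed fix --- reading $\vx$ off the primal--dual iterates of the method in \cite{lsMaxflow} --- is precisely the step that is not justified by anything stated in the paper, and it is also not automatic quantitatively: an additively $\epsilon$-optimal point for one side of an LP dual pair does not by itself yield an $\epsilon$-optimal point for the other side, so "standard primal--dual error bounds" would have to be extracted from the internals of the IPM, not from the black-box guarantee. As written, the key object the corollary must output is never actually produced.

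The paper sidesteps this entirely by reducing $\ell^{\infty}$ regression to a problem already in the form handled by Theorem~\ref{thm:LPSolve_dual}, whose output variable \emph{is} the regression variable. Concretely, using the identity $\left|a-t\right|+\left|a+t\right|=2t+\mathrm{dist}(a,[-t,t])$, one checks that
\[
\min_{\vx,t}\;\left(-2n+\tfrac{1}{2}\right)t+\norm{\ma\vx-\vc-t\onesVec}_{1}+\norm{\ma\vx-\vc+t\onesVec}_{1}
\]
equals $\min_{\vx,t}\frac{t}{2}+\sum_{i}\mathrm{dist}([\ma\vx-\vc]_{i},[-t,t])$ and hence is equivalent to $\ell^{\infty}$ regression; stacking $\ma$ with $\pm\onesVec$ columns puts this in the form $\vb^{T}\vy+\norm{\ma'\vy+\vc'}_{1}$ with an explicit interior point for the associated constraint $\left(\ma'\right)^{T}\vx=\vb$, so Theorem~\ref{thm:LPSolve_dual} directly returns $(\vx,t)$. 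If you want to salvage your route, you should either find such a direct formulation or prove a primal-recovery lemma for the solver of Theorem~\ref{thm:LPSolve}; the former is what the paper does, and it is the missing idea here.
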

\begin{proof}
The $\ell^{1}$ case is the special case of Theorem \ref{thm:LPSolve_dual}
with $\vb=\vec{0}$ and an explicit initial point $\vec{0}$. 

For the $\ell^{\infty}$ case, we consider the following linear program
\begin{equation}
\min_{\vx,t}\left(-2n+\frac{1}{2}\right)t+\norm{\ma\vx-\vc-t\onesVec}_{1}+\norm{\ma\vx-\vc+t\onesVec}_{1}.\label{eq:LP_linf}
\end{equation}
where $\onesVec\in\R^{n}$ is the all ones vector. For all $t>0$,
we have 
\[
\left|a-t\right|+\left|a+t\right|=\begin{cases}
2t+(a-t) & \text{if }a\geq t\\
2t & \text{if }-t\leq a\leq t\\
2t+(-t-a) & \text{if }a\leq-t
\end{cases}.
\]
Letting $\mathrm{dist}(a,[-t,t])$ denote the distance from $a$ to
the interval $[-t,t]$ (and $|a|+|t|$ if $t\leq0$) we then have
that $\left|a-t\right|+\left|a+t\right|=dist(a,[-t,t])+2t$ and consequently
the linear program \eqref{eq:LP_linf} is equivalent to
\[
\min_{\vx,t}f(\vx,t)\defeq\frac{t}{2}+\sum_{i=1}^{n}\mathrm{dist}\left(\left[\ma\vx-c\right]_{i},[-t,t]\right)\,.
\]
Note that when $t\geq\norm{\ma\vx-\vc}_{\infty}$ we have $f(\vx,t)=\frac{t}{2}$
and when $t\leq\norm{\ma\vx-\vc}_{\infty}$ we have $f(\vx,t)\geq\frac{1}{2}\norm{\ma\vx-\vc}_{\infty}$.
Consequently, the linear program \eqref{eq:LP_linf} is equivalent
to $\ell_{\infty}$ regression. To solve \eqref{eq:LP_linf}, we rewrite
it as follows
\begin{equation}
\min_{\vx,t}\left(-2n+\frac{1}{2}\right)t+\normFull{\left(\begin{array}{cc}
\ma & -\onesVec\\
\ma & \onesVec
\end{array}\right)\left(\begin{array}{c}
\vx\\
t
\end{array}\right)-\vc}_{1}.\label{eq:LP_linf2}
\end{equation}
Since
\[
\left(\begin{array}{cc}
\ma & -\onesVec\\
\ma & \onesVec
\end{array}\right)^{T}\vy=\left(\begin{array}{c}
\vec{0}_{d}\\
-2n
\end{array}\right)\enspace\text{ for }\enspace\vy\defeq\left(\begin{array}{c}
\onesVec\\
-\onesVec
\end{array}\right),
\]
we can use $\frac{2n-\frac{1}{2}}{2n}\vy$ as the initial point for
Theorem~\ref{thm:LPSolve_dual} and apply it to \eqref{eq:LP_linf}
to find $\vx,t$ as desired.\end{proof}
\begin{rem}
We wonder if it is possible to obtain further running time improvements
for solving $\ell^{p}$ regression when $p\notin\{1,2,\infty\}$.
\end{rem}

\subsection{$\tilde{O}(d)$ Rounding Ellipsoid for Polytopes \label{sub:app:rounding}}

For any convex set $K$, we call an ellipsoid $E$ is an $\alpha$-rounding
if $E\subset K\subset\alpha E$. It is known that every convex set
in $d$ dimension has a $d$-rounding ellipsoid and that such rounding
have many applications. (See \cite{khachiyan1996rounding,vempala2010recent})
For polytopes $K=\{\vx\in\R^{d}\,:\,\ma\vx\geq\vb\}$, the previous
best algorithm for finding a $(1+\epsilon)d$ rounding takes time
$\tilde{O}(n^{3.5}\epsilon^{-1})$ \cite{khachiyan1996rounding} and
$\tilde{O}(nd^{2}\epsilon^{-1})$ \cite{kumar2005minimum}. Here we
show how to compute an $O(d)$ rounding in time $\tilde{O}(\sqrt{d}(\nnz(\ma)+d^{2})\log(U))$,
which is enough for many applications. Note that this is an at least
$O(\sqrt{d})$ improvement over previous results and for the case
$\ma$ is sparse, this is an $O(d^{1.5})$ improvement. 

We split our proof into two parts. First we provide a technical lemma,
Lemma~\ref{lem:John_ellipsoid_gurantee}, which shows conditions
under which we a point in a polytope is the center of a suitable ellipse
and in Theorem~\ref{thm:computing_rounding} we show how to us this
Lemma to compute the ellipse.
\begin{lem}
\label{lem:John_ellipsoid_gurantee} Let $P\defeq\{\vx\in\R^{d}\,:\,\ma\vx\geq\vb\}$
be a polytope for $\ma\in\R^{n\times d}$ and $d\leq n$. Furthermore,
let $\vw\in\R_{>0}^{n}$ and let $p(\vx)\defeq-\sum_{i=1}^{n}w_{i}\ln\left[\ma\vx-\vb\right]_{i}$
for all $\vx\in P$. Now let $\vx_{*}=\argmin_{\vy\in\R^{d}}p(\vx)$
and suppose that we have $\gamma\geq1$ such that $\left|(\ma\vh)_{i}/(\ma\vx_{*}-\vb)_{i}\right|\leq\gamma\norm{\vh}_{\nabla^{2}p(\vx_{*})}$
for all $\vh\in\R^{d}$. Then the ellipsoid $E=\{\vh\in\R^{d}\,:\,\norm{\vh}_{\nabla^{2}p(\vx_{*})}\leq1\}$
satisfies
\begin{equation}
\vx+\frac{1}{\gamma}E\subset P\subset\vx+\gamma\norm{\vWeight}_{1}E\,.\label{eq:ellipse_rounding_statement}
\end{equation}
\end{lem}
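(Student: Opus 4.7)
The plan is to prove the two containments in \eqref{eq:ellipse_rounding_statement} separately, using essentially only the first-order optimality condition at $\vx_*$ together with the hypothesized pointwise-to-Hessian norm bound. Throughout, it will be convenient to set $s_i(\vx) \defeq (\ma\vx - \vb)_i$, to note the explicit formulas $\nabla p(\vx) = -\sum_i w_i \ma_i / s_i(\vx)$ and $\nabla^2 p(\vx) = \sum_i w_i \ma_i \ma_i^T / s_i(\vx)^2 = \ma^T \mw \ms(\vx)^{-2} \ma$ where $\ma_i$ is the $i$-th row of $\ma$, and to write $a_i(\vh) \defeq (\ma\vh)_i / s_i(\vx_*)$. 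The hypothesis reads $|a_i(\vh)| \leq \gamma \norm{\vh}_{\nabla^2 p(\vx_*)}$ for all $i$ and all $\vh \in \R^d$.

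For the inner containment, I would take $\vh$ with $\norm{\vh}_{\nabla^2 p(\vx_*)} \leq 1$ and verify that $\vx_* + \tfrac{1}{\gamma}\vh \in P$. By the hypothesis, $|a_i(\vh)| \leq \gamma$, hence $(\ma\vh)_i \geq -\gamma\, s_i(\vx_*)$, so $s_i(\vx_* + \tfrac{1}{\gamma}\vh) = s_i(\vx_*) + \tfrac{1}{\gamma}(\ma\vh)_i \geq 0$ coordinate-wise; this is the desired containment.

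For the outer containment, which is the harder step, I would use first-order optimality $\nabla p(\vx_*) = 0$, which translates into the identity $\sum_i w_i a_i(\vh) = 0$ for every $\vh \in \R^d$. Now fix any $\vy \in P$ and set $\vh = \vy - \vx_*$. Feasibility $s_i(\vy) \geq 0$ forces $a_i(\vh) \geq -1$ for every $i$. The key trick is to combine this one-sided bound with the zero-sum identity: splitting positive and negative parts gives $\sum_{a_i > 0} w_i a_i(\vh) = \sum_{a_i < 0} w_i |a_i(\vh)| \leq \sum_{a_i < 0} w_i \leq \norm{\vw}_1$, whence $\sum_i w_i |a_i(\vh)| \leq 2\norm{\vw}_1$. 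Applying this, together with the hypothesis, yields
\[
\norm{\vh}_{\nabla^2 p(\vx_*)}^2 = \sum_i w_i a_i(\vh)^2 \leq \bigl(\max_i |a_i(\vh)|\bigr)\sum_i w_i |a_i(\vh)| \leq 2\gamma \norm{\vw}_1 \cdot \norm{\vh}_{\nabla^2 p(\vx_*)},
\]
so $\norm{\vh}_{\nabla^2 p(\vx_*)} \leq 2\gamma \norm{\vw}_1$, giving $\vy \in \vx_* + O(\gamma \norm{\vw}_1)\,E$ as required (absorbing the constant $2$ into the stated bound, or tightening the split to remove it if one argues more carefully about the two halves of the sum separately).

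The main obstacle is the outer containment: the inner one follows immediately from the assumption, but for the outer one neither the assumption alone nor the feasibility of $\vy$ alone suffices. The crucial insight to flag is that the stationarity of $\vx_*$ converts the one-sided slack bound $a_i \geq -1$ into a two-sided $\ell^1(\vw)$ bound on $a_i$, after which the hypothesis is used once more to trade one factor of $\max_i |a_i|$ for $\gamma \norm{\vh}_{\nabla^2 p(\vx_*)}$ and cancel a power of the Hessian norm from both sides.
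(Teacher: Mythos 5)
Your proof is correct in structure and follows essentially the same route as the paper: the inner containment is the identical one-line argument, and for the outer containment both proofs combine the stationarity identity $\sum_i w_i a_i(\vh)=0$ with the one-sided feasibility bound $a_i(\vh)\geq-1$ and a single application of the hypothesis to cancel one power of the Hessian norm. The only substantive difference is the final algebraic step, and there your version falls short of the stated constant: splitting into positive and negative parts gives $\sum_i w_i|a_i(\vh)|\leq 2\norm{\vw}_1$ and hence only $\norm{\vh}_{\nabla^2 p(\vx_*)}\leq 2\gamma\norm{\vw}_1$; refining the split by using $a_i(\vh)^2\leq|a_i(\vh)|$ on the negative part still leaves a bound of the form $(\gamma\norm{\vh}_{\nabla^2 p(\vx_*)}+1)\norm{\vw}_1$ rather than $\gamma\norm{\vh}_{\nabla^2 p(\vx_*)}\norm{\vw}_1$. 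The paper avoids this loss by working with the nonnegative ratios $1+a_i(\vh)=s_i(\vy)/s_i(\vx_*)$: since $\sum_i w_i(1+a_i(\vh))=\norm{\vw}_1$, one gets $\sum_i w_i a_i(\vh)^2=\sum_i w_i(1+a_i(\vh))^2-\norm{\vw}_1\leq\norm{\vw}_1\max_i(1+a_i(\vh))-\norm{\vw}_1\leq\norm{\vw}_1\max_i|a_i(\vh)|$, which yields exactly $\gamma\norm{\vw}_1$. So your hedge that the factor of $2$ can be removed is right, but the tightening required is this completion-of-the-square/H\"older step on $1+a_i(\vh)$, not a more careful treatment of the two halves of your split; as written your argument proves the lemma only with $2\gamma\norm{\vw}_1$ in place of $\gamma\norm{\vw}_1$ (which would still suffice, up to constants, for the rounding application).
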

\begin{proof}
For any $\vz\in\frac{1}{\gamma}E$, by the assumptions, we have that
\[
\max_{i\in[n]}\left|\frac{(\ma\vz)_{i}}{(\ma\vx_{*}-\vb)_{i}}\right|\leq\gamma\norm{\vz}_{\hess p(\vx_{*})}\leq1\,.
\]
Consequently, for all $\vz\in\frac{1}{\gamma}E$ we have $\ma(\vx_{*}+\vz)\geq\vb$
and therefore $\vx_{*}+\frac{1}{\gamma}B_{\nabla^{2}p(\vx_{*})}\subset P$. 

To prove the other side of \eqref{eq:ellipse_rounding_statement}
let $\vx\in P$ be arbitrary and let $\vs\defeq\ma\vx-\vb$, $\vs_{*}\defeq\ma\vx_{*}-\vb$,
$\ms\defeq\mDiag(\vs)$, $\ms_{*}=\mDiag(\vx_{*})$, and $\mWeight\defeq\mDiag(\vw)$.
By the optimality conditions on $\vx_{*}$ we have 
\[
\ma\mWeight\ms_{*}^{-1}\onesVec=\grad p(\vx_{*})=\vzero\,.
\]
Consequently, 
\begin{equation}
0=\onesVec^{T}\mWeight\ms_{*}^{-1}\ma\left(\vx_{*}-\vx\right)=\onesVec^{T}\mWeight\ms_{*}^{-1}\left(\vs_{*}-\vs\right)=0\label{eq:a_long_zero}
\end{equation}
and therefore as $\onesVec^{T}\mWeight\ms_{*}^{-1}\vs_{*}=\norm{\vw}_{1}$
we have $\onesVec^{T}\mWeight\ms_{*}^{-1}\vs_{\vx}=\norm{\vWeight}_{1}$
and therefore
\begin{align*}
\sum_{i=1}^{n}w_{i}\frac{[\vs_{*}-\vs]_{i}^{2}}{[\vs_{*}]_{i}^{2}} & =\normFull{\ms_{*}^{-1}\vs_{*}}_{\mWeight}^{2}-2\vs_{*}^{T}\ms_{*}^{-1}\mWeight\ms_{*}^{-1}\vs+\normFull{\ms_{*}^{-1}\vs}_{\mWeight}^{2}\tag{Expanding the quadratic}\\
 & =\normFull{\ms_{*}^{-1}\vs_{\vx}}_{\mWeight}^{2}-\norm{\vWeight}_{1}\tag{Using \ensuremath{\onesVec^{T}\mWeight\ms_{*}^{-1}\vs}=\ensuremath{\norm{\vw}_{1}}}\\
 & \leq\norm{\mWeight\ms_{*}^{-1}\vs_{\vx}}_{1}\norm{\ms_{*}^{-1}\vs_{\vx}}_{\infty}-\norm{\vWeight}_{1}\tag{Cauchy Schwarz}\\
 & =\norm{\vWeight}_{1}\normFull{\ms_{*}^{-1}\left(\vs-\ms_{*}\right)}_{\infty}\tag{Using \ensuremath{\onesVec^{T}\mWeight\ms_{*}^{-1}\vs}=\ensuremath{\norm{\vw}_{1}}}\\
 & \leq\norm{\vWeight}_{1}\cdot\gamma\norm{\vx-\vx_{*}}_{\hess p(\vx_{*})}\tag{Assumption}
\end{align*}
Now, $\hess p(\vx^{*})=\ma^{T}\ms_{*}^{-1}\mWeight\ms_{*}^{-1}\ma$
and therefore $\sum_{i=1}^{n}w_{i}\frac{[\vs_{*}-\vs]_{i}^{2}}{[\vs_{*}]_{i}^{2}}=\norm{\vx-\vx_{*}}_{\nabla^{2}p(\vx_{*})}^{2}$.
Dividing both sides of the above equation by $\norm{\vx-\vx_{*}}_{\nabla^{2}p(\vx_{*})}$
yields that $\norm{\vx-\vx_{*}}_{\nabla^{2}p(\vx_{*})}\leq\gamma\norm{\vWeight}_{1}$
yielding the right hand side of \eqref{eq:ellipse_rounding_statement}
and completing the proof.\end{proof}
\begin{thm}
\label{thm:computing_rounding} Let $\ma\in\R^{n\times d}$ for $d\leq n$
and suppose we have an initial point $\vx_{0}\in\R^{d}$ such that
$\ma\vx_{0}\geq\vb$. Then, we can find an ellipsoid $E$ such that
$E\subset\{\vx\in\R^{d}\,:\,\ma\vx\geq\vb\}\subset100d\cdot E$ in
time $\tilde{O}(\sqrt{d}\left(\nnz(\ma)+d^{2}\right)\log(U))$ where
$U=\max\left\{ \max_{\ma\vx\geq\vb}\norm{\ma\vx-\vb}_{\infty},\normFull{\frac{1}{\ma\vx_{0}-\vb}}_{\infty}\right\} $. \end{thm}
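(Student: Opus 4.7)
My plan is to apply Lemma~\ref{lem:John_ellipsoid_gurantee} with a weight vector $\vw$ chosen as the John-ellipsoid (i.e.\ $\ell_\infty$-Lewis) weights at $\vx_*$, so that $\|\vw\|_1 \leq 2d$ and the hypothesis of the lemma holds with $\gamma = O(1)$. The lemma then yields $\vx_* + \tfrac{1}{\gamma}E \subset P \subset \vx_* + \gamma\|\vw\|_1 E$, and rescaling the inner ellipse by $\gamma$ produces a $\gamma^2\|\vw\|_1 = O(d)$-rounding, which fits inside the $100d$ budget.

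To see why John weights work, set $\vs_* \defeq \ma\vx_* - \vb$, $\ms_* \defeq \mDiag(\vs_*)$, $\mWeight \defeq \mDiag(\vw)$, $\vu \defeq \ms_*^{-1}\ma\vh$, and $\mb \defeq \mWeight^{1/2}\ms_*^{-1}\ma$. The lemma's hypothesis $|(\ma\vh)_i/(\vs_*)_i| \leq \gamma\|\vh\|_{\nabla^2 p(\vx_*)}$ rewrites as $|u_i|^2 \leq \gamma^2\|\vu\|_{\mWeight}^2$, and the standard leverage-score projection bound gives $|u_i|^2 = w_i^{-1}|(\mb\vh)_i|^2 \leq w_i^{-1}\sigma_i(\mb)\|\vu\|_{\mWeight}^2$. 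John weights satisfy the fixed-point equation $w_i = \sigma_i(\mb)$ together with $\sum_i w_i = d$, so this gives $\gamma \leq 1$; small approximations in both $\vx_*$ and $\vw$ only inflate these constants by bounded factors, leaving us well within the $100d$ target.

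To compute $(\vx_*, \vw)$, I would invoke the weighted-central-path interior point framework underlying Theorem~\ref{thm:LPSolve}, applied to $P$ with a zero (or shrinking) objective and the weight-update rule that drives $\vw$ to the John fixed point. This scheme converges to a weighted analytic center, and its trajectory satisfies the $\sigma$ stability assumption, so each per-iteration linear solve of the form $\ma^T\md\ma\vx = \vq$ to $1/\poly(n)$ accuracy is handled in amortized time $\tilde{O}(\nnz(\ma) + d^2)$ by our inverse maintenance algorithm (Theorem~\ref{thm:iteration_solver}); the number of iterations is $O(\sqrt{d}\log U)$, matching the stated runtime. The main obstacle will be verifying that the \emph{approximate} output $(\tilde{\vx}, \tilde{\vw})$ of this scheme still satisfies the hypothesis of Lemma~\ref{lem:John_ellipsoid_gurantee} with $\gamma = O(1)$ and $\|\tilde{\vw}\|_1 \leq 2d$, which requires driving both the weight-fixed-point residual $|w_i - \sigma_i(\tilde\mb)|$ and the centering residual $\|\nabla p(\tilde{\vx})\|_{(\nabla^2 p)^{-1}}$ below sufficiently small absolute thresholds; this only adds polylogarithmic iterations and is absorbed into the $\tilde{O}(\cdot)$.
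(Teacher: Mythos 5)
Your proposal follows essentially the same route as the paper: invoke Lemma~\ref{lem:John_ellipsoid_gurantee} with (approximate) John/Lewis weights and the corresponding weighted analytic center, both computed via the weighted path-following machinery behind Theorem~\ref{thm:LPSolve} in $\tilde{O}(\sqrt{d}(\nnz(\ma)+d^2)\log U)$ time; the paper simply cites \cite{lsInteriorPoint} for the guarantees $\|\vw\|_1\leq 3d$ and $\gamma=3$ rather than rederiving the leverage-score fixed-point bound as you do. The one step you flag as an obstacle but do not carry out --- substituting the approximate center $\tilde\vx$ for $\vx_*$ --- is handled in the paper by the explicit bound $\|\vx-\vx_*\|_{\nabla^2 p(\vx_*)}\leq\tfrac{1}{10}$ together with a spectral comparison of $\nabla^2 p(\vx)$ and $\nabla^2 p(\vx_*)$, which only perturbs the rounding constants.
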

\begin{proof}
Given any interior point $\vx_{0}$ such that $\ma\vx_{0}>\vb$ in
\cite{lsInteriorPoint} we showed how to find a weight $\vWeight$
such that $\norm{\vWeight}_{1}\leq3d$ and for any $\vh\in\R^{d}$
we have 
\begin{equation}
\max_{i\in[n]}\left|\frac{[\ma\vh]_{i}}{[\ma\vx_{*}-\vb]_{i}}\right|\leq3\norm{\vh}_{\nabla^{2}p(\vx_{*})}\label{eq:w_gamma_condition}
\end{equation}
where $p(\vx)=-\sum_{i=1}^{n}w_{i}\ln\left[\ma\vx-\vb\right]_{i}$
and $\vx_{*}=\argmin_{\ma\vy\geq\vb}p(\vy)$. Furthermore, we showed
how to find $\vx\in\R^{d}$ such that 
\begin{equation}
\sum_{i=1}^{n}w_{i}\frac{[\ma(\vx-\vx_{*})]_{i}^{2}}{(\ma\vx_{*}-\vb)_{i}^{2}}\leq\frac{1}{100}\,.\label{eq:apx_E}
\end{equation}
As discussed in Theorem~\ref{thm:LPSolve}, this can be done in time
$\tilde{O}(\sqrt{d}\left(\nnz(\ma)+d^{2}\right)\log(U))$. Consequently,
we can use Lemma~\ref{lem:John_ellipsoid_gurantee} with $\gamma=3$
and $\norm{\vWeight}_{1}=3d$ and this gives us an ellipsoid $E$
such that
\[
\vx_{*}+\frac{1}{3}E\subset P\subset\vx_{*}+9dE
\]
where $E=\{\vh\in\R^{d}\,:\,\norm{\vh}_{\nabla^{2}p(\vx_{*})}\leq1\}$. 

Now, we show how to approximate $E$ using $\vx$. \eqref{eq:apx_E}
shows that that $\norm{\vx-\vx_{*}}_{\nabla^{2}p(\vx_{*})}\leq\frac{1}{10}$,
therefore, we have $\vx-\vx_{*}\in\frac{1}{10}E$ and hence
\[
\vx+\left(\frac{1}{3}-\frac{1}{10}\right)E\subset\vx_{*}+\frac{1}{3}E\subset P\subset\vx_{*}+9dE\subset\vx+\left(9d+\frac{1}{10}\right)E.
\]
Now, using \eqref{eq:apx_E} and \eqref{eq:w_gamma_condition}, we
have $\left|\frac{\ma(\vx-\vx_{*})}{\ma\vx_{*}-\vb}\right|_{i}\leq\frac{3}{10}$,
we have 
\[
\frac{1}{(1+\frac{3}{10})^{2}}\nabla^{2}p(\vx)\preceq\nabla^{2}p(\vx_{*})=\ma^{T}\ms_{*}^{-1}\mWeight\ms_{*}^{-1}\ma\preceq\frac{1}{(1-\frac{3}{10})^{2}}\nabla^{2}p(\vx)\,.
\]
Therefore, we have
\[
\vx+\frac{\frac{1}{3}-\frac{1}{10}}{1+\frac{3}{10}}E'\subset P\subset\vx+\frac{9d+\frac{1}{10}}{1-\frac{3}{10}}E'
\]
where $E'=\{\vh\in\R^{d}\,:\,\norm{\vh}_{\nabla^{2}p(\vx)}\leq1\}$.
After rescaling the ellipsoid, we have the result.
\end{proof}

\subsection{Multicommodity Flow \label{sub:app:multicommodity_flow}}

Here we show how our algorithm can be used to improve the running
time for solving multicommodity flow. Note that the result presented
here is meant primarily to illustrate our approach, we believe it
can be further improved using the techniques in \cite{kapoor1996speeding,lsMaxflow}.

For simplicity, we focus on the maximum concurrent flow problem. In
this problem, we are given a graph $G=(V,E)$ $k$ source sink pairs
$(s_{i},t_{i})\in\R^{V\times V}$ , and capacities $\vc\in\R^{E}$
and wish to compute the maximum $\alpha\in\R$ such that we can simultaneously
for all $i\in[k]$ route $\alpha$ unit of flow $f_{i}\in\R^{E}$
between $s_{i}$ and $t_{i}$ while maintaining the capacity constraint
$\sum_{i=1}^{k}\left|f_{i}(e)\right|\leq c(e)$ for all $e\in E$.
There are no combinatorial algorithms known and there are multiple
algorithms to compute a $(1-\epsilon)$ optimal flow in time polynomial
in $|E|$, $|V|$ and $\epsilon^{-1}$ \cite{fleischer2000approximating,karakostas2002faster,garg2007faster,madry2010faster}.
In this regime, the fastest algorithm for directed graphs takes $O((|E|+k)|V|\epsilon^{-2}\log U)$
\cite{madry2010faster} and the fastest algorithm for undirected graphs
takes $\tilde{O}(|E|^{1+o(1)}k^{2}\epsilon^{-2}\log^{O(1)}U)$ \cite{lee2014linearmaxflow}
where $U$ is the maximum capacity. For linear convergence, the previous
best algorithm is a specialized interior point method that takes time
$O(\sqrt{|E|k}|V|^{2}k^{2}\log(\epsilon^{-1}))$ \cite{kapoor1996speeding}.

To solve the concurrent multicommodity flow problem we use the following
linear program
\begin{eqnarray*}
\max & \alpha\\
g(e) & = & \sum f_{i}(e)\text{ for all edges }e,\\
\sum_{v\in V}f_{i}(u,v) & = & 0\text{ for all vertices }u\notin\{s_{i},t_{i}\},\\
\sum_{v\in V}f_{i}(s_{i},v) & = & \alpha\text{ for all }i,\\
c(e)\geq & f_{i}(e) & \geq0\enspace\text{for all edges \ensuremath{e},}\\
c(e)\geq & g(e) & \geq0\enspace\text{for all edges \ensuremath{e}}.
\end{eqnarray*}
Note that there are $O(k|E|)$ variables, $O(k|V|+|E|)$ equality
constraints, and $O(k|E|)$ non-zeroes in the constraint matrix. Furthermore,
it is easy to find an initial point by computing a shortest path for
each $s_{i}$ and $t_{i}$ and sending a small amount of flow along
that path. Also, given any almost feasible flow, one can make it feasible
by scaling and send excess flow at every vertex back to $s_{i}$ along
some spanning tree. Therefore, Theorem~\ref{thm:LPSolve} gives an
algorithm that takes $\tilde{O}(\sqrt{|E|+k|V|}\left(k|E|+(|E|+k|V|)^{2}\right)\log(U/\epsilon))=\otilde\left((|E|+k|V|)^{2.5}\log(U/\epsilon)\right)$
time. Note that this is faster than the previous best algorithm when
$k\geq(|E|/|V|)^{0.8}$.

\subsection{Minimum Cost Flow\label{sub:app:mincost_flow}}

The minimum cost flow problem and the more specific, maximum flow
problem, are two of the most well studied problems in combinatorial
optimization \cite{schrijver2003combinatorial}. Many techniques have
been developed for solving this problem. Yet, our result matches the
fastest algorithm for solving these problems on dense graphs \cite{lsMaxflow}
without using any combinatorial structure of this problem, in particular,
Laplacian system solvers. We emphasize that this result is not a running
time improvement, rather just a demonstration of the power of our
result and an interesting statement on efficient maximum flow algorithms.
\begin{cor}
There is an $\otilde(|V|^{2.5}\log^{O(1)}\left(U\right))$ time algorithm
to compute an exact minimum cost maximum flow for weighted directed
graphs with $|V|$ vertices, $|E|$ edges and integer capacities and
integer cost at most $U$.\end{cor}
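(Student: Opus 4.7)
The plan is to realize the minimum cost maximum flow problem as an instance of the linear program in Theorem~\ref{thm:LPSolve} and apply that theorem with sufficiently small accuracy $\epsilon$, then round to an exact integral optimum. Concretely, let $\ma \in \R^{|V|\times |E|}$ be the signed incidence matrix of the directed graph, $\vc \in \R^{|E|}$ the cost vector, and $\vec{u} \in \R^{|E|}$ the capacities. Max flow value $F$ can be folded into the demand vector $\vb = F(\indicVec{s}-\indicVec{t})$, so the LP $\min_{\ma\vx = \vb,\ 0 \le x_i \le u_i} \vc^T \vx$ is of the exact form handled by Theorem~\ref{thm:LPSolve} with $n = |E|$ variables and $d = |V|$ equality constraints. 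Since $\nnz(\ma) = O(|E|) = O(|V|^2)$, the theorem's running time becomes $\otilde\!\left(\sqrt{|V|}\,(|V|^2 + |V|^2)\log(U/\varepsilon)\right) = \otilde\!\left(|V|^{2.5}\log(U/\varepsilon)\right)$.

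To make this a two-sided reduction, first I would compute the maximum flow value $F^{*}$ by binary-searching on $F$ (each feasibility check is itself an LP solved by Theorem~\ref{thm:LPSolve}); at most $O(\log(|V|U))$ searches are needed because $F^{*}$ is an integer bounded by $|V|U$. Equivalently one can write the single LP $\max F$ subject to flow conservation and capacity bounds. To apply Theorem~\ref{thm:LPSolve} I also need a strictly interior initial point: this is straightforward by choosing $\vx = \tfrac{1}{2}\vec{u}$ on an augmenting tree routing $F/|V|$ units, and then perturbing by $O(1/\poly(|V|U))$ to stay strictly inside every inequality; the hypotheses $\|(\vu-\vl)/(\vu-\vx)\|_\infty,\ \|(\vu-\vl)/(\vx-\vl)\|_\infty,\ \|\vu-\vl\|_\infty,\ \|\vc\|_\infty \le \poly(|V|,U)$ then hold so the $\log U$ factor in the bound is $\log^{O(1)} U$.

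The remaining step is to turn the near-optimal, near-feasible primal from Theorem~\ref{thm:LPSolve} into an exact integral min cost max flow. The standard argument is that every vertex of the min cost flow polytope has coordinates with denominators dividing a subdeterminant of $\ma$, and since $\ma$ is totally unimodular these denominators are $1$; hence if our approximate $\vy$ satisfies $\|\vy - \vx^{*}\|_\infty \le \tfrac{1}{4}$ for some optimal vertex $\vx^{*}$ and $\vc^T\vy \le \mathrm{OPT} + \tfrac{1}{2}$, one can recover $\vx^{*}$ by rounding after first correcting the $O(1/\poly(|V|U))$ residual infeasibility $\ma\vy - \vb$ via a cheap cycle-cancelling/augmenting step on the near-integral support. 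Picking $\varepsilon = 1/\poly(|V|,U)$ in Theorem~\ref{thm:LPSolve} suffices, adding only $\log^{O(1)}(U)$ to the running time.

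The main obstacle I expect is the rounding/exactness step rather than the LP invocation: one has to argue carefully that an approximate LP solution in the $\|\cdot\|_{\ma^T \ms^{-2}\ma}$-infeasibility norm guaranteed by Theorem~\ref{thm:LPSolve} can be converted to a truly feasible integer flow without blowing up the running time beyond $\otilde(|V|^{2.5}\log^{O(1)}U)$. This is handled by the classical fact that $\ell_\infty$-proximity to an optimal vertex of a totally unimodular polytope propagates to exact recovery via a single round of residual routing along a spanning tree, whose cost is dominated by the LP solve. Combining these pieces gives the claimed $\otilde(|V|^{2.5}\log^{O(1)}U)$ algorithm.
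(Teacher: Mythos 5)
Your proposal follows essentially the same route as the paper, which simply invokes Theorem~\ref{thm:LPSolve} on the standard min-cost-flow LP and defers the explicit interior point and the rounding of a $1/\poly(|V|U)$-accurate fractional solution to an exact integral optimum to \cite[ArXiv v2, Thm 34, 35]{lsMaxflow}. Your accounting of the parameters ($n=|E|$, $d=|V|$, $\nnz(\ma)=O(|V|^2)$, $\epsilon=1/\poly(|V|U)$) and your identification of the rounding step as the delicate part match the paper's argument.
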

\begin{proof}
The proof is same as \cite[ArXiv v2, Thm 34,35]{lsMaxflow} except
that we use Theorem~\ref{thm:LPSolve} to solve the linear program.
The proof essentially writes the minimum cost flow problem into a
linear program with an explicit interior point and shows how to round
an approximately optimal solution to a vertex of the polytope. To
perform this rounding, we need to a fractional solution with error
less than $O(\frac{1}{\poly(|V|U)})$ and which yields the $\log(U)$
term in the running time.
\end{proof}

\subsection{Convex Problems\label{sub:app:convex_problem}}

Many problems in convex optimization can be efficiently reduced to
the problem of finding a point in a convex set $K$ given a separation
oracle. Recall that given a point $\vx$, the separation oracle either
outputs that $\vx$ is in $K$ or outputs a separating hyperplane
that separates the input point $\vx$ and the convex set $K$. In
\cite{lee2015convex}, they showed that how to make use of our fast
algorithms for inverse maintenance problem under the $K$ stability
assumption to obtain the following improved running time for this
fundamental problem:
\begin{thm}[\cite{lee2015convex}]
Given a non-empty convex set $K\subseteq\R^{d}$ that is contained
in a box of radius $R$, i.e. $\max_{x\in K}\norm x_{\infty}\leq R$.
We are also given a separation oracle for $K$ that takes $O(\mathcal{T})$
time for each call. For any $0<\epsilon<R$, we can either finds $\vx\in K$
or proves that $K$ does not contains a ball with radius $\epsilon$
in time $O(d\mathcal{T}\log(dR/\epsilon)+d^{3}\log^{O(1)}(dR/\epsilon))$.\end{thm}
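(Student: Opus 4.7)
The plan is to use a Vaidya-style cutting-plane method that maintains a polytope $P_k = \{\vx \in \R^d : \ma_k \vx \geq \vb_k\}$ containing $K$. In each round compute an approximate volumetric center $\vx_k$ of $P_k$ and query the separation oracle at $\vx_k$; if $\vx_k \in K$ output it, otherwise append the returned separating hyperplane to $\ma_k$ as a new row and drop any rows whose leverage score in the current weighted system has fallen below a fixed small threshold (to keep the total number of constraints at $O(d)$). The standard volumetric potential analysis of Vaidya bounds the number of such rounds by $O(d \log(dR/\epsilon))$; if no feasible point has been found after that many rounds, the volumetric potential argument certifies that the remaining $P_k$ contains no ball of radius $\epsilon$.

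The per-round work outside of the oracle call consists of a constant number of applications of a solver for $\ma_k^T \md_k \ma_k$ (where $\md_k$ is the diagonal of slack-based weights used by the volumetric barrier) together with an approximate leverage-score computation via Lemma~\ref{lem:computing_leverage_scores}. A naive per-round rebuild of the solver would cost $\otilde(d^\omega)$ and yield only $\otilde(d^{\omega+1}\log^{O(1)}(dR/\epsilon))$ in total. The key step is to invoke Theorem~\ref{thm:iteration_solver_2} to reduce the amortized per-round maintenance cost to $\otilde(d^2)$. Combined with the $O(d\log(dR/\epsilon))$ oracle calls, this gives the claimed total running time of $O(d \mathcal{T} \log(dR/\epsilon) + d^3 \log^{O(1)}(dR/\epsilon))$.

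Applying Theorem~\ref{thm:iteration_solver_2} requires showing the sequence $(\ma_k, \md_k)$ satisfies the $K$-stability assumption (Definition~\ref{ass:stability_sigma_2}) for $K \leq d^{(3-\omega)/(\omega-1)}$. I would split the rounds into two kinds. In centering rounds only $\md_k$ changes, and a standard Newton-step analysis on the volumetric barrier, together with the fact that the algorithm only advances the iterate once the Newton decrement is small (inserting $O(\log n)$ intermediate sub-rounds if necessary, as in the proof of Theorem~\ref{thm:iteration_solver_2}), gives slack changes bounded by $O(1)$ in both $\ellInf$ and $\vLever_{\ma}$ norm, which is case (a). Rounds where the polytope changes fall under case (b); a careful implementation ensures that at most $O(1)$ rows are added (the new cut) or dropped per iteration, using a potential argument on the volumetric functional to bound the amortized number of low-leverage deletions. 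This gives $K = O(1)$, trivially satisfying the hypothesis. Finally, since the separation oracle may adversarially depend on our random solver outputs, every solver call is wrapped in the $\mathtt{NoisySolver}$ transformation of Section~\ref{sec:randomness}, increasing the cost only by polylogarithmic factors.

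The main obstacle is establishing the stability condition in the presence of row insertion and deletion while simultaneously running the volumetric centering scheme: one must argue that the center can be tracked by short Newton steps whose associated slack vector respects the $\sigma$-stability norm bounds even across rounds where rows appear or disappear, and that the constraint-removal rule produces only $O(1)$ deletions per iteration on average. This is a self-contained but intricate potential analysis of the volumetric functional interacting with the row-management policy; the remaining ingredients (solver maintenance, leverage-score computation, randomness hiding) are then invoked as black boxes from Sections~\ref{sec:low_rank}--\ref{sec:randomness}.
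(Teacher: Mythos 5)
First, note that the paper does not prove this theorem at all: it is imported wholesale from \cite{lee2015convex}, and the only internal content is the accompanying remark explaining \emph{which} of this paper's tools that external proof consumes (namely Theorem~\ref{thm:iteration_solver_2}), together with the observation that the inverse maintenance instance arising there actually satisfies the stronger $\ell^{2}$ stability assumption with only $1$ row added or removed per round, and that the matrix $\ma$ involved has only $O(d)$ rows --- so that even a simple variant of Vaidya's original low-rank-update scheme would suffice. Your outline is a reasonable reconstruction of the shape of the external argument (Vaidya-style volumetric cutting plane, $O(d\log(dR/\epsilon))$ oracle calls, amortized $\otilde(d^{2})$ linear algebra per round via inverse maintenance, randomness hiding via $\mathtt{NoisySolver}$), and you correctly identify Theorem~\ref{thm:iteration_solver_2} as the interface to this paper. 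Your route through the $\sigma$/$K$ stability assumption is heavier machinery than the remark says is needed, but that is a difference of emphasis, not an error.

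The genuine gap is the one you yourself flag and then do not close: essentially the entire quantitative content of the theorem lives in the ``main obstacle'' paragraph. You assert, without proof, that (i) the volumetric potential argument still yields $O(d\log(dR/\epsilon))$ rounds and a valid infeasibility certificate when the center is only approximate, the leverage scores are only approximate, and constraints are being dropped by a leverage-score threshold; (ii) the centering steps keep the slack vector stable in both $\ellInf$ and the leverage-score norm; and (iii) the deletion rule removes only $O(1)$ rows per round on average. None of these follows from anything in this paper --- Lemma~\ref{lem:lever_score_continuity} and Theorem~\ref{thm:iteration_solver_2} give you the inverse-maintenance side \emph{conditional} on such stability, but establishing the stability and the iteration bound for the cutting-plane dynamics is a separate, substantial analysis (it is the bulk of \cite{lee2015convex}). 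As written, the proposal is a plan that reduces the theorem to exactly the hard part and stops there, so it cannot be accepted as a proof.
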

\begin{rem}
\cite{lee2015convex} uses Theorem~\ref{thm:iteration_solver_2}
to solve the linear systems involved in their cutting plane method.
However, their inverse maintenance problem satisfies the $\ell^{2}$
stability assumption with $1$ rows addition and removal. Furthermore,
the $\ma$ involved has only $O(d)$ many rows. Therefore, we believe
a simple variant of \cite{vaidya1989speeding} may also suffice for
that paper.
\end{rem}

\section{Open Problem: Sampling from a Polytope \label{sub:app:sample}}

Sampling a random point in convex sets is a fundamental problem in
convex geometry with numerous applications in optimization, counting,
learning and rounding \cite{vempala2010recent}. Here consider a typical
case where the convex set is a polytope that is explicitly given as
$\{\vx\in\R^{d}\,:\,\ma\vx\geq\vb\}$ for $\ma\in\R^{n\times d}.$
The current fastest algorithm for this setting is Hit-and-Run \cite{lovasz2006fast}
and Dikin walk \cite{kannan2012random}. 

Given an initial random point in the set, Hit-and-Run takes $O^{*}(d^{3})$
iterations and each iteration takes time $O^{*}(\nnz(A))$ while Dikin
walk takes $O^{*}(nd)$ iterations and each iteration takes time $O^{*}(nd^{\omega-1})$
where the $O^{*}$ notation omits the dependence on error parameters
and logarithmic terms. Each iteration of Dikin walk is expensive because
it solves a linear system to obtain the next point and computes determinants
to implement an importance sampling scheme. The linear systems can
be solved in amortized cost $O^{*}(\nnz(\ma)+d^{2})$ by the inverse
maintenance machinery presented in this paper. Unfortunately it is
not known how to use this machinery to efficiently compute the determinant
to sufficient accuracy to suffice for this method. 

We leave it as an open problem whether it is possible to circumvent
this issue and improve the running time of a method like the Dikin
walk. In an older version of this paper, we mistakenly claimed an
improved running time for Dikin walk by noting solely the improved
running time for linear system solving and ignoring the determinant
computation. We thank Hariharan Narayanan for pointing out this mistake.

\section{Acknowledgments}

We thank Yan Kit Chim, Andreea Gane, and Jonathan A. Kelner for many
helpful conversations. This work was partially supported by NSF awards
0843915 and 1111109, NSF Graduate Research Fellowship (grant no. 1122374).
Part of this work was done while both authors were visiting the Simons
Institute for the Theory of Computing, UC Berkeley.

\bibliographystyle{plain}
\bibliography{main}

\appendix

\section{Relationships Between $\mathcal{T}$-time Linear Solver and Inverse
Matrix\label{sec:relationships_linear_spectral}}

In this section, we prove relationships between linear $\time$-time
solvers and inverse matrices. In Lemma~\ref{lem:approx_implies_solver}
we show that obtaining a spectral approximation to the inverse of
a matrix suffices to obtain a linear solver. In Lemma~\ref{lem:solver_implies_approx},
we show how a linear solver of a matrix yields a spectral approximation
to the inverse of that matrix.
\begin{lem}
\label{lem:approx_implies_solver} Given a PD matrix $\mm$ and a
symmetric matrix $\mn\approx_{O(1)}\mm^{-1}$ that can be applied
to a vector in $O(\time)$ time we have a linear $(\nnz(\mm)+\time)$-time
solver $\solver$ of $\mm$. \end{lem}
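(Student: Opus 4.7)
The plan is to use preconditioned Chebyshev iteration with $\mn$ serving as the preconditioner for $\mm$. Since $\mn \approx_{O(1)} \mm^{-1}$ and $\mm$ is PD, the matrix $\mn$ is also PD, and the preconditioned operator $\mn^{1/2} \mm \mn^{1/2}$ has all eigenvalues contained in an interval $[a, b]$ with $b/a = e^{O(1)} = O(1)$. Consequently the condition number of the preconditioned system is a universal constant $\kappa$, independent of $\mm$ and $\mn$.

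First I would write down the Chebyshev recurrence for solving $\mm \vx = \vb$ with preconditioner $\mn$: starting with $\vx_0 = \vzero$, compute the residual $\vr_k = \vb - \mm \vx_k$, apply the preconditioner to obtain $\vz_k = \mn \vr_k$, and then update $\vx_{k+1} = \vx_k + \alpha_k \vz_k + \beta_k (\vx_k - \vx_{k-1})$ for the standard Chebyshev scalar coefficients $\alpha_k, \beta_k$. The crucial point is that these coefficients depend only on the a priori spectral bounds $a, b$ (which are absolute constants here), not on $\vb$ or on the iterates. Next I would invoke the textbook convergence guarantee: after $k$ steps, $\|\vx_k - \mm^{-1}\vb\|_\mm \leq 2\left(\tfrac{\sqrt{\kappa}-1}{\sqrt{\kappa}+1}\right)^k \|\mm^{-1}\vb\|_\mm$, so $k = O(\log(1/\epsilon))$ iterations suffice to meet the accuracy requirement in Definition of linear system solver.

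Each iteration does one multiplication by $\mm$ at cost $O(\nnz(\mm))$, one application of $\mn$ at cost $O(\time)$, and $O(d)$ work for the linear combinations, giving a total running time of $O((\nnz(\mm) + \time) \log(\epsilon^{-1}))$ as required. For linearity, I would observe by induction on $k$ that each $\vx_k$ is a linear function of $\vb$: the base case $\vx_0 = \vzero$ is trivially linear, and if $\vx_{k-1}, \vx_k$ are linear in $\vb$, then $\vr_k$ and $\vz_k$ are linear (since $\mm$ and $\mn$ are fixed operators), and the scalar coefficients $\alpha_k, \beta_k$ are independent of $\vb$. Hence $\vx_k = \mq_\epsilon \vb$ for some matrix $\mq_\epsilon$ depending only on $\mm, \mn$, and $\epsilon$, fulfilling the linearity clause in the definition of a linear solver.

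The only mild subtlety I anticipate is pinning down the right Chebyshev formulation: we must avoid preconditioned conjugate gradient, whose coefficients involve inner products with $\vb$ and therefore produce a nonlinear map $\vb \mapsto \vx_k$. Using Chebyshev with the fixed constant bounds $a, b$ derived from $\mn \approx_{O(1)} \mm^{-1}$ sidesteps this issue cleanly, so the proof is essentially a direct assembly of the above three ingredients (spectral bound, geometric convergence, and linearity by induction).
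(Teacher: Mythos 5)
Your proposal is correct and follows the same basic strategy as the paper: run a preconditioned polynomial iteration with $\mn$ as the preconditioner, note that $\mn^{1/2}\mm\mn^{1/2}$ has constant condition number, get $O(\log(\epsilon^{-1}))$ iterations each costing $O(\nnz(\mm)+\time)$, and obtain linearity because the iteration coefficients are independent of $\vb$. The only difference is the choice of iteration: the paper uses plain preconditioned Richardson (gradient descent), writing $\mq_{\epsilon}=\frac{1}{L}\mn\sum_{k=0}^{z_{\epsilon}}\left(\iMatrix-\frac{\mm\mn}{L}\right)^{k}$ explicitly as a truncated Neumann series, whereas you use Chebyshev. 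Since the preconditioned condition number is already $O(1)$, Chebyshev's quadratic improvement in $\sqrt{\kappa}$ buys nothing asymptotically here, and the Neumann-series form has the minor advantage that symmetry of $\mq_{\epsilon}$ and the error bound $\mm^{-1}-\mq_{\epsilon}=\frac{1}{L}\mn^{1/2}\sum_{k>z_{\epsilon}}(\iMatrix-\frac{1}{L}\mn^{1/2}\mm\mn^{1/2})^{k}\mn^{1/2}$ fall out by direct computation rather than by citing a convergence theorem. Your observation that preconditioned conjugate gradient must be avoided because its coefficients depend on $\vb$ (breaking linearity) is exactly the right subtlety, and the paper's construction sidesteps it in the same way.
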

\begin{proof}
Since $\mn\approx_{O(1)}\mm^{-1}$ there is a constant $L$ such that
$\frac{1}{L}\mn^{-1}\preceq\mm\preceq L\mn$. Consider the algorithm
$\solver(\vb,\epsilon)\defeq\mq_{\epsilon}\vb$ where 
\[
\mq_{\epsilon}\defeq\frac{1}{L}\mn\sum_{k=0}^{z_{\varepsilon}}\left(\iMatrix-\frac{\mm\mn}{L}\right)^{k}
\]
where $z_{\epsilon}=\frac{1}{2}\log(\epsilon L^{-4})/\log(1-L^{-2})$.
This is the linear operator corresponding to performing $z_{\epsilon}$
steps of preconditioned gradient descent to solve the linear system
in $\mm$. Clearly, applying $\mq_{\epsilon}$ can be done in time
$O((\nnz(\ma)+\time)z_{e})=O((\nnz(\ma)+\time)\log(\epsilon^{-1}))$.
Therefore, all that remains is to show that $\solver(\vb,\epsilon)$
is a solver for $\mm$.

Note that we can rewrite $\mq_{\epsilon}$ equivalently as 
\[
\mq_{\epsilon}=\frac{1}{L}\mn^{1/2}\sum_{k=0}^{z_{\varepsilon}}\left(\iMatrix-\frac{1}{L}\mn^{1/2}\mm\mn^{1/2}\right)^{k}\mn^{1/2}
\]
and hence it is symmetric. Furthermore, since clearly $\frac{1}{L^{2}}\iMatrix\specLeq\frac{1}{L}\mn^{1/2}\mm\mn^{1/2}\specLeq\iMatrix$
we have that
\begin{align*}
\mm^{-1}-\mq_{\epsilon} & =\frac{1}{L}\mn^{1/2}\left(\iMatrix-\left(\iMatrix-\frac{1}{L}\mn^{1/2}\mm\mn^{1/2}\right)\right)^{-1}\mn^{1/2}-\frac{1}{L}\mn^{1/2}\sum_{k=0}^{z_{\varepsilon}}\left(\iMatrix-\frac{1}{L}\mn^{1/2}\mm\mn^{1/2}\right)^{k}\mn^{1/2}\\
 & =\frac{1}{L}\mn^{1/2}\sum_{k=z_{\varepsilon}+1}^{\infty}\left(\iMatrix-\frac{1}{L}\mn^{1/2}\mm\mn^{1/2}\right)^{k}\mn^{1/2}\enspace.
\end{align*}
Using the above two inequalities, we have that 
\begin{eqnarray*}
\norm{\solver(\vb,\epsilon)-\mm^{-1}\vb}_{\mm}^{2} & = & \normFull{\frac{1}{L}\mn^{1/2}\sum_{k=z_{\varepsilon}+1}^{\infty}\left(\iMatrix-\frac{1}{L}\mn^{1/2}\mm\mn^{1/2}\right)^{k}\mn^{1/2}\vb}_{\mm}^{2}\\
 & \leq & \frac{1}{L}\normFull{\sum_{k=z_{\varepsilon}+1}^{\infty}\left(\iMatrix-\frac{1}{L}\mn^{1/2}\mm\mn^{1/2}\right)^{k}\mn^{1/2}\vb}_{2}^{2}
\end{eqnarray*}
and since using that $\mZero\specLeq\iMatrix-\frac{1}{L}\mn^{1/2}\mm\mn^{1/2}\specLeq\left(1-\frac{1}{L^{2}}\right)\iMatrix$,
we have
\begin{eqnarray*}
\mvar 0\preceq\sum_{k=z_{\varepsilon}+1}^{\infty}\left(\iMatrix-\frac{1}{L}\mn^{1/2}\mm\mn^{1/2}\right)^{k} & \preceq & L^{2}\left(1-\frac{1}{L^{2}}\right)^{z_{\varepsilon}}\iMatrix\enspace.
\end{eqnarray*}
Consequently
\[
\norm{\solver(\vb,\varepsilon)-\mm^{-1}\vb}_{\mm}^{2}\leq L^{3}\left(1-\frac{1}{L^{2}}\right)^{2z_{\varepsilon}}\normFull{\mn^{1/2}\vb}_{2}^{2}\leq L^{4}\left(1-\frac{1}{L^{2}}\right)^{2z_{\varepsilon}}\normFull{\mm^{-1}\vb}_{\mm}^{2}\,.
\]
Thus, we see that $z_{\epsilon}$ was chosen precisely to complete
the proof.\end{proof}
\begin{lem}
\label{lem:solver_implies_approx} Let $\solver$ be a linear solver
of $\mm$ such that $\solver(\vb,\epsilon)=\mq_{\epsilon}\vb$ for
$\epsilon\in[0,0.1]$. Then, we have
\[
\mq_{\epsilon}^{T}\mm\mq_{\epsilon}\approx_{4\sqrt{\epsilon}}\mm^{-1}\text{ and }\mq_{\epsilon}\mm\mq_{\epsilon}^{T}\approx_{4\sqrt{\epsilon}}\mm^{-1}
\]
\end{lem}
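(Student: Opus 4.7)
The plan is to unpack the solver guarantee into an operator inequality, then reduce the claim to a statement about singular values of a single matrix $\mB$ that governs both $\mq_\epsilon^T \mm \mq_\epsilon$ and $\mq_\epsilon \mm \mq_\epsilon^T$ simultaneously.

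First, I would use the definition of a $\time$-time solver. Since $\solver$ is linear, we have $\solver(\vb,\epsilon) = \mq_\epsilon \vb$ with $\|\mq_\epsilon \vb - \mm^{-1}\vb\|_\mm^2 \leq \epsilon\|\mm^{-1}\vb\|_\mm^2$ for \emph{every} $\vb$. Expanding the norms yields the operator inequality
\[
(\mq_\epsilon - \mm^{-1})^T \mm (\mq_\epsilon - \mm^{-1}) \specLeq \epsilon\, \mm^{-1}.
\]
Conjugating on both sides by $\mm^{1/2}$, and writing $\mB \defeq \mm^{1/2} \mq_\epsilon \mm^{1/2}$, the left-hand side becomes $(\mB - \iMatrix)^T(\mB - \iMatrix)$, so the assumption says exactly $\|\mB - \iMatrix\|_2 \leq \sqrt{\epsilon}$.

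Next, I would observe that conjugation by $\mm^{1/2}$ also reduces the two conclusions to a single statement about $\mB$. Indeed,
\[
\mm^{1/2}\mq_\epsilon^T \mm \mq_\epsilon \mm^{1/2} = \mB^T \mB \quad\text{and}\quad \mm^{1/2}\mq_\epsilon \mm \mq_\epsilon^T \mm^{1/2} = \mB \mB^T,
\]
so proving $\mq_\epsilon^T \mm \mq_\epsilon \approx_{4\sqrt{\epsilon}} \mm^{-1}$ and $\mq_\epsilon \mm \mq_\epsilon^T \approx_{4\sqrt{\epsilon}} \mm^{-1}$ is equivalent to proving $\mB^T \mB \approx_{4\sqrt{\epsilon}} \iMatrix$ and $\mB \mB^T \approx_{4\sqrt{\epsilon}} \iMatrix$. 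Both $\mB^T\mB$ and $\mB\mB^T$ share the same spectrum, namely the squared singular values of $\mB$, so it suffices to bound $\sigma_i(\mB)$.

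Now by Weyl's inequality for singular values, $|\sigma_i(\mB) - 1| = |\sigma_i(\mB) - \sigma_i(\iMatrix)| \leq \|\mB - \iMatrix\|_2 \leq \sqrt{\epsilon}$, so every eigenvalue of $\mB^T\mB$ lies in $[(1-\sqrt{\epsilon})^2, (1+\sqrt{\epsilon})^2]$. The last step is the elementary (but essential) estimate
\[
e^{-4\sqrt{\epsilon}} \leq (1-\sqrt{\epsilon})^2 \leq (1+\sqrt{\epsilon})^2 \leq e^{4\sqrt{\epsilon}},
\]
which, after taking logs, reduces to checking $|\ln(1\pm\sqrt{\epsilon})| \leq 2\sqrt{\epsilon}$ on $\sqrt{\epsilon} \in [0,\sqrt{0.1}]$; this follows from the Taylor expansion of $\ln(1\pm x)$ and is where the hypothesis $\epsilon \leq 0.1$ is used.

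The only mildly subtle point is the asymmetry of $\mq_\epsilon$: the solver guarantee is a one-sided statement about $\mq_\epsilon$ acting on vectors, and both desired conclusions are quadratic in $\mq_\epsilon$ but on opposite sides. The conjugation trick bundles both conclusions into the symmetric statement $\sigma_i(\mB) \approx 1$, so the apparent obstacle dissolves once the right change of variables is made; the $4\sqrt{\epsilon}$ slack (rather than $2\sqrt{\epsilon}$) is precisely what allows squaring $(1\pm\sqrt{\epsilon})$ to fit inside $e^{\pm 4\sqrt{\epsilon}}$ under $\epsilon \leq 0.1$.
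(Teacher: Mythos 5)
Your proof is correct and follows essentially the same route as the paper: both arguments reduce the two conclusions to the spectrum of $\mB=\mm^{1/2}\mq_{\epsilon}\mm^{1/2}$ and use the fact that $\mB^{T}\mB$ and $\mB\mB^{T}$ share the same singular-value spectrum. The only (cosmetic) difference is that you bound the singular values of $\mB$ via the operator inequality $\norm{\mB-\iMatrix}_{2}\leq\sqrt{\epsilon}$ and Weyl's perturbation bound, whereas the paper sandwiches $\vb^{T}\mq_{\epsilon}^{T}\mm\mq_{\epsilon}\vb$ directly using weighted triangle inequalities; your version is, if anything, slightly tighter and cleaner.
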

\begin{proof}
By the definition of $\solver$ and standard inequalities, we have
that for all $\vb$,
\begin{eqnarray*}
\norm{\mathcal{\solver}(\vb,\epsilon)}_{\mm}^{2} & \leq & \left(1+\frac{1}{\sqrt{\epsilon}}\right)\norm{\mathcal{S}(\vb,\varepsilon)-\mm^{-1}\vb}_{\mm}^{2}+\left(1+\sqrt{\epsilon}\right)\norm{\mm^{-1}\vb}_{\mm}^{2}\\
 & \leq & \left(\epsilon+\sqrt{\epsilon}+1+\sqrt{\epsilon}\right)\norm{\mm^{-1}\vb}_{\mm}^{2}\leq(1+3\sqrt{\epsilon})\norm{\mm^{-1}\vb}_{\mm}^{2}
\end{eqnarray*}
On the another hand, we have that for all $\vb$
\begin{eqnarray*}
\norm{\mm^{-1}\vb}_{\mm}^{2} & \leq & \left(1+\frac{1}{\sqrt{\epsilon}}\right)\norm{\mathcal{S}(\vb,\epsilon)-\mm^{-1}\vb}_{\mm}^{2}+\left(1+\sqrt{\varepsilon}\right)\norm{\mathcal{S}(\vb,\epsilon)}_{\mm}^{2}\\
 & \leq & 2\sqrt{\epsilon}\norm{\mm^{-1}\vb}_{\mm}^{2}+\left(1+\sqrt{\epsilon}\right)\norm{\mathcal{S}(\vb,\epsilon)}_{\mm}^{2}.
\end{eqnarray*}
Combining these two inequalities and using the definition of $\solver$,
we see that for all $\vb$ we have
\[
\left(\frac{1-2\sqrt{\epsilon}}{1+\sqrt{\epsilon}}\right)\vb^{T}\mm^{-1}\vb\leq\vb^{T}\mq_{\epsilon}^{T}\mm\mq_{\epsilon}\vb\leq\left(1+3\sqrt{\epsilon}\right)\vb^{T}\mm^{-1}\vb
\]
Using a Taylor expansion of $e$ and the definition of $\approx$
then yields $\mq_{\epsilon}^{T}\mm\mq_{\epsilon}\approx_{4\sqrt{\epsilon}}\mm^{-1}$.
In other words, all eigenvalues of $\left(\mm^{1/2}\mq_{\varepsilon}\mm^{1/2}\right)^{T}\left(\mm^{1/2}\mq_{\varepsilon}\mm^{1/2}\right)$
lies between $e^{-4\sqrt{\varepsilon}}$ and $e^{4\sqrt{\varepsilon}}$.
In general, for any square matrix $\mb$, the set of eigenvalues of
$\mb^{T}\mb$ equals to the set of eigenvalues of $\mb\mb^{T}$ because
of the SVD decomposition. Hence, all eigenvalues of $\left(\mm^{1/2}\mq_{\varepsilon}\mm^{1/2}\right)\left(\mm^{1/2}\mq_{\varepsilon}\mm^{1/2}\right)^{T}$
also lies in the same region. This proves $\mq_{\epsilon}\mm\mq_{\epsilon}^{T}\approx_{4\sqrt{\epsilon}}\mm^{-1}$.
\end{proof}

\section{Remarks for Figure \ref{fig:run_time}}

\label{sec:figure_remarks}

Figure~\ref{fig:run_time} shows the previous fastest algorithm for
linear programming $\min_{\ma\vx\geq\vb}\vc^{T}\vx$ where $\ma\in\R^{n\times d}$
. The running time described in the figure comes from the following
algorithms:
\begin{enumerate}
\item $\tilde{O}\left(\sqrt{n}z+\sqrt{n}d^{2}+n^{1.34}d^{1.15}\right)$
is achieved by the interior point method of Vaidya \cite{vaidya1989speeding}. 
\item $\tilde{O}\left(\sqrt{n}z+nd^{1.38}+n^{0.69}d^{2}\right)$ is achieved
by using the Karmarkar acceleration scheme \cite{karmarkar1984new}
on the short step path following method. See \cite{nesterov1991acceleration}
for the details.
\item $\tilde{O}(\sqrt{n}(z+d^{2.38}))$ is achieved by using the currently
best linear system solver \cite{nelson2012osnap,li2012iterative,gall2014powers,Cohen2014}
on the short step path following method \cite{renegar1988polynomial}.
\item $\tilde{O}(d(z+d^{2.38}))$ is achieved by the cutting plane method
of Vaidya \cite{vaidya1996new}.\end{enumerate}

\end{document}